\newtheorem{thm}{Theorem}[section]
\newtheorem{prop}[thm]{Proposition}
\newtheorem{defn}[thm]{Definition}
\newtheorem{lemma}[thm]{Lemma}
\newtheorem{remark}[thm]{Remark}
\newcommand{\bmb}{\left( \begin{array}{rr}}
\newcommand{\enm}{\end{array}\right)}
\newcommand{\Z}{{\mathbb Z}}
\newcommand{\al}{{\alpha}}
\newcommand{\sgn}{{\rm sgn}}
\newcommand{\qq}{{\mathfrak{q}}}
\newcommand{\xx}{{\mathbf{x}}}
\numberwithin{equation}{section}
\begin{document}

\title[The Arctic curve for Aztec rectangles with defects]{The Arctic curve for Aztec rectangles with defects \\via the Tangent Method}
\author{Philippe Di Francesco} 
\address{
Department of Mathematics, University of Illinois, Urbana, IL 61821, U.S.A. 
and 
Institut de physique th\'eorique, Universit\'e Paris Saclay, 
CEA, CNRS, F-91191 Gif-sur-Yvette, FRANCE\hfill
\break  e-mail: philippe@illinois.edu
}
\author{Emmanuel Guitter}
\address{
Institut de physique th\'eorique, Universit\'e Paris Saclay, 
CEA, CNRS, F-91191 Gif-sur-Yvette, FRANCE.% \hfill
\break  e-mail: emmanuel.guitter@ipht.fr
}

\begin{abstract}
The Tangent Method of Colomo and Sportiello is applied to the study of the asymptotics of domino tilings of
large Aztec rectangles, with some fixed distribution of defects along a boundary. The associated Non-Intersecting Lattice Path configurations are made of Schr\"oder paths whose weights involve two parameters $\gamma$ and $q$
keeping track respectively of one particular type of step and of the area below the paths. We derive the 
arctic curve for an arbitrary distribution of defects, and illustrate our result with a number of examples
involving different classes of boundary defects.
\end{abstract}

\maketitle
\date{\today}
\tableofcontents
%\vfill\eject

\section{Introduction}
\label{sec:introduction}
Two-dimensional tiling problems of large scaled domains of the plane are known to exhibit an arctic phenomenon,
namely the existence of a sharply defined separation between frozen phases with regular lattice-like tiling configurations
and liquid phases with disorder.  There is an abundant literature on derivation of such arctic curves, starting with
the celebrated arctic circle of \cite{CLP,JPS} in the case of the domino tiling of a large Aztec diamond (see also \cite{KOS,KO1,KO2} for systematic studies of dimer models using the technology of the Kasteleyn matrix).

Recently a novel approach to determining the arctic curve was devised by Colomo and Sportiello \cite{COSPO}.
It is based on the reformulation of the tiling problems in terms of Non-Intersecting Lattice Paths (NILP). Indeed,  tiling
configurations can be bijectively described as configurations of paths with fixed ends, taking their steps on a regular 
lattice, and such that {\it no two paths share a common vertex}. The method, called the Tangent Method, consists
in slightly modifying the path configuration to use one path as a probe of the domain occupied by the others. By solving
a simple extremization problem, the method constructs a family of curves (the ``escape trajectories" of the probe)
which is tangent to the arctic curve. The latter is then recovered as the envelope of the former.
Despite the fact that it is non rigorous, this method has already provided new insights, first by reproducing known
results (see \cite{COSPO,DFLAP,DFGUI,DFG2} for many examples), and moreover by allowing for new conjectures
such as the applicability to interacting NILP like those involved in the 6 Vertex model, for which an arctic curve was
proposed \cite{COSPO,DFLAP,CPS}.

Our aim in this paper is to extend the former results of \cite{DFLAP,DFGUI,DFG2} to the case of
Aztec rectangles with \emph{an arbitrary distribution of defects} along one boundary. This involves NILP configurations
in which paths can take three types of steps. We consider here a weighting of the configurations with two parameters $q$ and $\gamma$, 
that, in the NILP language, keep track respectively of the area below each path and of the number of steps of one particular type.
In the tiling language, the weight $\gamma$ singularizes one type of tile, while the weight $q$ measures the $3D-$\emph{volume} below a landscape
of which the paths are the contour lines, thus generalizing a well-known interpretation of simpler NILP associated with rhombus tilings (with only two types of path steps) 
in terms of plane partitions.

An important effect of the weight $q$ is to bend the most likely trajectory of a free path with fixed ends, and we shall study the precise structure of these bent \emph{geodesics} in our setting. The escape trajectory of the probe used in the Tangent Method will be made of such a bent geodesic, thus
leading to a family of geodesic curves tangent to the arctic curve. Note that these geodesics become straight lines 
only in the limit $q\to 1$.

The tiled domain that we consider is an Aztec rectangle of size $n\times m$, with a number of
boundary defects characterized, in the NILP language, by a fixed set of path starting points, with positions
$0=a_0<a_1<\cdots <a_n=m$ along the lower boundary the Aztec rectangle. In the scaling limit, we assume that 
$a_i\sim n \al(i/n)$ for some continuous, piecewise differentiable function 
$\al(\sigma)$, $\sigma \in [0,1]$. Given such a distribution, the Tangent Method produces parametric equations for the arctic
curve in the limit of large $n$, which are functions of $\al(\sigma)$, of the rescaled variable $\qq=q^{1/n}$ and of $\gamma$ (see Theorem \ref{thmac}). 

The paper is organized as follows: in Section~\ref{sec:aztec}, we define the domino tiling problem of Aztec rectangles with defects, and its weighted NILP version. The paths involved are Schr\"oder paths, which are further studied in Section~\ref{sec:wls}, where we first derive a summation formula for the partition function of a single weighted path with fixed ends, and then use its asymptotics to derive geodesics in large size (see Theorem \ref{thm:geod}).
Using the Lindstr\"om-Gessel-Viennot formula, we compute in Section \ref{sec:arcticone} the partition function 
of the weighted NILP
problem, and in preparation for the Tangent Method we also compute the ``one-point function" corresponding to
one escaping path (the probe). Both computations use the explicit LU decomposition of the Gessel-Viennot matrix
and result in Theorems \ref{partitionfunction} (partition function) and \ref{onepointfunctionthm} (one-point function). We also derive the large size 
scaling asymptotics of these formulas and show in particular how to relate the position of the most likely ``exit point" where the probe leaves the Aztec rectangle
to the position of its assigned target (Theorem~\ref{thm:likely}).
In Section \ref{sec:arctictwo}, we complete the application of the Tangent Method and derive the family of tangent
geodesics (Theorem \ref{thm:geofamily}) and finally the corresponding arctic curve (Theorem \ref{thmac}).
We also discuss the effect of various types of boundaries according to properties of the distribution $\al(\sigma)$.
Section \ref{sec:examples} is devoted to a number of examples, each illustrating a particular type of boundary,
which we classify between generic, freezing and fully frozen.
We gather a few concluding remarks in Section~\ref{sec:conclusion}.

\medskip

\noindent{\bf Acknowledgments.}  

\noindent  %We thank 

PDF is partially supported by the Morris and Gertrude Fine endowment and the NSF grant DMS18-02044. EG acknowledges the support of the grant ANR-14-CE25-0014 (ANR GRAAL).

\section{Weighted Domino tilings of an Aztec rectangle with boundary defects}
\label{sec:aztec}

\subsection{Definition of the model}
\label{sec:defaztec}

\begin{figure}
\begin{center}
\includegraphics[width=10cm]{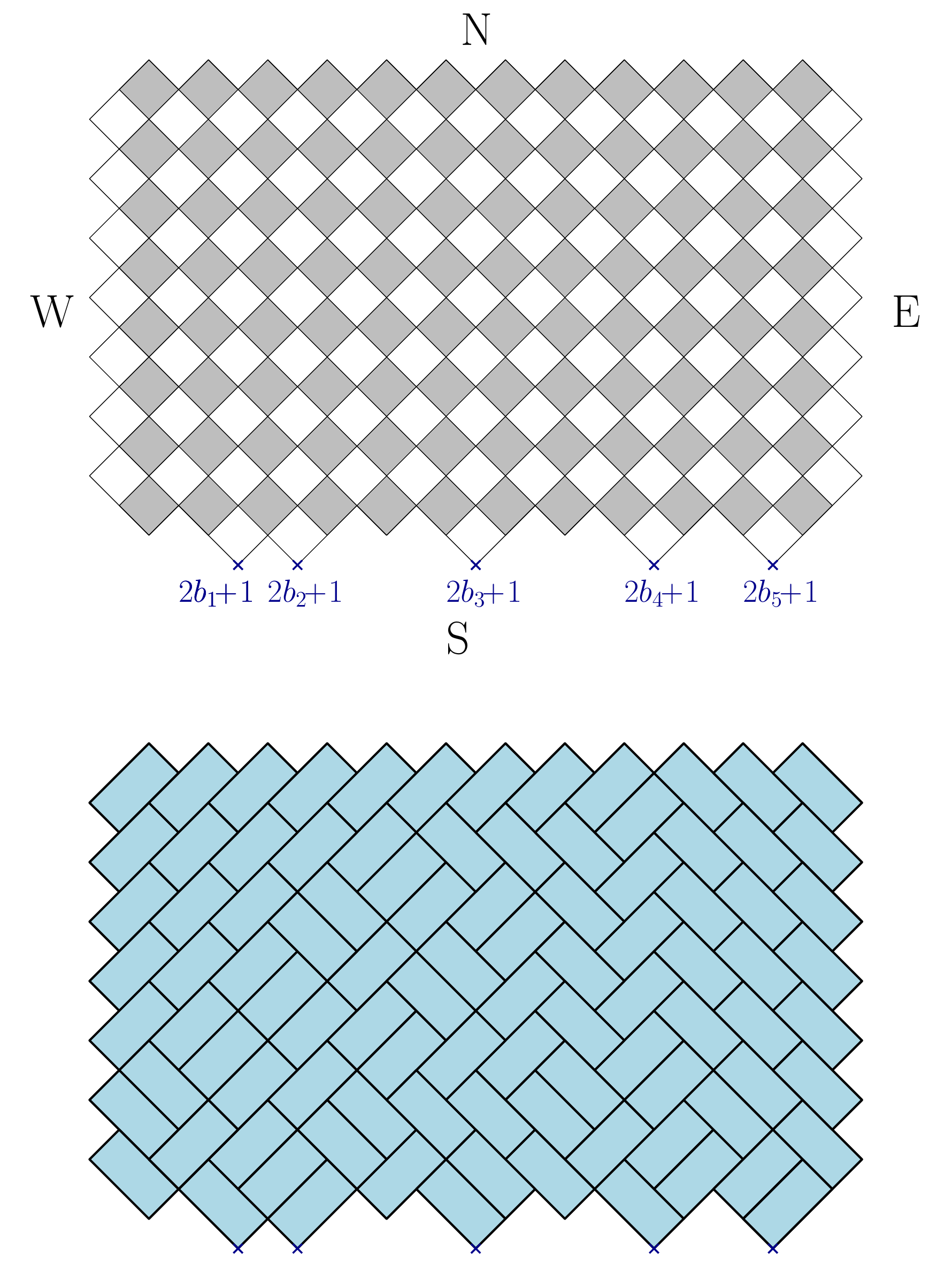}
\end{center}
\caption{\small Top: A typical Aztec rectangle of size $(n+1)\times (m+1)$ with $n=7$, $m=12$, and $m-n=5$ defects along
the $S$ boundary with positions $(2b_i+1)$, $i=1,\ldots, 5$ with $\{b_i\}_{1\leq i\leq 5}=\{ 2,3,6,9,11\}$. We have represented the natural black/white checkerboard structure of the underlying square lattice.  Bottom: A sample domino tiling of this Aztec rectangle with defects by means of $2\times 1$ dominos covering two adjacent squares.
}
\label{fig:Aztec}
\end{figure}
Our starting point is a particular domain drawn on the tilted (by $45^\circ$) square lattice with vertex set $\mathcal{L}=\big\{ (s,t)\in \Z^2\ \vert\  s+t=0 \mod 2\big\}$, and which
we may describe as an \emph{Aztec rectangle} of size $(n+1)\times (m+1)$ (with $m\geq n$) \emph{with defects along its lower boundary}. More precisely, 
we consider the domain whose zig-zag shaped boundary (passing through points of $\mathcal{L}$) is made of the following four parts, denoted $S$, $N$, $W$, $E$ respectively, where
the $S$ boundary displays a sequence of defects at positions $2b_j+1$, $j=1,2,...,m-n$ where $\{b_j\}_{1\leq j\leq m-n}$ is a strictly increasing sequence of integers in $[1,m-1]$:
\begin{itemize}
\item{S:} $\Big[\!\big\{(s,t) \in\mathcal{L}\ \vert$ $s\in [0,2m+2],\ t\in \{ 0,1\}\}\big\}\setminus\bigg(\big\{(0,0),(2m\!+\!2,0)\big\}\cup\big\{(2b_j\!+\!1,1),1\leq j\leq m\!-\!n\big\}\bigg)\!\Big]$

$\cup\big\{(2b_j\!+\!1,-1),1\leq j\leq m\!-\!n\big\}$;
\item{N:} $\big\{(s,t) \in\mathcal{L}\ \vert$  $s\in [0,2m+2],\ t\in \{ 2n+1,2n+2\}\big\}\setminus\big\{(0,2n+2),(2m+2,2n+2)\big\}$;
\item{W:}  $\big\{(s,t) \in\mathcal{L}\ \vert$ $s\in \{ 0,1\}, \ t\in [0,2n+2]\big\}\setminus\big\{(0,0),(0,2n+2)\big\}$;
\item{E:} $\big\{(s,t) \in\mathcal{L}\ \vert$ $s\in \{ 2m+1,2m+2\},  \ t\in [0,2n+2]\big\}\setminus\big\{(2m+2,0),(2m+2,2n+2)\big\}$.
\end{itemize}
Otherwise stated, the introduction of defects consists in \emph{adding} to the lower boundary of the regular $n\times m$ Aztec rectangle a number $n-m$ of \emph{elementary squares} at positions
$(2b_i+1)$, as illustrated in the top of Fig.~\ref{fig:Aztec} in the particular case $n=7$, $m=12$ and
$b_1=2$, $b_2=3$, $b_3=6$, $b_4=9$, $b_5=11$. 

Given the domain above, we now consider \emph{tiling configurations} of this domain by means of $2\times 1$ rectangular \emph{dominos} (tilted by $45^\circ$) covering two adjacent squares.
Such a domino tiling configuration is depicted in the bottom of Fig.~\ref{fig:Aztec}. 

\subsection{Lattice path formulation}
\label{sec:defaztec2}

\begin{figure}
\begin{center}
\includegraphics[width=8cm]{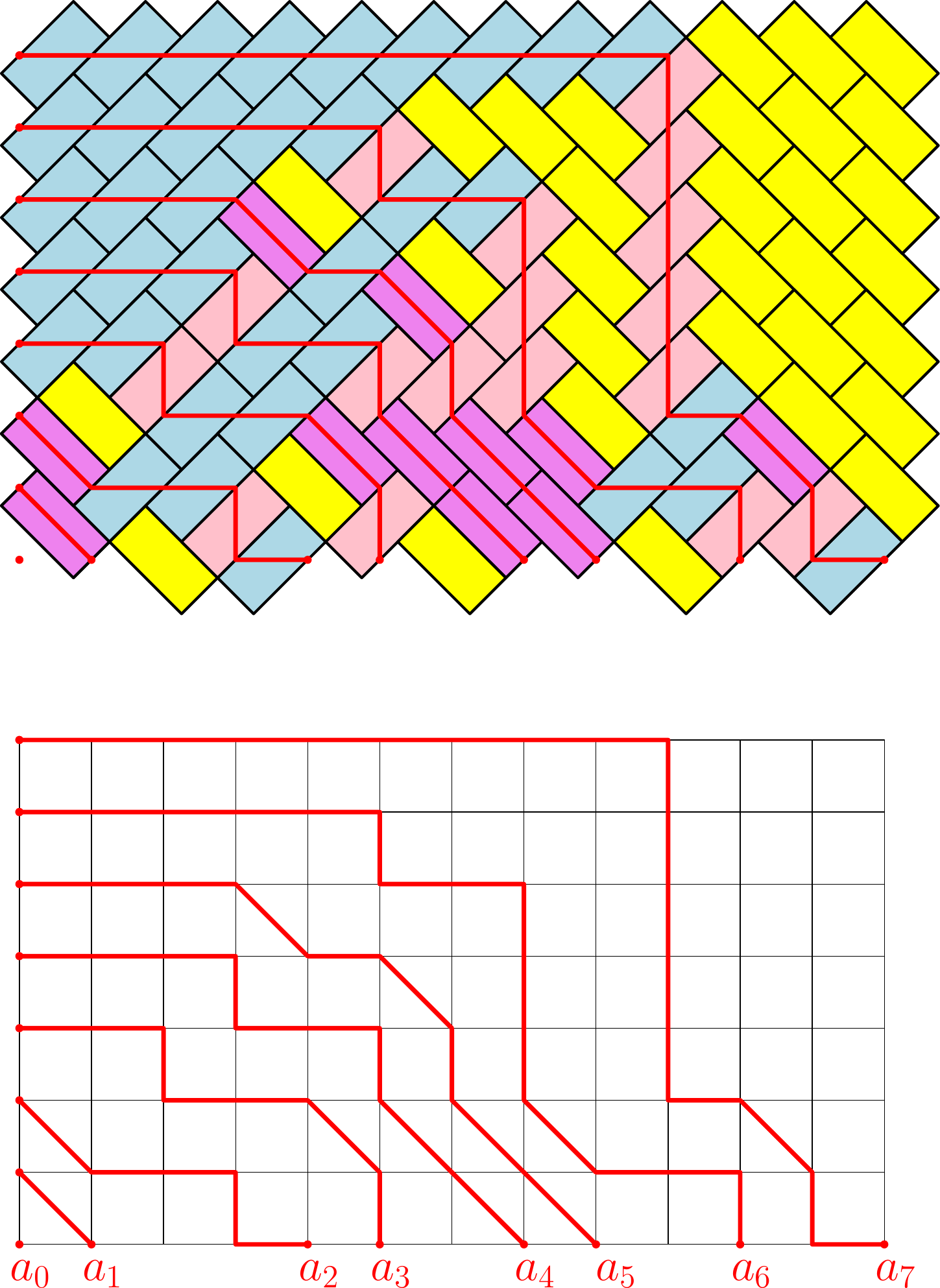}
\end{center}
\caption{\small Top: The four types of domino tiles of the Aztec rectangle tiling in the bottom of Fig.~\ref{fig:Aztec} are colored purple, yellow, pink, blue according to the order of the dictionary of Eq.~\eqref{dominos} and give rise to the system of red NILP (non-intersecting family of Schr\"oder paths).  
Bottom: We emphasize the starting points of the $n+1=8$ red NILP on a new underlying square lattice $\Z\times\Z$. These take the positions complementing the $b_i$'s on the real axis, namely $\{a_i\}_{0\leq i\leq n}=\{ 0,1,4,5,7,8,10,12\}$. 
}
\label{fig:NILP}
\end{figure}

Using the natural bi-coloration of the lattice $\mathcal{L}$, each domino 
gets bi-colored with one white and one black square. This gives rise to four kinds of domino tiles, according to their orientation and coloring. A standard bijection
then maps the domino tiling configurations to configurations of families of Non-Intersecting Lattice Paths
(NILP): this is realized 
by the following correspondence between dominos and elementary path steps:
\begin{equation}\label{dominos}
\raisebox{-.6cm}{\hbox{\epsfxsize=12.cm \epsfbox{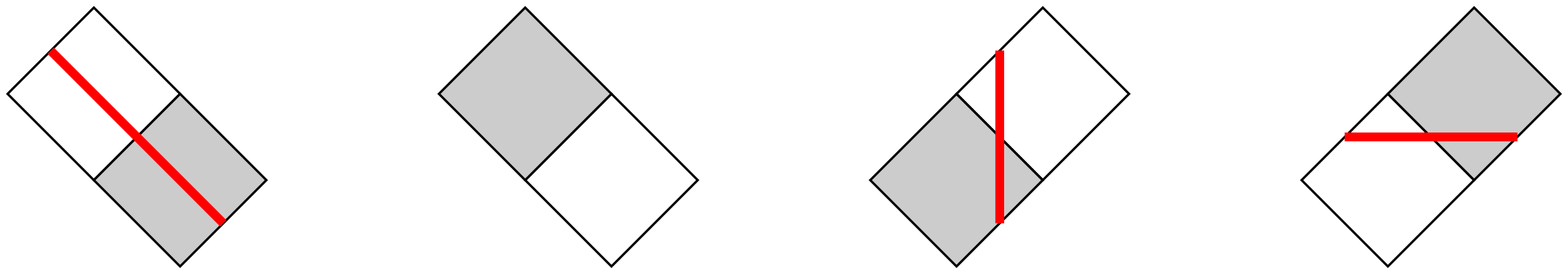}}}
\end{equation}
The paths created by this correspondence live on another (non-tilted) square lattice whose vertices are naturally labelled by integer coordinates $(i,j)$
and connect the S to the W boundary. When oriented from S to W, they use three kinds of steps: up steps $(0,1)$, left steps $(-1,0)$, and diagonal steps $(-1,1)$. 
Equivalently, the paths visit the edges of a directed lattice $\mathcal{N}$ made of west oriented horizontal lines, north-oriented vertical lines, and north-west oriented diagonal lines
through the vertices of the regular square lattice (the lattice $\mathcal{N}$ is topologically equivalent to an oriented triangular lattice).  
The particular boundary shape of the Aztec rectangle with defects on the S boundary implies that the paths of the corresponding family 
start at points $(a_i,0)$, $i=0,1,2,...,n$ and end at points $(0,j)$, $j=0,1,2,...,n$, where $a_i$ are the complements
of the positions $b_j$ on the segment $[0,m]$, namely $\{a_i\}_{0\leq i\leq n} \cup \{b_j\}_{1\leq j\leq m-n}=[0,m]$.
Note that, for technical reasons, we incorporated in the path family a trivial path from $(a_0,0)$ to $(0,0)$ starting and ending at the origin. Note also that $a_n=m$ by construction.
For illustration, the tiling of the bottom of Fig.~\ref{fig:Aztec} is mapped onto the red NILP configuration of Fig.~\ref{fig:NILP}, with $a_0=0$, $a_1=1$, $a_2=4$, $a_3=5$, $a_4=7$, $a_5=8$, $a_6=10$, $a_7=12$.

Two special restrictions of the paths considered here give rise to the so-called ``large" or ``small" ``Schr\"oder paths".
By lack of a better name, we shall refer to our paths as Schr\"oder paths.

\subsection{Weights}
\label{sec:weights}
\begin{figure}
\begin{center}
\includegraphics[width=8cm]{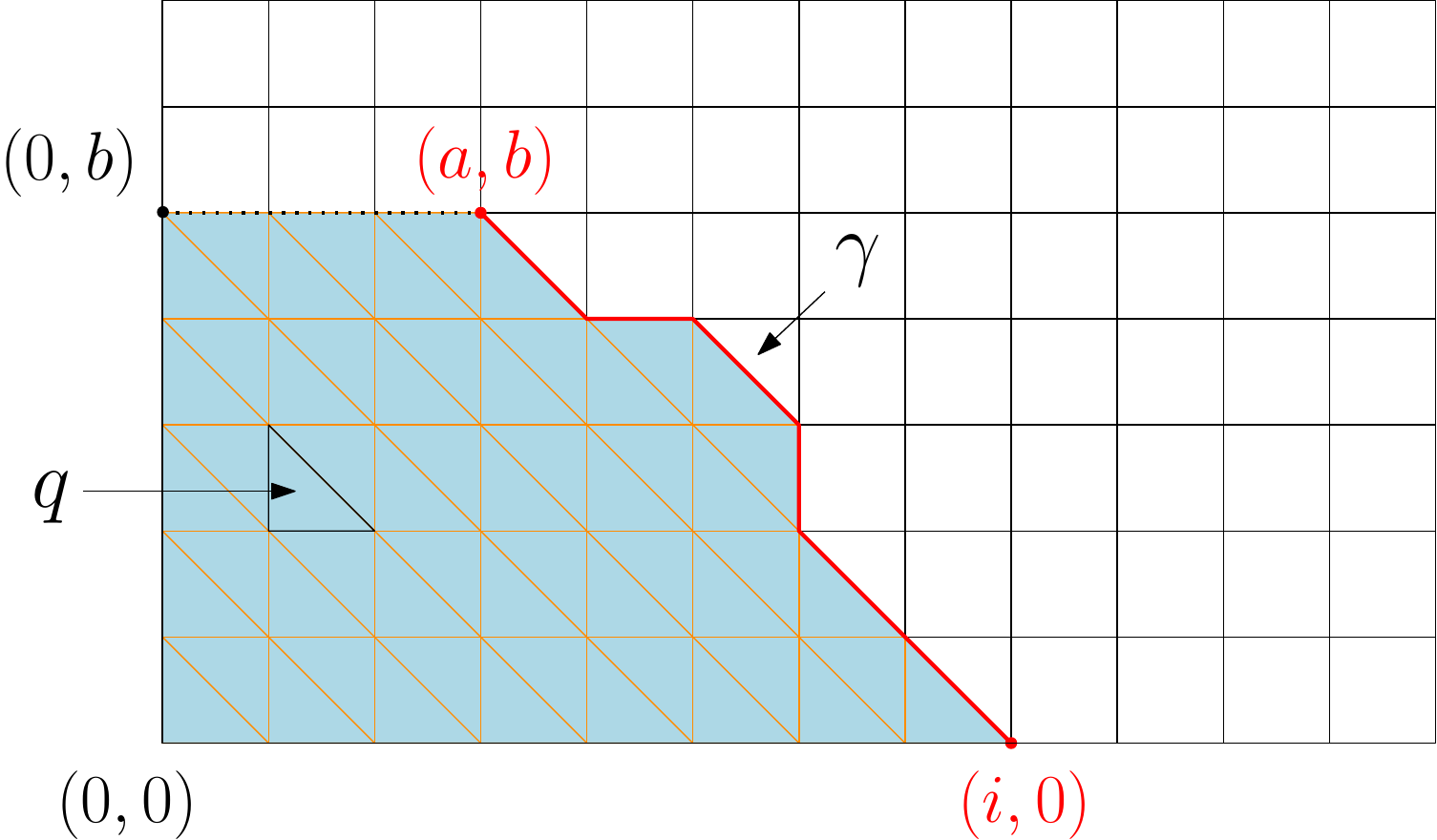}
\end{center}
\caption{{\small The weight of a Schr\"oder path $p$ (bold red) starting at $(i,0)$ and ending at $(a,b)$ is $\gamma^{k(p)} \, q^{\mathcal{A}(p)}$ where 
$k(p)$ denotes the number of diagonal steps of $p$ and $\mathcal{A}(p)$ counts the number of triangles to the left of $p$ (domain in light blue).
Otherwise stated, a weight $\gamma$ is attached to each diagonal step of $p$ and a weight $q$ to each triangle to the left of $p$.}
}
\label{fig:pathweight}
\end{figure}
In the following we will use the NILP formulation to enumerate the tiling configurations.  
We consider in fact the more general case of \emph{weighted} NILP with weights as follows: first we attach a multiplicative real weight $\gamma$ (with $\gamma \geq 0$) 
to each diagonal step, corresponding in turn to a domino of the first kind in Eq.\eqref{dominos} (see Fig.~\ref{fig:pathweight}). For any Schr\"oder path $p$, if $k(p)$
denotes the number of diagonal steps of $p$, the corresponding path weight  therefore reads $\gamma^{k(p)}$.

Next we introduce another weight keeping track
of the \emph{area "to the left" of the path}. More precisely, let $p=(p_0,p_1,...,p_r)$ be a Schr\"oder path starting at $p_0=(i,0)$ and ending at $p_r=(a,b)$ (with $p_{s+1}-p_s\in \{(0,1),(-1,0),(-1,1)\}$
by definition). Further denoting by $p_s=(x_s,y_s)$ the labelling of the successive points,
we define the area $\mathcal{A}(p)$ to the left of the path $p$ as:
$$ \mathcal{A}(p):= \sum_{s=0}^{r-1} (x_s+x_{s+1})(y_{s+1}-y_s)\ . $$
Splitting each square of the underlying square lattice into two triangles along the diagonal $(-1,1)$ (or equivalently visualizing the paths on the oriented triangular lattice $\mathcal{N}$), 
$\mathcal{A}(p)$ counts the number of \emph{triangles} to the left of the path $p$, namely in the domain of the first quadrant $\{(i,j), i,j\geq 0\}$
delimited by $p$ itself and by the horizontal segment joining $(a,b)$ to $(0,b)$ (see Fig.~\ref{fig:pathweight}). In many of the applications below, the endpoint will be at $a=0$. 
In this case, the area to the left of the path may also be interpreted as the area below the path. Given a path $p$, we decide to attach a real weight $q$ to each
triangle (with $q>0$) to the left of $p$, so $p$ receives eventually a total weight $\gamma^{k(p)} \, q^{\mathcal{A}(p)}$. The weight 
of a NILP configuration is then defined as the product of its path weights.

The particular case $\gamma=0$, corresponding to paths made of vertical (up) and horizontal (left) elementary steps only was already studied in detail in Ref.~\cite{DFG2}, (with the slight modification $q^2\to q$). In terms of tilings, setting $\gamma=0$ suppresses one of the four possible domino tiles. 
As explained in \cite{DFGUI}, the associated NILP configurations may then be put in correspondence with some particular rhombus tiling problem
with three types of elementary rhomboidal tiles.

\subsection{Partition function}

The partition function of the weighted tiling model for our Aztec rectangle with defects is defined in terms of NILP as
the sum $\mathcal{Z} := \mathcal{Z}(\gamma,q,\{a_i\}_{0\leq i\leq n})$ over all the corresponding NILP 
configurations of the product of the associated path weights. 
It may be computed by use of the celebrated Lindstr\"om Gessel-Viennot formula \cite{LGV1,GV} as follows:

\begin{defn}\label{parfundef}
Let $Z_{(i,0)\to(0,j)}:= Z_{(i,0)\to(0,j)}(\gamma,q)$ denote the partition function of {\emph single} weighted Schr\"oder paths starting at $(i,0)$
and ending at $(0,j)$, namely
\begin{equation}\label{defZij}
Z_{(i,0)\to (0,j)}:=\sum_{{\rm Schr\"oder\ paths}\, p\atop {\rm from}\ (i,0)\ {\rm to}\ (0,j)} \gamma^{k(p)}\ q^{\mathcal{A}(p)} \ .\end{equation}
\end{defn}

Let us form the $n+1\times n+1$ matrix $A$ with entries:
\begin{equation}
A_{i,j}:=Z_{(a_i,0)\to (0,j)} \qquad i,j=0,1,...,n\ .
\label{Adef}
\end{equation}
Then the partition function for families of $n+1$ NILP from the points $\{(a_i,0)\}_{0\leq i\leq n}$ to the points
$\{(0,j)\}_{0\leq j\leq n}$ reads:
\begin{equation}\label{defZ}
\mathcal{Z} =\det(A)= \det \left((Z_{(a_i,0)\to (0,j)})_{0\leq i,j \leq n}\right) \ .
\end{equation}
as a direct application of the Lindstr\"om Gessel-Viennot formula, which we may use since our NILP fulfill the following two required conditions: (i) the 
NILP weights may be described \emph{locally} by transforming 
the area and diagonal path weights into local weights attached to the visited edges of the oriented lattice $\mathcal{N}$, namely a weight 
$q^{2x}$ for a vertical edge $(x,y)\to (x,y+1)$,  $q^{2x-1}\gamma$ for a diagonal edge $(x,y)\to (x-1,y+1)$ and $1$ for a horizontal edge; (ii) the 
oriented lattice $\mathcal{N}$ and the choice of starting and endpoints satisfy the required ``crossing condition" that any pair of paths with starting and endpoints in opposite order (i.e. $(a_i,0)\to (0,j)$
and $(a_k,0)\to (0,\ell)$ with $i<k$ and $j>\ell$) must share a vertex.    

\section{Properties of a single weighted Schr\"oder path}
\label{sec:wls}

This section deals with properties of a \emph{single} Schr\"oder path and its statistics dictated by the weights $\gamma$ and $q$. 

\subsection{A summation formula}
\label{sec:wlsum} 
\begin{figure}
\begin{center}
\includegraphics[width=8cm]{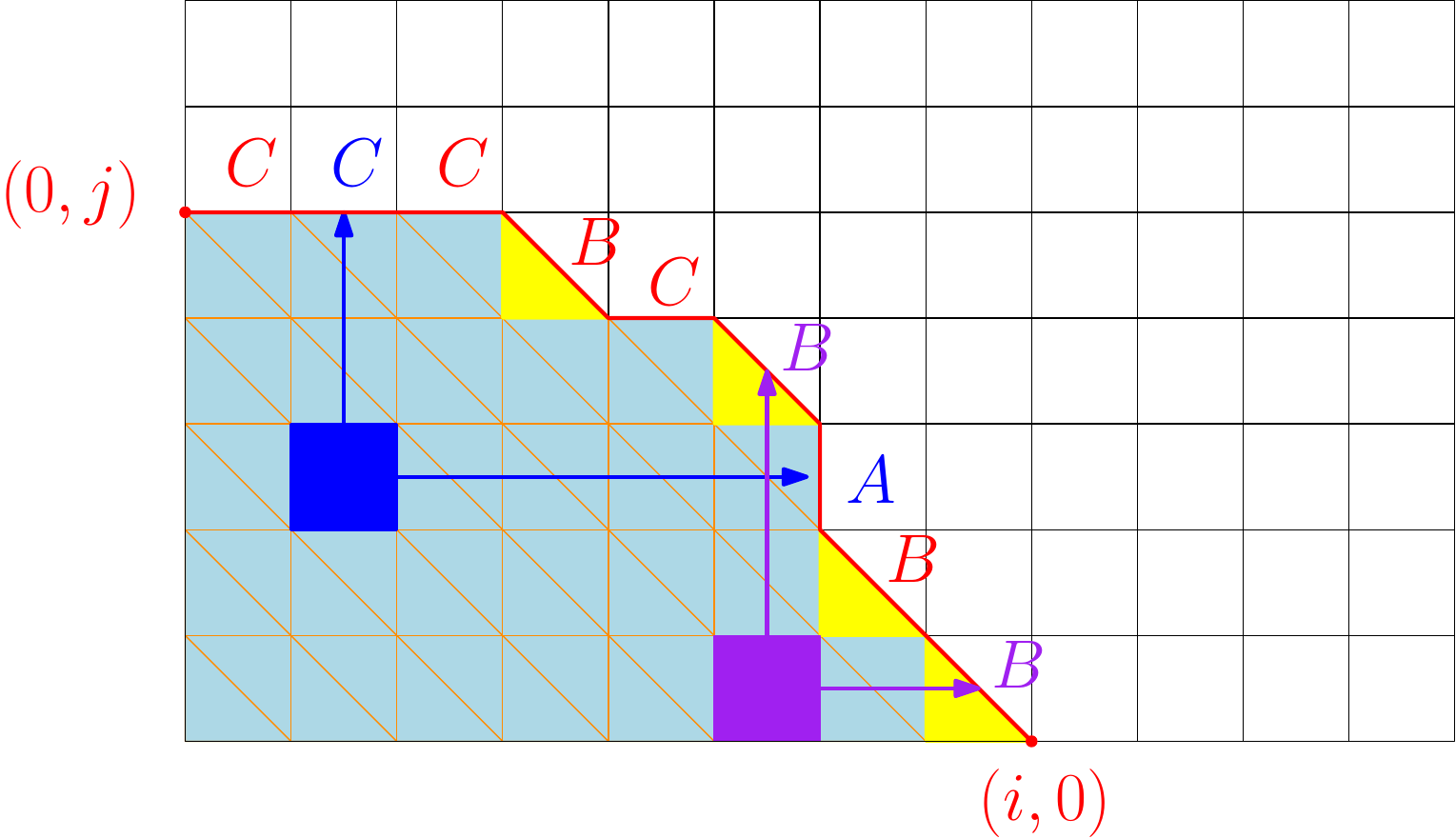}
\end{center}
\caption{{\small A Schr\"oder path connecting the point $(i,0)$ to the point $(0,j)$ and 
its associated word $CCCBCBABB$ obtained by associating the letter $A$ (respectively $B$ and $C$) to each vertical (respectively diagonal and horizontal) step.  Each unit square below the path is characterized by  
a pair $(C,A)$, $(C,B)$, $(B,A)$ or $(B,B)$ appearing in this order the word, hence to an inversion within the word, or to a $(B,B)$ pair. Here we emphasized for illustration 
a unit square associated with a $(C,A)$ pair (in blue)
and one associated with a $(B,B)$ pair (in magenta). The sample word has $21$ inversions and $6$ $(B,B)$ pairs, corresponding to a total of $27$ unit squares below the path.
The total area below the path also involves the triangles (in yellow) immediately below the diagonal steps (hence the letters $B$). 
}}
\label{fig:inversion}
\end{figure}
We start our analysis by deriving a simple formula for the partition function $Z_{(i,0)\to(0,j)}$ of \eqref{defZij}. 
As a first remark, let us mention the following 
recursion relation (over $i+j$) for $Z_{(i,0)\to(0,j)}$: decomposing along the first (left, diagonal or up) step from $(i,0)$, we
immediately deduce the recursive identity:
\begin{equation*}
Z_{(i,0)\to(0,j)}=Z_{(i-1,0)\to(0,j)}+\gamma\, q^{2i-1}\, Z_{(i-1,0)\to(0,j-1)}+q^{2i}\, Z_{(i,0)\to(0,j-1)}
\end{equation*}
where the last two terms implicitly involve a downwards vertical shift by $1$. Together with  $Z_{(i,0)\to(0,0)}=Z_{(0,0)\to(0,j)}=1$, this recursion 
determines  $Z_{(i,0)\to(0,j)}$ completely. Rather than solving this recursion, let us show directly by combinatorial arguments the following:

\begin{thm}\label{trinothm}
The partition function $Z_{(i,0)\to(0,j)}$ of \eqref{defZij} for weighted Schr\"oder paths admits the following simple summation formula:
\begin{equation}
 Z_{(i,0)\to(0,j)}=\sum_{k=0}^{{\rm Min}(i,j)} \gamma^k\, q^{k^2}\ {i+j-k \brack j-k,k,i-k}_{q^2}\ ,
\label{eq:Zij}
\end{equation}
where we introduced the $q^2$-trinomial\footnote{The appearance of $q^2$-trinomials rather than $q$-trinomials is due to our definition of the area $\mathcal{A}(p)$ enumerating triangles,
so that the ``unit" square $[0,1]\times[0,1]$ has area $2$.}:
\begin{equation}
{a+b+c\brack a,b,c}_{q^2}:= 
\frac{\prod\limits_{s=1}^{a+b+c}(q^{2s}-1)}{\prod\limits_{s=1}^{a}(q^{2s}-1)\prod\limits_{s=1}^{b}(q^{2s}-1)\prod\limits_{s=1}^{c}(q^{2s}-1)} \quad \hbox{\rm for}\ a,b,c \geq 0\ .
\label{eq:qtrinomial}
\end{equation}
\end{thm}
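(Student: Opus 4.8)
The plan is to prove the summation formula \eqref{eq:Zij} by a direct bijective/combinatorial argument based on the encoding of a Schr\"oder path by a word in the letters $A$ (up step), $B$ (diagonal step), $C$ (horizontal step), as set up in Figure~\ref{fig:inversion}. A path from $(i,0)$ to $(0,j)$ has exactly $i$ horizontal steps worth of net displacement but, because diagonal steps also move left, if the path uses $k$ diagonal steps then it uses $i-k$ horizontal steps and $j-k$ up steps; here $k$ ranges over $0,1,\ldots,\mathrm{Min}(i,j)$. So the outer sum over $k$ is just the sum over the number of diagonal steps, and $\gamma^k$ is immediate. What remains, for each fixed $k$, is to show that
$$ \sum_{p:\, k(p)=k} q^{\mathcal{A}(p)} = q^{k^2}\ {i+j-k \brack j-k,k,i-k}_{q^2}\ . $$

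The key step is to dissect the area $\mathcal{A}(p)$ according to the word. First I would observe that each diagonal step contributes one fixed triangle immediately to its left (the yellow triangles of Figure~\ref{fig:inversion}); there are $k$ of these, but I should check they contribute $q$-degree accounting for part of the $q^{k^2}$ and part of a "staircase" offset — more precisely, a cleaner route is to peel off the triangles below the diagonal steps together with the interaction among diagonal steps themselves, which produces exactly $q^{2\binom{k}{2}+k}=q^{k^2}$ (the $2\binom{k}{2}$ from pairs of diagonals, the $k$ from the half-triangles). Then the remaining full unit squares below the path are, as the figure indicates, in bijection with the multiset of ordered pairs of letters $(C,A)$, $(C,B)$, $(B,A)$ that form an inversion in the word (a later letter that "should" come first), plus the $(B,B)$ pairs. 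Each such unit square has area $2$ (two triangles), which is why $q^2$ appears. The statement I need is then the standard fact that the generating function $\sum_w q^{2\cdot\mathrm{inv}_{\ast}(w)}$ over words $w$ with $i-k$ letters $C$, $k$ letters $B$, $j-k$ letters $A$, where $\mathrm{inv}_{\ast}$ is the appropriately weighted inversion-plus-$(B,B)$-pair statistic, equals the $q^2$-trinomial $\left[{i+j-k \atop j-k,\,k,\,i-k}\right]_{q^2}$.

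To make that last identity precise and airtight, I would recall the classical $q$-multinomial interpretation: $\left[{N \atop n_1,n_2,n_3}\right]_{Q}=\sum Q^{\mathrm{inv}(w)}$ over rearrangements $w$ of a word with $n_i$ copies of letter $i$, where $\mathrm{inv}(w)$ counts pairs of positions out of their "sorted" order. The only subtlety is matching my geometric "square count" to $\mathrm{inv}$: a unit square at a given lattice position sits below the path iff the path has already gone up past that row but not yet left past that column, which translates exactly into a pair (horizontal-or-diagonal step occurring later) $\times$ (up-or-diagonal step occurring earlier), i.e. precisely the inversions of type $(C,A),(C,B),(B,A)$ together with $(B,B)$ pairs, with the right multiplicities. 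I would verify this square-by-square correspondence carefully and then apply the $q$-multinomial theorem with $Q=q^2$.

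The main obstacle I expect is the careful bookkeeping of the area: separating the contribution of the $k$ diagonal-step triangles and the diagonal–diagonal interactions (to extract $q^{k^2}$) from the contribution of the full unit squares, and checking no square is double-counted or missed at the boundary rows/columns (especially near the starting point $(i,0)$ and the endpoint $(0,j)$). As an independent sanity check, rather than a substitute, I would verify that the proposed formula \eqref{eq:Zij} satisfies the recursion $Z_{(i,0)\to(0,j)}=Z_{(i-1,0)\to(0,j)}+\gamma\, q^{2i-1}\, Z_{(i-1,0)\to(0,j-1)}+q^{2i}\, Z_{(i,0)\to(0,j-1)}$ together with the boundary conditions $Z_{(i,0)\to(0,0)}=Z_{(0,0)\to(0,j)}=1$; this reduces to a standard three-term Pascal-type relation for $q^2$-trinomials after shifting the summation index in one term, and since the recursion plus boundary data determine $Z$ uniquely, this alone would constitute a complete (if less illuminating) proof. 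I would present the combinatorial argument as the main proof and mention the recursion check as a remark.
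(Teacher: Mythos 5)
Your proposal is correct and follows essentially the same route as the paper's own proof: encode the path as a word in $A,B,C$, identify each unit square below the path with an inversion of type $(C,A)$, $(C,B)$, $(B,A)$ or with a $(B,B)$ pair, extract $q^{2\binom{k}{2}+k}=q^{k^2}$ from the diagonal-step contributions, and invoke the inversion generating-function interpretation of the $q^2$-trinomial. The recursion check you mention as a remark is also noted in the paper (just before the theorem) but is likewise not used as the main argument there.
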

\begin{proof}
Recall first the classical combinatorial interpretation of the $q^2$-trinomial appearing in \eqref{eq:qtrinomial} as the sum over all words 
$w$ in the alphabet $\{A,B,C\}$, of total length $(a+b+c)$ with $a$ occurrences of the letter $A$,
$b$ occurrences of the letter $B$ and $c$ occurrences of the letter $C$, 
weighted by $(q^{2})^{I(w)}$ where ${I}(w)$ denotes the number of \emph{inversions} in 
the word $w$. By an inversion in $w$ we mean a pair of letters $\{B,A\}$ in $w$ (respectively $\{C,A\}$ or $\{C,B\}$) where the letter $B$ appears before (i.e.\ to the left of) the letter $A$ in $w$ (respectively
$C$ before $A$ or $C$ before $B$). Otherwise stated, ${I}(w)$ is the minimal number of permutations of pairs of adjacent letters needed to bring the word to the ordered 
form $AAA\ldots BBB\ldots CCC$\footnote{For instance the word $ABBACCBCA$ with $a=b=c=3$ has $10$ inversions.}.

Consider now a Schr\"oder path $p$ connecting the point $(i,0)$ to the point $(0,j)$ and with a total of $k(p)$ diagonal steps. 
This path may be coded bijectively by a word $w(p)$ of length $i+j-k(p)$ obtained by listing the sequence of paths steps read downwards from left to right
with a letter $A$ for each vertical step, a letter $B$ for each diagonal step and a letter $C$ for each horizontal step
(see Fig.~\ref{fig:inversion}). The numbers $a,b,c$ of letters $A$, $B$, $C$, namely
the numbers of vertical, diagonal and horizontal path steps respectively, satisfy $a+b=j$ and $c+b=i$ and $b=k(p)$, hence $(a,b,c)=(j-k(p),k(p),i-k(p))$. 
We have in particular the necessary condition $0\leq k(p)\leq {\rm Min}(i,j)$. As appears clearly in Fig.~\ref{fig:inversion},
each unit square below the path $p$ is determined by the data of the horizontal or diagonal path step above it and of the diagonal or vertical path step to its right, 
hence by a pair made of a letter $C$ or $B$, and a letter $B$ or $A$ appearing later in $w(p)$. Each unit square (with contribution $2$ to $\mathcal{A}(p)$) below the path is thus associated to 
an inversion $\{C,A\}$, $\{B,A\}$ or $\{C,B\}$ or to 
a pair $\{B,B\}$ with two \emph{distinct} occurrences of the letter $B$. Since there are $b$ occurrences of the letter $B$, the total number of unit squares below
the path is thus ${I}(w(p))+b(b-1)/2$. To get the total area, we also need to add a contribution $b$ from the remaining single triangles (each contributing $1$ to $\mathcal{A}(p)$) 
adjacent (and below) each of the $b$ diagonal steps. To summarize, the total area below the path $p$ is 
$$\mathcal{A}(p)=2\left({I}(w(p))+b(b-1)/2\right)+b=
2 {I}(w(p))+b^2$$
with $b=k(p)$ and we therefore get a total path weight $\gamma^{k(p)}\, q^{k(p)^2}\, (q^{2})^{{I}(w(p))}$.
Summing over all Schr\"oder paths $p$ from $(i,0)$ to $(0,j)$ with a fixed $k(p)=k$ boils down to enumerating all inversion-weighted words with 
$(a,b,c)=(j-k,k,i-k)$, and the formula \eqref{eq:Zij} follows by summing over the allowed values of $k$. This completes the proof of the theorem.
\end{proof}

We conclude this section with a useful polynomiality property of the path partition function $Z_{(i,0)\to(0,j)}$.

\begin{thm}\label{polthm}
The partition function for weighted Schr\"oder paths $Z_{(i,0)\to(0,j)}$ may be written as a \emph{polynomial 
$z_j(t)$  of degree $j$ in the variable $t=q^{2i}$}, where:
\begin{equation}\label{eq:polz}
z_j(t):=\sum_{k=0}^{j} \gamma^k\, q^{k^2}\ \frac{\prod\limits_{s=1}^{j}(t\, q^{2(s-k)}-1)}{\prod\limits_{s=1}^{j-k}(q^{2s}-1)\prod\limits_{s=1}^{k}(q^{2s}-1)}=\sum_{k=0}^{j} \gamma^k\, q^{k^2}\, {j\brack k}_{q^2}
\,\prod\limits_{s=1}^{j}\frac{t\, q^{2(s-k)}-1}{q^{2s}-1}
\end{equation}
in terms of the $q^2$-binomial coefficient ${j\brack k}_{q^2}:={j\brack k,j-k,0}_{q^2}$.
\end{thm}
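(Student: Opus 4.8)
The plan is to start from the summation formula of Theorem~\ref{trinothm} and repackage the $q^2$-trinomial so as to separate its $i$-dependence from its $(j,k)$-dependence. Putting $(a,b,c)=(j-k,k,i-k)$ in \eqref{eq:qtrinomial} I would factor
$$ {i+j-k\brack j-k,k,i-k}_{q^2}={j\brack k}_{q^2}\cdot\frac{\prod_{s=1}^{i+j-k}(q^{2s}-1)}{\prod_{s=1}^{j}(q^{2s}-1)\,\prod_{s=1}^{i-k}(q^{2s}-1)}, $$
and then telescope the remaining ratio, $\prod_{s=1}^{i+j-k}(q^{2s}-1)\big/\prod_{s=1}^{i-k}(q^{2s}-1)=\prod_{s=1}^{j}(q^{2(i-k+s)}-1)$. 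Substituting $t=q^{2i}$ turns this product into $\prod_{s=1}^{j}(t\,q^{2(s-k)}-1)$, so for every $k$ with $0\le k\le{\rm Min}(i,j)$ the $k$-th summand of Theorem~\ref{trinothm} becomes exactly the $k$-th summand of \eqref{eq:polz}; the equality of the two displayed forms of $z_j(t)$ in \eqref{eq:polz} is then just the definition of the $q^2$-binomial coefficient.

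Next I would check that the summation range may be freely extended up to $k=j$. This changes nothing unless $i<j$, i.e. ${\rm Min}(i,j)=i$; and for $i<k\le j$ the factor of $\prod_{s=1}^{j}(t\,q^{2(s-k)}-1)=\prod_{s=1}^{j}(q^{2(i+s-k)}-1)$ with index $s=k-i$ (which lies in $\{1,\dots,j\}$ since $1\le k-i\le j-i\le j$) equals $q^{0}-1=0$. Hence the added terms vanish and $Z_{(i,0)\to(0,j)}=z_j(q^{2i})$ with $z_j$ as in \eqref{eq:polz}, the sum over $0\le k\le j$ now being independent of $i$.

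It then remains to see that $z_j(t)$, regarded as an expression in the variable $t$, is a polynomial of degree exactly $j$. Each summand of \eqref{eq:polz} is the ($t$-independent) constant $\gamma^k q^{k^2}{j\brack k}_{q^2}\big/\prod_{s=1}^{j}(q^{2s}-1)$ times the product $\prod_{s=1}^{j}(t\,q^{2(s-k)}-1)$ of $j$ factors, each affine and non-constant in $t$, so every summand is a polynomial of degree $j$ and $\deg z_j\le j$. For the leading term, the coefficient of $t^j$ is $\dfrac{q^{j(j+1)}}{\prod_{s=1}^{j}(q^{2s}-1)}\sum_{k=0}^{j}\gamma^k q^{k^2-2jk}{j\brack k}_{q^2}$, and writing $k^2-2jk=2\binom{k}{2}+k(1-2j)$ and applying the $q$-binomial theorem $\sum_{k=0}^{j}{j\brack k}_{Q}Q^{\binom{k}{2}}x^k=\prod_{s=0}^{j-1}(1+xQ^s)$ with $Q=q^2$, $x=\gamma q^{1-2j}$ evaluates the sum to $\prod_{s=0}^{j-1}(1+\gamma q^{2s+1-2j})$. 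Since $\gamma\ge 0$ and $q>0$ this is strictly positive, so the leading coefficient is non-zero and $\deg z_j=j$.

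All the manipulations are mechanical; the only points requiring a little care are the innocuous-looking extension of the summation range beyond ${\rm Min}(i,j)$ and the leading-coefficient computation, where one must recognize the sum as an instance of the $q$-binomial theorem in order to certify that the degree is exactly $j$ and not merely at most $j$.
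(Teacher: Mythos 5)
Your proposal is correct and follows essentially the same route as the paper: the same cancellation of $\prod_{s=1}^{i-k}(q^{2s}-1)$ inside the $q^2$-trinomial to isolate $\prod_{s=1}^{j}(q^{2(s+i-k)}-1)$, the same substitution $t=q^{2i}$, and the same extension of the summation range to $k\le j$ justified by the vanishing factor at $s=k-i$. Your extra computation of the leading coefficient via the $q$-binomial theorem (showing the degree is exactly $j$, not merely at most $j$) goes beyond the paper, which asserts this without proof, and it is consistent with the product formula $\sum_k \gamma^{i-k}q^{k^2}{i\brack k}_{q^2}=\prod_{s=0}^{i-1}(\gamma+q^{2s+1})$ used later in the proof of Theorem~\ref{partitionfunction}.
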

\begin{proof}
From the $q^2$-trinomial expression, we may write
\begin{equation*}
{i+j-k\brack j-k,k,i-k}_{q^2}= \frac{\prod\limits_{s=1}^{i+j-k}(q^{2s}-1)}{\prod\limits_{s=1}^{j-k}(q^{2s}-1)\prod\limits_{s=1}^{k}(q^{2s}-1)\prod\limits_{s=1}^{i-k}(q^{2s}-1)}=
\frac{\prod\limits_{s=1}^{j}(q^{2(s+i-k)}-1)}{\prod\limits_{s=1}^{j-k}(q^{2s}-1)\prod\limits_{s=1}^{k}(q^{2s}-1)} \ ,
\end{equation*}
so that
\begin{equation*}
 Z_{(i,0)\to(0,j)}=\sum_{k=0}^{{\rm Min}(i,j)} \gamma^k\, q^{k^2}\ \frac{\prod\limits_{s=1}^{j}(q^{2(s-k+i)}-1)}{\prod\limits_{s=1}^{j-k}(q^{2s}-1)\prod\limits_{s=1}^{k}(q^{2s}-1)}\ .
\end{equation*}
Let us now show that we may replace the upper bound ${\rm Min}(i,j)$ in the summation by $j$. This is straightforward if $j\leq i$ since ${\rm Min}(i,j)=j$  in this case. As for the situation
where $j>i$, we may extend the sum over $k$ from ${\rm Min}(i,j)+1=i+1$ to $j$ since the product in the numerator vanishes identically for all these extra added terms 
due to the contribution $(q^0-1)=0$ arising from the
term $s=k-i$ (which lies in the integer interval $[1,j]$ since $k\in [ i+1,j] \subset [ i+1,i+j]$).  
This allows to rewrite $Z_{(i,0)\to(0,j)}=z_j(q^{2i})$, with $z_j(t)$ as in \eqref{eq:polz},
which displays $z_j(t)$ as an explicit polynomial of degree $j$ in $t$.
\end{proof}

\subsection{Scaling limit}
\label{sec:scalingone}
Our NILP problem involves a collection of $(n+1)$ non-intersecting Schr\"oder paths with endpoints $(0,j)$ for $j=0,\ldots,n$. Since we will eventually consider the limit of large $n$, 
the positions of the starting and endpoints of our paths may eventually become large, of the order of $n$.
A natural question is then to estimate the partition function $Z_{(i,0)\to(0,j)}$ of a single Schr\"oder path from $(i,0)$ to the point $(0,j)$ in the limit 
where $i$ and $j$ become large and scale as $n$. A sensible large $n$ limit is reached by keeping $\gamma$ fixed but letting $q$ scale exponentially with $1/n$.
In other words, we consider the scaling
\begin{equation*}
i=u\, n\ , \qquad j= v\, n\ , \qquad q=\qq^{\frac{1}{n}}
\end{equation*}
with large $n$ and finite $u,v\geq 0$ and $\qq>0$.
 
With this scaling, $Z_{(i,0)\to(0,j)}$ grows exponentially with $n$ as follows:
\begin{thm}\label{schroas}
The scaled partition function for weighted Schr\"oder paths behaves for large $n$ as 
\begin{equation}
Z_{(un,0)\to(0,vn)}\underset{n \to \infty}{\sim} {\rm e}^{n\, S_0(u,v)}
\label{eq:Zijscal}
\end{equation}
where 
$S_0(u,v)=S_0(u,v,\phi(u,v))$ and
\begin{equation}
\begin{split}
S_0(u,v,\phi)& =\phi {\rm Log}\, \gamma +\phi^2\, {\rm Log}\, \qq +\int_{0}^{u+v-\phi}d\sigma\,  {\rm Log}(\qq^{2\sigma}-1)
- \int_{0}^{v-\phi}d\sigma\,  {\rm Log}(\qq^{2\sigma}-1)\\
  & \qquad \qquad \qquad \qquad \qquad-\int_{0}^{\phi}d\sigma\,  {\rm Log}(\qq^{2\sigma}-1)
-\int_{0}^{u-\phi}d\sigma\,  {\rm Log}(\qq^{2\sigma}-1)\\
\end{split}
\label{eq:Sform}
\end{equation}
while $\phi(u,v)$ is the 
unique solution to $\frac{\partial{S_0(u,v,\phi)}}{\partial\phi}=0$ satisfying $0\leq \phi\leq {\rm Min}(u,v)$, namely:
\begin{equation}\label{sapok}
\gamma\, \qq^{2\phi} \frac{(\qq^{2(u-\phi)}-1)(\qq^{2(v-\phi)}-1)}{(\qq^{2 \phi}-1)(\qq^{2(u+v-\phi)}-1)}=1 \qquad (0\leq \phi\leq {\rm Min}(u,v) ).
\end{equation}
\end{thm}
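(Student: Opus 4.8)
The plan is to start from the exact summation formula of Theorem~\ref{trinothm}, apply the scaling $i=un$, $j=vn$, $q=\qq^{1/n}$, and evaluate the resulting sum over $k$ by a saddle-point (Laplace) argument. Writing $k=\phi\, n$, the key input is the large-$n$ asymptotics of the $q^2$-trinomial: since
$$\prod_{s=1}^{N}(q^{2s}-1) = \exp\!\Big(\sum_{s=1}^{N}{\rm Log}(\qq^{2s/n}-1)\Big)\underset{n\to\infty}{\sim}\exp\!\Big(n\int_{0}^{N/n}d\sigma\,{\rm Log}(\qq^{2\sigma}-1)\Big)$$
by a Riemann-sum approximation, the three factorial-type products in ${i+j-k\brack j-k,k,i-k}_{q^2}$ contribute the four integral terms in \eqref{eq:Sform} (one from the numerator with upper limit $u+v-\phi$, three from the denominator with upper limits $v-\phi$, $\phi$, $u-\phi$). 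The prefactor $\gamma^k q^{k^2}$ contributes $\phi\,{\rm Log}\,\gamma + \phi^2\,{\rm Log}\,\qq$ after using $q^{k^2}=\qq^{k^2/n}=\qq^{\phi^2 n}$. Hence the generic term of the sum is $\exp(n\, S_0(u,v,\phi) + o(n))$, and replacing the sum by an integral over $\phi\in[0,{\rm Min}(u,v)]$ (there are only $O(n)$ terms, each bounded) gives $Z_{(un,0)\to(0,vn)}\sim e^{n\,S_0(u,v)}$ with $S_0(u,v)=\max_{0\le\phi\le{\rm Min}(u,v)}S_0(u,v,\phi)$, the maximum being attained at an interior critical point by the discussion below.

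Next I would establish that the maximizing $\phi$ is the unique interior solution of $\partial_\phi S_0=0$. Differentiating \eqref{eq:Sform} term by term, using $\frac{d}{d\phi}\int_0^{f(\phi)}{\rm Log}(\qq^{2\sigma}-1)\,d\sigma = f'(\phi)\,{\rm Log}(\qq^{2f(\phi)}-1)$, one gets
$$\frac{\partial S_0}{\partial\phi}= {\rm Log}\,\gamma + 2\phi\,{\rm Log}\,\qq - {\rm Log}(\qq^{2(u+v-\phi)}-1) + {\rm Log}(\qq^{2(v-\phi)}-1) + {\rm Log}(\qq^{2\phi}-1) + {\rm Log}(\qq^{2(u-\phi)}-1),$$
and exponentiating turns $\partial_\phi S_0=0$ into exactly the saddle equation \eqref{sapok}. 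For uniqueness and the fact that the root is a genuine maximum, I would argue that $\partial_\phi S_0$ is strictly decreasing on $(0,{\rm Min}(u,v))$: the terms $2\phi\,{\rm Log}\,\qq$, $-{\rm Log}(\qq^{2(v-\phi)}-1)$ (wait—this has the wrong sign, let me instead check monotonicity via the second derivative), i.e.\ compute $\partial^2_\phi S_0$ and show it is negative on the relevant interval. Equivalently, one can show the left-hand side of \eqref{sapok}, viewed as a function of $\phi$, is strictly monotonic, so it crosses the value $1$ at most once; combined with a sign check of $\partial_\phi S_0$ at the two endpoints (where some logarithm diverges to $\pm\infty$, forcing a sign change) this yields existence and uniqueness of the interior critical point and identifies it as the global maximum on the closed interval.

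The main obstacle I anticipate is the boundary analysis at $\phi\to 0$ and $\phi\to{\rm Min}(u,v)$, where the integrands ${\rm Log}(\qq^{2\sigma}-1)$ are singular at $\sigma=0$ (logarithmically integrable, so $S_0$ itself stays finite, but its $\phi$-derivative blows up). One must check the sign of $\partial_\phi S_0$ near the endpoints carefully—distinguishing the cases $\qq>1$ and $\qq<1$, since then $\qq^{2\sigma}-1$ has opposite signs and one should really track ${\rm Log}|\qq^{2\sigma}-1|$ together with phase factors, or better, recast everything in terms of the manifestly real combination appearing in \eqref{sapok}. A second, more technical point is justifying the uniform control needed to pass from "each term is $e^{nS_0(u,v,\phi)+o(n)}$" to "the sum is $e^{n\max_\phi S_0}$": one needs the $o(n)$ error in the Riemann-sum approximation of $\sum_{s=1}^N{\rm Log}(\qq^{2s/n}-1)$ to be uniform in $N$ over the range of interest, which follows from standard Euler–Maclaurin estimates once the logarithmic endpoint singularity is handled (the error picks up an extra ${\rm Log}\,n$ from the $s=1$ term, still $o(n)$). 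Both issues are routine but require care; everything else is a direct computation from Theorem~\ref{trinothm}.
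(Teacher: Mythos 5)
Your overall strategy is the paper's: substitute $i=un$, $j=vn$, $k=\phi n$, $q=\qq^{1/n}$ into the trinomial sum of Theorem~\ref{trinothm}, turn each product $\prod_{s=1}^N(q^{2s}-1)$ into $\exp\bigl(n\int_0^{N/n}{\rm Log}(\qq^{2\sigma}-1)\,d\sigma\bigr)$ by a Riemann sum, and evaluate the resulting integral over $\phi$ by Laplace's method; the paper states exactly this in two lines. Two remarks. First, your displayed formula for $\partial S_0/\partial\phi$ has a sign slip: the term coming from $-\int_0^{\phi}{\rm Log}(\qq^{2\sigma}-1)\,d\sigma$ contributes $-{\rm Log}(\qq^{2\phi}-1)$, not $+{\rm Log}(\qq^{2\phi}-1)$; with the correct sign, exponentiation does give \eqref{sapok} (the factor $(\qq^{2\phi}-1)$ must sit in the denominator). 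Second, and more substantively, you never actually carry out the existence/uniqueness argument — you list three candidate strategies and hedge between them. The paper sidesteps all of your anticipated difficulties (endpoint singularities, the $\qq\gtrless 1$ sign bookkeeping) by substituting $F=\qq^{2\phi}$, $U=\qq^{2u}$, $V=\qq^{2v}$, which turns \eqref{sapok} into the quadratic $F^2-\bigl(U+V+\tfrac{(U-1)(V-1)}{1+\gamma}\bigr)F+UV=0$; Vieta then gives $F_1F_2=UV$, both roots $\geq 1$ and outside $]U,V[$, hence exactly one root in $[1,\min(U,V)]$, i.e.\ exactly one $\phi\in[0,\min(u,v)]$. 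Your monotonicity route does work if you commit to it — in the variable $F$ the logarithmic derivative of the left side of \eqref{sapok} is $-\tfrac{1}{U-F}-\tfrac{1}{V-F}-\tfrac{1}{F-1}+\tfrac{1}{UV-F}$, which is strictly negative on $(1,\min(U,V))$ because $UV-F>V-F>0$ there — but the quadratic reformulation is cleaner and handles $\qq<1$ with no extra case analysis, so I would adopt it.
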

\begin{proof}
Using the $q^2$-trinomial expression of Theorem \ref{trinothm}, and substituting $i=un,j=vn,k=\phi n$,
we may express the leading exponential behavior of 
$Z_{(un,0)\to(0,vn)}$ as the integral:
$$Z_{(un,0)\to(0,vn)}\underset{n \to \infty}{\sim} \int_0^{{\rm Min}(u,v)} d\phi {\rm e}^{n S_0(u,v,\phi)} $$
with $S_0$ as in \eqref{eq:Sform}. This integral is dominated by the saddle-point $\phi=\phi(u,v)$ 
that maximizes $S_0$, hence satisfies \eqref{sapok}.
The existence and uniqueness of this point
is best seen upon setting
\begin{equation*}
F=\qq^{2\phi}\ , \qquad U=\qq^{2u}\ ,\qquad V=\qq^{2v}\ ,
\end{equation*}
so that the equation \eqref{sapok} for $\phi$ reads
\begin{equation*}
F^2-\left(U+V+\frac{(U-1)(V-1)}{1+\gamma}\right)F+U\, V=0. 
\end{equation*}
Assuming $\qq>1$ and, say $u\leq v$, we have $1\leq U\leq V$. In particular, the roots $F_1$ and $F_2$ of this equation satisfy
$F_1F_2=UV$ and $F_1+F_2=\left(U+V+\frac{(U-1)(V-1)}{1+\gamma}\right)$ so that $F_1+F_2\geq 2$ and $(F_1-1)(F_2-1)=\gamma (U-1)(V-1)/(1+\gamma)\geq 0$,
which implies that both $F_1$ and $F_2$ are larger than or equal to $1$. 
Finally, from $\left(U+V+\frac{(U-1)(V-1)}{1+\gamma}\right)\geq U+V$, $F_1$ and $F_2$ also satisfy the inequality 
\begin{equation*}
0\leq F^2-\left(U+V\right)F+U\, V =(F-U)(F-V)\ .
\end{equation*}
This implies that $F_1$ and $F_2$ lie outside of the segment $]U,V[$ and, 
since their product is $U\, V$, one of the roots lies below $U$ and the other above $V$. There is therefore 
a unique solution $F$ with $1\leq F\leq U$, that is $0\leq \phi\leq u={\rm Min}(u,v)$.
A similar argument can be worked out for $u>v$ or $\qq<1$.
\end{proof}

\subsection{Equation for geodesics}
\label{sec:freetrajectory}
\begin{figure}
\begin{center}
\includegraphics[width=12cm]{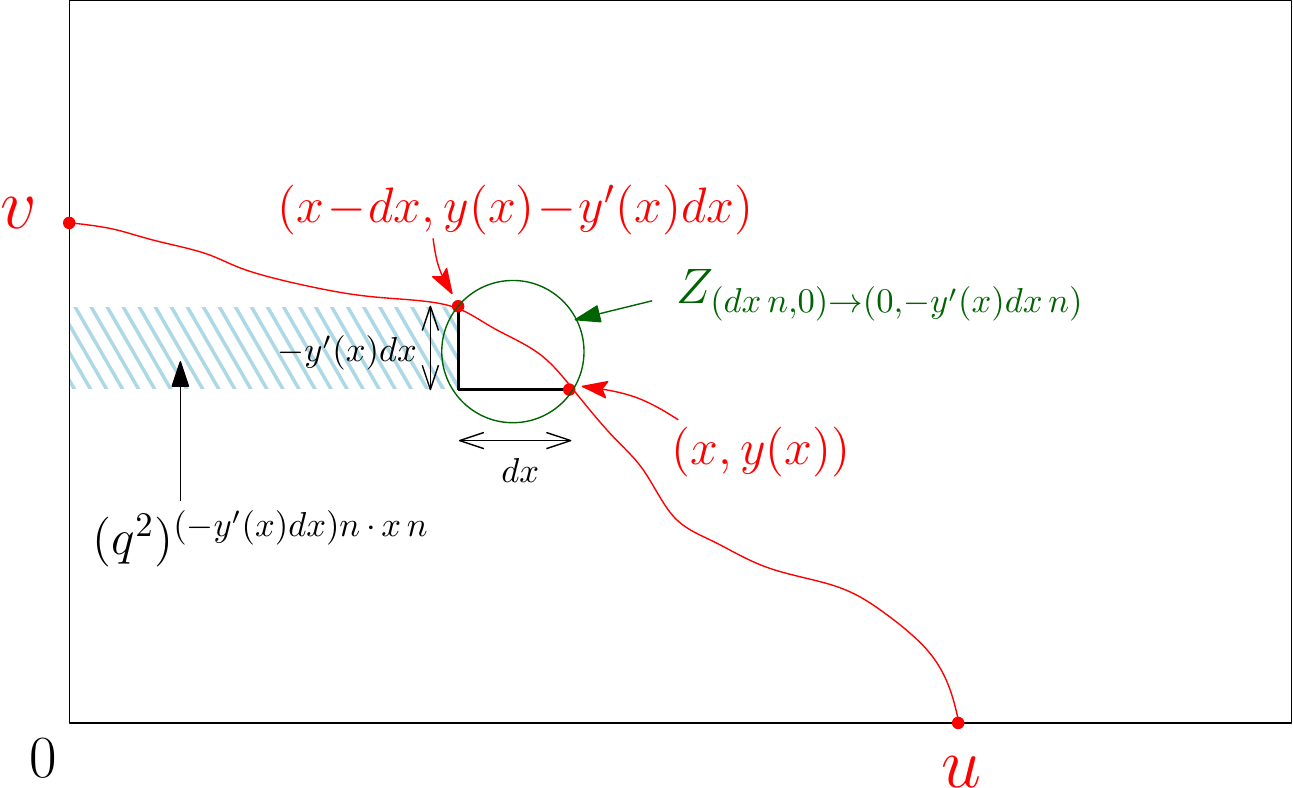}
\end{center}
\caption{\small{A schematic picture, here in rescaled coordinates, of a path $(xn,y(x)n)$ from $(un,0)$ to $(0,vn)$,  
characterized by its shape $y(x)$, $0\leq x\leq u$ with $y(0)=v$, $y(u)=0$ and $y'(x)\leq 0$. The contribution to the partition function of an infinitesimal 
portion of path from $(x,y(x))$ to $((x-dx),(y(x)-y'(x)dx))$ is the product of 
$Z_{(dx\, n,0)\to(0,-y'(x)dx\, n)}$ (accounting for the contribution in the circle) times $(q^{2})^{(-y'(x) dx)n\, \cdot\, x\, n}=(\qq^{2})^{- n\,x\, y'(x)dx}$ (accounting for the area weight
of the shaded region).
}}
\label{fig:geodesic}
\end{figure}
Besides the actual exponential growth of $Z_{(un,0)\to(0,vn)}$, an interesting question is that of finding the geodesic path from $(un,0)$ to $(0,vn)$,
i.e.\ the path whose contribution is maximal in the partition function $Z_{(un,0)\to(0,vn)}$. Having an explicit expression for geodesics will be instrumental for applying 
the Tangent Method in Section~\ref{sec:arctictwo}. A rescaled path from $(un,0)$ to $(0,vn)$ may be 
characterized by its shape $y(x)$, $0\leq x\leq u$ (see Fig.~\ref{fig:geodesic}) which gives the sequence of its positions $(xn,y(x)n)$. This function satisfies 
$y(0)=v$, $y(u)=0$ and $y'(x)\leq 0$ along the whole path. Reading the path from right to left, its contribution to $Z_{(u\, n,0)\to(0,v\, n)}$ is obtained by multiplying 
the contribution of all the infinitesimal portions of path from $(xn,y(x)n)$ to $((x-dx)n,(y(x)-y'(x)dx)n)$, which involves computing the infinitesimal partition function
$Z_{(xn,y(x)n)\to((x-dx)n,(y(x)-y'(x)dx)n)}$. By a simple shift of the origin to position $((x-dx)n,y(x)n)$, we may write
\begin{equation*}
Z_{(xn,y(x)n)\to((x-dx)n,(y(x)-y'(x)dx)n)}= (\qq^2)^{-n\, x\, y'(x)dx} Z_{(dx\, n,0)\to(0,-y'(x)dx\,n)}
\end{equation*}
where the factor $(\qq^2)^{-n\,x\, y'(x)dx}$ incorporates the area weight to the left\footnote{Here we use the area to the left of the path, in accordance with 
our general convention, but we could as well use the area below the path as the two are equivalent for a path that ends on the vertical axis. Using the area below the path would produce 
instead a term $(\qq^2)^{n\, y(x)dx}$, leading to an overall identical contribution since $\int_0^udx\, y(x)=\int_0^udx(-x\, y'(x))$.} of the curve between heights $y(x)\, n$ and $(y(x)-y'(x)dx)n$. The quantity
$Z_{(dx\, n,0)\to(0,-y'(x)dx\ ,n)}$ may then be evaluated from the general expression \eqref{eq:Zijscal}-\eqref{eq:Sform}
of Theorem \ref{schroas} for $Z_{(un,0)\to(0,vn)}$ by performing 
the substitutions $u\to dx$, $v\to -y'(x) dx$, $\phi \to \mu dx$ 
(with $0\leq \mu \leq {\rm Min}(1, -y'(x))$).  We immediately obtain:
\begin{equation*}
Z_{(dx\, n,0)\to(0,-y'(x)dx\, n)}= {\rm e}^{n\, \mathcal{L}(\{y'(x)\} dx}\\
\end{equation*}
with $\mathcal{L}(\{y'(x)\}):=\mathcal{L}(\{y'(x)\},\mu(\{y'(x)\}))$, where 
\begin{equation*}
\begin{split}
\mathcal{L}(\{y'(x)\},\mu)&=\mu \ {\rm Log}\, \gamma +(1-y'(x)-\mu)\,  {\rm Log}(1-y'(x)-\mu)
-(1-\mu)\,  {\rm Log}(1-\mu)\\
&\ \ \ \ -\mu\  {\rm Log}(\mu) -(-y'(x)-\mu)\,  {\rm Log}(-y'(x)-\mu) \\
\end{split}
\end{equation*}
(recall that $y'(x)\leq 0$). As before, the above expression must be taken at 
the value $\mu=\mu(\{y'(x)\})$ in $[0,{\rm Min}(1,-y'(x))]$ such that $\frac{\partial \mathcal{L}(\{y'(x)\},\mu)}{\partial \mu}=0$,
namely the following infinitesimal version of \eqref{sapok}:
\begin{equation}
\gamma \frac{(1-\mu)(-y'(x)-\mu)}{\mu \,(1-y'(x)-\mu)}=1\ .
\label{eq:mu}
\end{equation}
This allows to write formally $Z_{(un,0)\to(0,vn)}$ as a functional integral
\begin{equation*}
Z_{(un,0)\to(0,vn)}=\int_{y(0)=v\atop y(u)=0} \mathcal{D}y(x) {\rm e}^{\textstyle{n \int_0^{u}\left( -{\rm Log}(\qq^2) x\, y'(x)+\mathcal{L}(\{y'(x)\} \right) dx}}
\end{equation*}
and to deduce by a variational principle the equation of geodesics:
\begin{equation*}
-\left(\frac{\delta}{\delta y'(x)}\left(- {\rm Log}(\qq^2) x\, y'(x)+\mathcal{L}(\{y'(x)\}) \right)\right)'=0
 \qquad   \Leftrightarrow \qquad {\rm Log}(\qq^2)=\left(\frac{\delta \mathcal{L}(\{y'(x)\})}{\delta y'(x)}\right)'
\end{equation*}
with boundary conditions $y(0)=v$ and $y(u)=0$. Using 
\begin{equation*}
\frac{\delta \mathcal{L}(\{y'(x)\})}{\delta y'(x)}=\frac{\partial \mathcal{L}(\{y'(x)\},\mu)}{\partial y'(x)}+\frac{\partial \mathcal{L}(\{y'(x)\},\mu)}{\partial \mu}\frac{\delta \mu}{\delta y'(x)}=
\frac{\partial \mathcal{L}(\{y'(x)\},\mu)}{\partial y'(x)}
={\rm Log}\left(\frac{-\mu-y'(x)}{1-\mu-y'(x)}\right)
\end{equation*}
and, from \eqref{eq:mu},
\begin{equation*}
\frac{\delta \mu}{\delta y'(x)}=\frac{\mu\, (1-\mu)}{y'(x)\, (1-2\mu-y'(x))}\ ,
\end{equation*}
we deduce
\begin{equation*}
\begin{split}
\left(\frac{\delta \mathcal{L}(\{y'(x)\}}{\delta y'(x)}\right)'&= y''(x)\left(\frac{\partial}{\partial y'(x)}{\rm Log}\left(\frac{-\mu-y'(x)}{1-\mu-y'(x)}\right)
+\frac{\partial}{\partial \mu}{\rm Log}\left(\frac{-\mu-y'(x)}{1-\mu-y'(x)}\right)\times \frac{\delta \mu}{\delta y'(x)}\right)\\
&=\frac{y''(x)}{y'(x)\, (1-2\mu-y'(x))}
\end{split}
\end{equation*}
so that the equation for geodesics finally reads
\begin{equation*}
{\rm Log}(\qq^2)=\frac{y''(x)}{y'(x)\, (1-2\mu-y'(x))}
\end{equation*}
with $\mu=\mu(\{y'(x)\})$ as in \eqref{eq:mu}.

\medskip
To solve this equation, we set 
\begin{equation*}
X=\qq^{2x}\ , \qquad Y(X)=\qq^{2y(x)}\ ,  \qquad M(X)=\mu(x)\ , 
\end{equation*}
and introduce the function
\begin{equation*}
W(X):=y'(x)=\frac{X\, Y'(X)}{Y(X)}
\end{equation*}
so that $y''(x)={\rm Log}(\qq^2)\, X\, W'(X)$.
The equation for geodesics then simplifies into
\begin{equation*}
X\, W'(X)=W(X)(1-2 M(X)-W(X))\ , \qquad \gamma \frac{(1-M(X))(-W(X)-M(X))}{M(X) \,(1-W(X)-M(X))}=1\ ,
\end{equation*} 
which, upon eliminating $M(X)$, yields
\begin{equation*}
X\ W'(X)= W(X) \sqrt{\frac{\gamma (1+W(X))^2+(1-W(X))^2}{1+\gamma}}\ .
\end{equation*}
Here the choice of the correct branch when solving the quadratic equation for $M(X)$ is dictated by the fact that $0\leq M(X)\leq 1$ 
(choosing the other branch would introduce a global minus sign in the right hand side of the equation above). 
Equivalently, this choice ensures that $X W'(X)<0$ (recall that $W(X)=y'(x)<0$) and, since $y''(x)={\rm Log}(\qq^2)\, X\, W'(X)$, that $y''(x)<0$ for $\qq>1$ and $y''(x)>0$
for $\qq<1$, as expected on physical grounds: for $\qq>1$ (respectively $\qq<1$), the geodesic path tends
to be concave (respectively convex) to increase (respectively decrease) the area below. 
The above equation is easily integrated into
\begin{equation*}
W(X)=- \frac{(1+\gamma)^2}{2\gamma} \frac{X\, X_0}{\left(X_0+X\frac{1+\gamma}{2}\right)\left(X_0-X \frac{1+\gamma}{2\gamma}\right)}
\end{equation*}
where the integration constant $X_0$, to be determined later, must satisfy\footnote{The given expression for $W(X)$ is indeed solution of the differential equation only if $\left(X_0+X\frac{1+\gamma}{2}\right)\left(X_0-X \frac{1+\gamma}{2\gamma}\right)>0$, i.e.\, for the allowed range of $X$ (i.e.\ $[1,U]$ for $\qq>1$ or $[U,1]$ for $\qq<1$) only if $X_0<-\frac{1+\gamma}{2}\max(1,U)$ or $X_0> \frac{1+\gamma}{2\gamma} \max(1,U)$.
The first range of $X_0$ is ruled out by the fact that $W(X)$ must be negative.} $X_0> \frac{1+\gamma}{2\gamma} \max(1,U)$. From $W(X)=\frac{X\, Y'(X)}{Y(X)}$, we then deduce by integration
\begin{equation*}
Y(X)=Y_0\, \frac{X\, (1+\gamma)-2 X_0\, \gamma}{X\, (1+\gamma)+2 X_0}
\end{equation*}
with $Y_0$ yet to be determined.
Introducing as before the quantities $U=\qq^{2u}$ and $V=\qq^{2v}$, the constants $X_0$ and $Y_0$ are determined by imposing 
$Y(1)=V$ (i.e.\ $y(0)=v$) and $Y(U)=1$ (i.e.\ $y(u)=0$). This latter condition fixes $Y_0$ so that
\begin{equation}
Y(X)= \frac{X\, (1+\gamma)-2 X_0\, \gamma}{X\, (1+\gamma)+2 X_0}\cdot \frac{U\, (1+\gamma)+2 X_0}{U\, (1+\gamma)-2 X_0\, \gamma}\ .
\label{eq:geodone}
\end{equation}
As for the first condition, it yields a quadratic equation for $X_0$ with solution\footnote{Note that, for $U,V>0$, $(\sqrt{UV}-1)^2\geq 0$ implies $(UV+1)\geq 2\sqrt{UV}$
and $(U+V)\geq 2\sqrt{UV}\geq 4 UV/(UV+1)$. The quantity appearing in the square root in $\delta(U,V)$ is thus clearly non negative for $\gamma\geq 0$.}
\begin{equation}
\begin{split}
X_0&=\frac{1+\gamma}{4\gamma(V-1)}\left\{UV-1+\gamma(U-V)+\epsilon\, \delta(U,V)\right\}\\
& \hbox{with}\quad  \delta(U,V)=\sqrt{(UV-1)^2+2\gamma((U+V)(UV+1)-4UV)+\gamma^2(U-V)^2}\\
\end{split}
\label{eq:geodtwo}
\end{equation}
where the choice of sign $\epsilon=\pm 1$ fixing the correct branch of solution may be fixed by the condition $X_0> \frac{1+\gamma}{2\gamma} \max(1,U)$.
For $\qq>1$, we have $U>1$ and $V>1$ and we have to impose $X_0> \frac{1+\gamma}{2\gamma}U$, with 
\begin{equation*}
\begin{split}
X_0-\frac{1+\gamma}{2\gamma}  U &=\frac{1+\gamma}{4\gamma(V-1)}\left\{\alpha(U,V)+\epsilon\, \delta(U,V)\right\}\\
\alpha(U,V)&=2U-UV-1+\gamma(U-V)\ .\\
\end{split}
\end{equation*}
Noting that $\delta^2(U,V)-\alpha^2(U,V)=4(1+\gamma)U(U-1)(V-1)>0$, we deduce that only $\epsilon=+1$ fulfills the desired requirement.
Assume now $\qq<1$, in which case $U<1$ and $V<1$ and we have to impose $X_0> \frac{1+\gamma}{2\gamma}$, with 
\begin{equation*}
\begin{split}
X_0-\frac{1+\gamma}{2\gamma} &=-\frac{1+\gamma}{4\gamma(1-V)}\left\{\beta(U,V)+\epsilon\, \delta(U,V)\right\}\\
\beta(U,V)&=UV+1-2V+\gamma(U-V)\ .\\
\end{split}
\end{equation*}
Noting that $\delta^2(U,V)-\beta^2(U,V)=4(1+\gamma)V(1-U)(1-V)>0$, we deduce that only $\epsilon=-1$ fulfills the desired requirement.
We finally plug \eqref{eq:geodtwo} into \eqref{eq:geodone} to get the following:
\begin{thm}
\label{thm:geod}
The geodesic $(x,y(x))$ joining
$(u,0)$ to $(0,v)$ for the scaling limit of weighted Schr\"oder paths is given by:
\begin{equation}
\begin{split}
Y(X)&=\frac{\big(U V\!-\!1\!-\!2 (V\!-\!1) X\!+\!\gamma(U\!-\!V)\!+\!\epsilon(\qq)\delta(U,V)\big) \big(UV\!-\!1\!+\!\gamma(2 UV\!-\!U\!-\!V)\!+\!\epsilon(\qq)\delta(U,V) \big)}
{\big(2U\!-\!UV\!-\!1\!+\!\gamma(U\!-\!V)\!+\!\epsilon(\qq)\delta(U,V)\big) \big(UV\!-\!1\!+\!\gamma(U\!-\!V\!+\!2(V\!-\!1)X)\!+\!\epsilon(\qq)\delta(U,V)\big)}
\\
&= 1+\frac{(V\!-\!1)(U\!-\!X)\big((X\!-\!1)(V\!-\!1)(U\!+\!\gamma)\!+\!(U\!-\!1)(X\!+\!1)(1\!+\!\gamma)\!+\!(X\!-\!1)\epsilon(\qq)\delta(U,V)\big)}{2(U\!-\!1)\big((U\!-\!X)(1\!+\!\gamma X)\!+\!V(U\!+\!\gamma X)(X\!-\!1)\big)}\\
&\hbox{where}\ \ \ \ Y(X)=\qq^{2y(x)}\ , \quad X=\qq^{2x}\ ,
\quad U=\qq^{2u}\ , \quad V=\qq^{2v}\ ,\\
\end{split}
\label{eq:eqgeod}
\end{equation}
with $\delta(U,V)$ as in \eqref{eq:geodtwo}, $\epsilon(\qq)=\sgn({\rm Log}(\qq))$ and $x\in[0,u]$.
\end{thm}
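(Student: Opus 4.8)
The plan is to read off the explicit formula from the computation already assembled above, the only real task being the final substitution and algebraic simplification. The starting point is the general solution \eqref{eq:geodone} of the geodesic equation,
\begin{equation*}
Y(X)= \frac{X\, (1+\gamma)-2 X_0\, \gamma}{X\, (1+\gamma)+2 X_0}\cdot \frac{U\, (1+\gamma)+2 X_0}{U\, (1+\gamma)-2 X_0\, \gamma}\ ,
\end{equation*}
into which I would substitute the value of the integration constant $X_0$ determined in \eqref{eq:geodtwo}. The branch sign $\epsilon=\pm1$ there has already been pinned down by the condition $X_0> \frac{1+\gamma}{2\gamma}\max(1,U)$, which selects $\epsilon=+1$ for $\qq>1$ and $\epsilon=-1$ for $\qq<1$; this is precisely $\epsilon(\qq)=\sgn({\rm Log}\,\qq)$, so the sign assignment in the theorem requires nothing new.

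Writing $2\gamma(V-1)X_0=\tfrac{1+\gamma}{2}\bigl\{UV-1+\gamma(U-V)+\epsilon(\qq)\,\delta(U,V)\bigr\}$, each of the four factors linear in $X_0$ occurring in \eqref{eq:geodone} becomes $\pm\tfrac{1+\gamma}{2(V-1)}$ (or $\pm\tfrac{1+\gamma}{2\gamma(V-1)}$) times one of the four linear expressions $UV-1-2(V-1)X+\gamma(U-V)+\epsilon\delta$, $UV-1+\gamma(2UV-U-V)+\epsilon\delta$, $UV-1+\gamma(U-V+2(V-1)X)+\epsilon\delta$, and $2U-UV-1+\gamma(U-V)+\epsilon\delta$. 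All the scalar prefactors and signs cancel between numerator and denominator, producing at once the first displayed form of $Y(X)$ in \eqref{eq:eqgeod}. The second, manifestly nicer form is then obtained by rationalization: multiply numerator and denominator by the two conjugate factors of the denominator (those with $\epsilon\delta\to-\epsilon\delta$), and use the defining identity $\delta(U,V)^2=(UV-1)^2+2\gamma\bigl((U+V)(UV+1)-4UV\bigr)+\gamma^2(U-V)^2$ from \eqref{eq:geodtwo} to clear $\delta$ from the denominator; factoring $(V-1)(U-X)$ out of the numerator and $(U-1)$ out of the denominator then yields the compact expression $1+\frac{(V-1)(U-X)(\cdots)}{2(U-1)(\cdots)}$.

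The main obstacle is precisely this last step: the intermediate polynomials are large, and one has to control exactly which cancellations leave $\delta(U,V)$ surviving only in the two advertised places. I would keep the computation honest by checking the boundary conditions on the final form: $Y(U)=1$ is immediate from the explicit factor $(U-X)$, and at $X=1$ the numerator bracket collapses to $2(U-1)(1+\gamma)$ while the denominator bracket collapses to $(U-1)(1+\gamma)$, so $Y(1)=1+(V-1)=V$; alternatively $Y(1)=V$ is directly equivalent to the quadratic equation \eqref{eq:geodtwo} satisfied by $X_0$. That the solution is genuinely the geodesic (unique critical point, and a maximum of the action, hence concave for $\qq>1$ and convex for $\qq<1$) is already settled by the branch discussion and the sign of $y''(x)$ established above, and the same discussion shows the denominators in \eqref{eq:eqgeod} never vanish for $X$ in the relevant range ($[1,U]$ for $\qq>1$, $[U,1]$ for $\qq<1$), so the formula holds throughout.
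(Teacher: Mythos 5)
Your proposal is correct and follows essentially the same route as the paper: the paper's proof of this theorem is precisely the substitution of the integration constant \eqref{eq:geodtwo} (with the branch sign already fixed by $X_0>\tfrac{1+\gamma}{2\gamma}\max(1,U)$) into the general solution \eqref{eq:geodone}, followed by algebraic simplification. Your explicit bookkeeping of the four linear factors, the rationalization to reach the second form, and the boundary checks $Y(U)=1$, $Y(1)=V$ are all consistent with (and slightly more detailed than) what the paper records.
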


Let us conclude with a few remarks. First, solving the last equation above for $\epsilon(\qq)\delta(U,V)$ and then writing $(\epsilon(\qq)\delta(U,V))^2=(UV-1)^2+2\gamma((U+V)(1+UV)-4UV)+\gamma^2(U-V)^2$ yields the following algebraic equation for the geodesic $Y=Y(X)$:
\begin{eqnarray}
&&\!\!\!\!\!\!G_\gamma(X,Y;U,V):=(U\!-\!1) (V\!-\!1)\Big(\gamma\left(UV\!+\!X^2 Y^2\right)\!-\!\left(V X^2\!+\!U Y^2\right)\Big)
\nonumber\\
&&-X Y \bigg((U\!+\!1) (V\!+\!1) \big(1\!+\!UV\!+\!\gamma (U\!+\!V) \big)\!-\!8(1\!+\!\gamma )UV\bigg)
+\!(V\!-\!1)\left(X^2\!+\!U\right) Y \big(UV\!-\!1\!+\!\gamma (V\!-\!U)\big)\nonumber\\
&&\qquad\qquad\qquad \qquad\qquad\qquad \qquad  +\!(U\!-\!1) \left(Y^2\!+\!V\right) X \big (UV\!-\!1\!+\!\gamma(U\!-\!V) \big)=0\ .\label{geo}
\end{eqnarray}
This curve contains the two branches corresponding to $\qq>1$ and $\qq<1$ respectively. We note the symmetry
of this curve under the flip w.r.t. the first diagonal under which $u\leftrightarrow v$ and $x\leftrightarrow y$,
namely under the simultaneous interchange $U\leftrightarrow V$ and $X\leftrightarrow Y$:
$$ G_\gamma(Y,X;V,U)=G_\gamma(X,Y;U,V) .$$
This is due to the fact that
the two definitions of the area (i) to the left of the curve or (ii) under the curve are equivalent. Another manifest symmetry
of the curve \eqref{geo} is obtained by reinterpreting geodesic paths 
under complement in the rectangle $n\, u\times n\, v$. Indeed, any path 
$p=\{(n\, x,n\, y)\}$ from 
$(n\, u,0)$ to $(0,n\, v)$ can be 
viewed, by performing a half-turn rotation by $180^\circ$ and reversing the travel direction, as a path 
$\tilde{p}=\{(n\, (u-x),n\, (v-y))\}$ from $(n\, u,0)$ to $(0,n\, v)$. The total area of the rectangle $n\, u\times n\, v$
is equal to $2 n^2 \, u\, v= {\mathcal A}(p)+{\mathcal A}(\tilde{p})$, as the area to the {\it left} of $\tilde {p}$ is nothing but the area to the {\it right} of $p$ within the rectangle. This immediately implies the symmetry:
$$ G_\gamma\left(\frac{X}{U},\frac{Y}{V};\frac{1}{U},\frac{1}{V}\right)=\frac{1}{U^3V^3}\, G_\gamma(X,Y;U,V) .$$
From this interpretation, the map $(X,Y,U,V)\mapsto (U/X,V/Y,1/U,1/V)$ clearly interchanges the two geodesic branches $\qq>1$ and $\qq<1$.

Next, for $\gamma= 0$, the formula for $Y(X)$ simplifies drastically since in this case $\epsilon(\qq)\delta(U,V)=(U V-1)$ so that \eqref{eq:eqgeod} becomes
\begin{equation*}
Y(X)=1+\frac{(V-1)(U-X)}{U-1}\quad (\gamma=0)
\end{equation*}
and the algebraic equation reduces to $(V-1)X+(U-1)Y=UV-1$. We recover here the geodesic equation found in \cite{DFG2}.

Finally, for $\qq\to 1$, using $X\sim 1+2 x (\qq-1)$, $Y(X)\sim 1+2y(x)(\qq-1)$, $U\sim 1+2 u (\qq-1)$, $V\sim 1+2v(\qq-1)$ (which yields in particular 
$\epsilon(\qq)\delta(U,V)\sim  2 (\qq-1)\sqrt{(1+\gamma)\big((u+v)^2+\gamma(u-v)^2\big)}$\ ), Eq.~\eqref{eq:eqgeod} becomes independent of $\gamma$ at leading order in $\qq-1$ 
and yields 
\begin{equation*}
y(x)=\frac{v(u-x)}{u}\quad (\qq\to 1)
\end{equation*}
which, as expected, is nothing but the equation of the straight line passing trough $(u,0)$ and $(0,v)$, irrespectively of $\gamma$. 

\section{Tangent Method and Arctic curve I}
\label{sec:arcticone}

\subsection{Model partition function, one-point function and single free path partition function}
\label{sec:gen}
\begin{figure}
\begin{center}
\includegraphics[width=10cm]{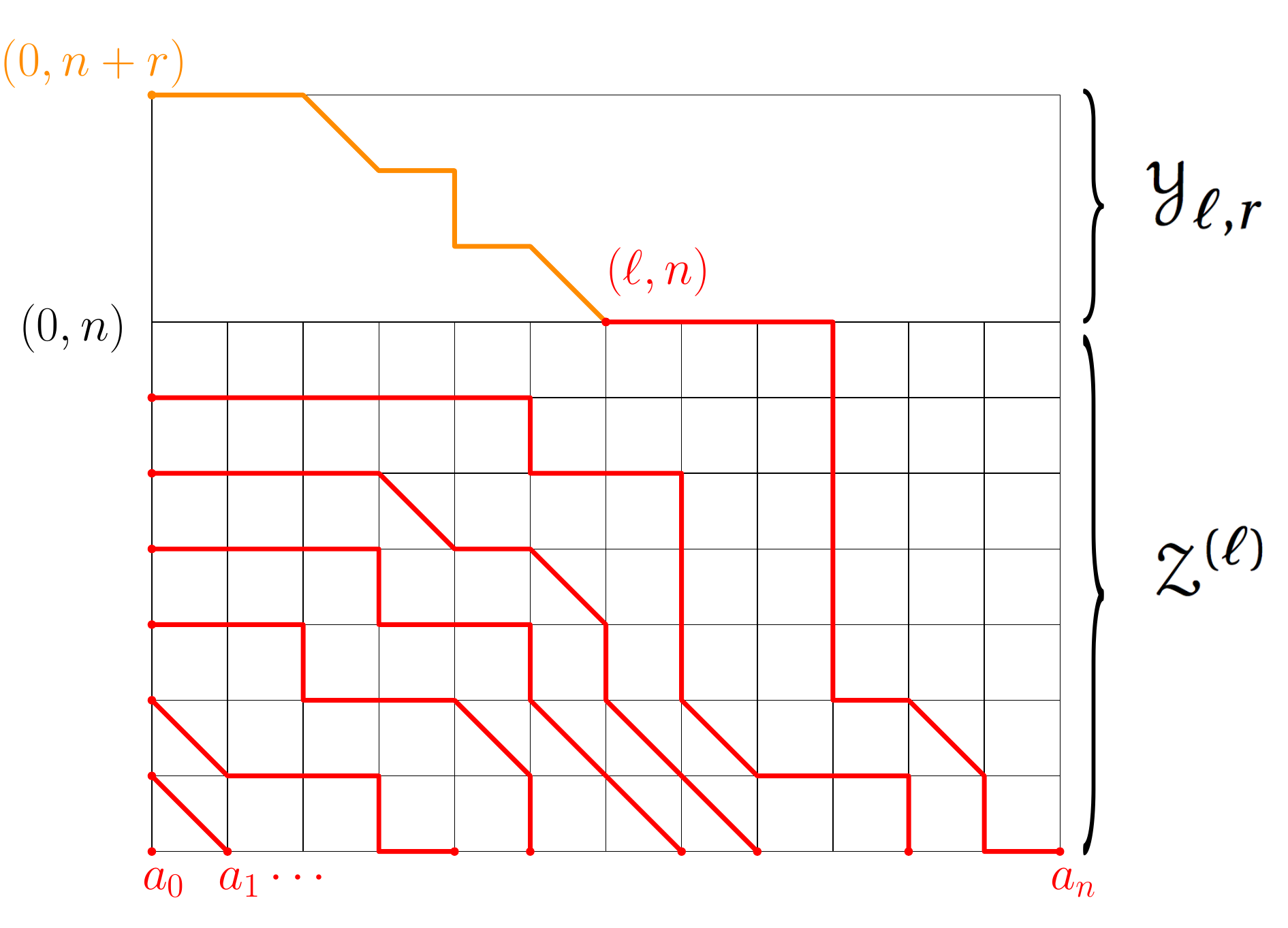}
\end{center}
\caption{\small{ A NILP configuration with a modified endpoint $(0,n+r)$ for the outermost path. This path must exit the originally allowed domain (materialized by the square grid) 
at some exit point $(\ell,n)$ by a diagonal or vertical step. The partition function of such a modified NILP with a fixed $\ell$ is the product of the
partition function ${\mathcal Z}^{(\ell)}$ of the NILP where the outmost path ends at the exit point $(\ell,n)$, times the partition function  ${\mathcal Y}_{\ell,r}$ of a single (non interacting)
Schr\"oder path from the exit point $(\ell,n)$ to the endpoint $(0,n+r)$.}
}
\label{fig:exitpoint}
\end{figure}

The aim of this section is to set the stage for applying the so-called Tangent Method \cite{COSPO}
to determining the arctic
curve of our tiling model of the Aztec rectangle with defects.  This involves evaluating a modified partition function
${\tilde {\mathcal Z}}$ obtained from $\mathcal Z$ by slightly modifying its path setting, namely by moving the endpoint
of the outermost path from its original position $(0,n)$ to a further position along the $y$-axis, say to a point $(0,n+r)$, $r>0$ (see Fig.~\ref{fig:exitpoint}). The general
idea is that this outermost path will follow asymptotically the arctic curve (induced by the interaction with the other paths 
in the non-intersecting family) before escaping tangentially along a geodesic until it reaches the new endpoint. Indeed, once it escapes, the path is no longer sensitive to the presence of the other paths of the NILP, and its trajectory follows 
a free geodesic. By a variational principle at large $n$, we shall determine the most likely exit point $(\ell,n)$ from the original
Aztec rectangle as a function of the vertical shift $r$, and the tangent geodesic will be determined as the unique geodesic through this most likely exit point and the modified endpoint. 
By moving the modified endpoint along the $y$-axis, we generate a family of tangent geodesics, from which the arctic curve is finally recovered as their envelope.

To carry out this program, we need to split the new partition function ${\tilde {\mathcal Z}}$ into a sum:
$${\tilde {\mathcal Z}} =\sum_{\ell=0}^{m} {\mathcal Z}^{(\ell)} \, {\mathcal Y}_{\ell,r} $$
according to the position of the exit point $(\ell,n)$ of the modified outermost path from the original Aztec rectangle.
Here we introduce two new partition functions: the first one, ${\mathcal Z}^{(\ell)}$, is the partition function of weighted
NILP configurations of $(n+1)$ paths with starting points $(a_i,0)$, $i=0,1,...,n$ and with endpoints $(0,j)$, $j=0,1,...,n-1$ for the first $n$ paths, while the outermost path \emph{stops at 
the exit point} $(\ell,n)$. Note that, in ${\mathcal Z}^{(\ell)}$, the contribution of the outermost path is computed using the area \emph{to its left} and not that below it (which is in general smaller).  
For simplicity, it turns out to be easier to consider the so called \emph{``boundary one-point function"} 
${\mathcal H}^{(\ell)}= {\mathcal Z}^{(\ell)}/{\mathcal Z}$. This function will be computed in the next section.

The second partition 
${\mathcal Y}_{\ell,r}$ entering the above decomposition of ${\tilde {\mathcal Z}}$ is the partition function for the last part of the outermost path from the exit point $(\ell,n)$ to the endpoint
$(0,n+r)$. As this portion involves a (free) single weighted Schr\"oder path, it is easily computed as follows. First notice that as the path must exit from the original rectangle, it must start with 
either a vertical or a diagonal step, which makes it start in practice at either position 
$(\ell,n+1)$ or $(\ell-1,n+1)$ after this first step (assuming $\ell\geq 1$ for convenience). In terms of single path partition functions such as in \eqref{defZij}, this gives (with obvious notations):
\begin{equation*}
{\mathcal Y}_{\ell,r}=q^{2\ell}\, Z_{(\ell,n+1)\to (0,n+r)} +\gamma \, q^{2\ell-1}\, Z_{(\ell-1,n+1)\to (0,n+r)}
\end{equation*}
where the $q$-dependent prefactors restore the correct area weights by taking into account the area of the strip of height 1 on the left of the first step.
Finally, by moving the origin to the point $(0,n+1)$, we may write equivalently:
\begin{equation}\label{Yvalue}
{\mathcal Y}_{\ell,r}=q^{2\ell}\, Z_{(\ell,0)\to (0,r-1)} +\gamma \, q^{2\ell-1}\, Z_{(\ell-1,0)\to (0,r-1)}=
q^{2\ell}\, z_{r-1}(q^{2\ell}) +\gamma \, q^{2\ell-1}\, z_{r-1}(q^{2\ell-2})
\end{equation}
solely in terms of the partition functions \eqref{defZij} and \eqref{eq:polz}.

\subsection{LU decomposition and integral formulas}
\label{sec:lu}

Similarly to ${\mathcal Z}$, the partition function ${\mathcal Z}^{(\ell)}$ is expressed through the
Lindstr\"om Gessel-Viennot formula:
\begin{equation*}
{\mathcal Z}^{(\ell)}=\det(A^{(\ell)}),\qquad A_{i,j}^{(\ell)}=\left\{ \begin{matrix} 
A_{i,j} & {\rm for} \ j<n \\
Z_{(a_i,0)\to (n,\ell)} & {\rm for}\ j=n 
\end{matrix}\right.
\end{equation*}
where $A$ is as in \eqref{Adef}. Note that $A^{(\ell)}$ differs from $A$ only in its last column.
We shall now use the LU decomposition method \cite{DFLAP,DFGUI,DFG2} to compute ${\mathcal Z}^{(\ell)}$:
assume we have written $A=LU$ as the product of a lower uni-triangular matrix $L$ and an upper triangular matrix $U$,
then of course $\det(A)=\prod_{i=0}^n U_{i,i}$. Since $U=L^{-1} A$ is upper triangular and $A$ and $A^{(\ell)}$ differ only in their last column,
the matrix $U^{(\ell)}:=L^{-1} A^{(\ell)}$ is upper triangular as well. Finally, in terms of $U$ and $U^{(\ell)}$, the one-point function 
${\mathcal H}^{(\ell)}$ reads simply 
\begin{equation}\label{Hvalue}
{\mathcal H}^{(\ell)} =\frac{\det(A^{(\ell)})}{\det(A)}=\frac{U^{(\ell)}_{n,n}}{U_{n,n}}\ ,
\end{equation}
so that only the elements of $U$ and $U^{(\ell)}$ with highest indices are in practice required.

\medskip
We have the following:

\begin{lemma}\label{Llem}
The unitriangular $(n+1)\times (n+1)$ matrix $L^{-1}$ with entries
\begin{equation*}
(L^{-1})_{i,j}:= \left\{ 
\begin{matrix}
\displaystyle{\frac{\prod\limits_{s=0}^{i-1} q^{2a_i}-q^{2a_s}}{\prod\limits_{s=0\atop s\neq j}^i q^{2a_j}-q^{2a_s} }} & \quad i\geq j\\
0 & {\rm otherwise}
\end{matrix}\right.
\end{equation*}
is such that $U:=L^{-1}A$ is upper triangular.
\end{lemma}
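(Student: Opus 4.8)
The plan is to exploit the polynomiality established in Theorem~\ref{polthm}: by that result $A_{i,j}=Z_{(a_i,0)\to(0,j)}=z_j(t_i)$, where $t_i:=q^{2a_i}$ and $z_j$ is a polynomial of degree exactly $j$ in its argument. Since $0=a_0<a_1<\cdots<a_n=m$ and $q>0$ with $q\neq 1$, the numbers $t_0,t_1,\ldots,t_n$ are pairwise distinct (the case $q=1$ being recovered by continuity). As a first sanity check, the diagonal entry of the proposed matrix is $(L^{-1})_{i,i}=\prod_{s=0}^{i-1}(t_i-t_s)\big/\prod_{s=0}^{i-1}(t_i-t_s)=1$ and $(L^{-1})_{i,j}=0$ for $j>i$, so $L^{-1}$ — and hence $L$ — is indeed lower unitriangular.

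The core of the argument is the evaluation of the entries of $U=L^{-1}A$. Carrying out the matrix product and noticing that the numerator $\prod_{s=0}^{i-1}(t_i-t_s)$ of $(L^{-1})_{i,k}$ does not depend on the summation index $k$, one finds, for all $0\le i,j\le n$,
\begin{equation*}
U_{i,j}=\sum_{k=0}^{i}(L^{-1})_{i,k}\,A_{k,j}
=\Big(\prod_{s=0}^{i-1}(t_i-t_s)\Big)\,\sum_{k=0}^{i}\frac{z_j(t_k)}{\prod_{\substack{s=0\\ s\neq k}}^{i}(t_k-t_s)}\ .
\end{equation*}
The inner sum is precisely the $i$-th order divided difference $z_j[t_0,\ldots,t_i]$, equivalently the coefficient of $t^i$ in the Lagrange interpolation polynomial of $z_j$ through the $i+1$ distinct nodes $t_0,\ldots,t_i$.

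I would then invoke the classical fact — provable by partial fractions applied to $t^{d}/\prod_{s=0}^{i}(t-t_s)$ and comparing the $1/t$ coefficient as $t\to\infty$, or by a short induction on $i$, or by recognising a complete homogeneous symmetric function — that $\sum_{k=0}^{i}t_k^{\,d}\big/\prod_{s\neq k}(t_k-t_s)$ vanishes for $0\le d<i$ and equals $1$ for $d=i$. Expanding $z_j$ in monomials and using linearity, together with the degree bound $\deg z_j=j$, this shows that the inner sum vanishes whenever $i>j$; hence $U_{i,j}=0$ for $i>j$, i.e.\ $U$ is upper triangular, as claimed. As a byproduct useful later, the diagonal entry is $U_{i,i}=\big(\prod_{s=0}^{i-1}(t_i-t_s)\big)$ times the leading coefficient of $z_i$, which by \eqref{eq:polz} is nonzero, so that $\det A=\prod_{i=0}^{n}U_{i,i}\neq 0$.

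I expect no serious obstacle here. The only points requiring care are the distinctness of the nodes $t_i$ — so that the divided differences are well defined and the denominators appearing in $L^{-1}$ are nonzero — and the clean separation of the $k$-independent factor from the $k$-dependent one in $(L^{-1})_{i,k}$; the remaining content is simply the standard vanishing of sufficiently high divided differences of a polynomial, combined with the explicit degree $\deg z_j=j$ supplied by Theorem~\ref{polthm}.
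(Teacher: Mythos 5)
Your proof is correct and is essentially the paper's argument in different notation: the paper rewrites the same sum $\sum_{k=0}^{i}(L^{-1})_{i,k}A_{k,j}$ as a contour integral $\oint \frac{dt}{2\pi i}\,\prod_{s=0}^{i-1}(q^{2a_i}-q^{2a_s})\,z_j(t)/\prod_{s=0}^{i}(t-q^{2a_s})$ and kills it for $i>j$ via the vanishing residue at infinity, which is exactly the divided-difference identity you invoke (and prove by the same $1/t$-coefficient argument). Both routes rest on the degree bound $\deg z_j=j$ from Theorem~\ref{polthm} and on the distinctness of the nodes $q^{2a_i}$, so there is nothing further to add.
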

\begin{proof}
We compute:
\begin{eqnarray}
U_{i,j}&=&\sum_{k=0}^n (L^{-1})_{i,k} A_{k,j} = \sum_{k=0}^i 
\frac{\prod\limits_{s=0}^{i-1} q^{2a_i}-q^{2a_s}}{\prod\limits_{s=0\atop s\neq k}^i q^{2a_k}-q^{2a_s} }\,
Z_{(a_k,0)\to (0,j)}\nonumber \\
&=&\oint_{\mathcal C(a_0,...,a_i)} \frac{dt}{2{\rm i}\pi} \frac{\prod\limits_{s=0}^{i-1} q^{2a_i}-q^{2a_s}}{\prod\limits_{s=0}^i t-q^{2a_s} }\,
z_j(t)\ ,\label{Uelem}
\end{eqnarray}
where we realized the sum over $k$ as a sum of residues at $t=q^{2a_k}$ for the corresponding contour integral 
along a contour $\mathcal C(a_0,...,a_i)$ 
of the complex plane encircling all the points $q^{2a_s}$, $s=0,1,...,i$. Finally we have identified the last term
in the integrand in terms of the polynomial $z_j$ defined in \eqref{eq:polz}. Let us now show that $U$ is upper triangular.

Assume $i>j$. Note that since $\mathcal C(a_0,...,a_i)$ encircles all the finite poles of the integrand, the residue integral may be expressed as minus the contribution of the pole at infinity.
Using Theorem~\ref{polthm} which states that $z_j(t)$ is a polynomial of degree $j$ in $t$, we get the large $t$ asymptotics:
\begin{equation} \frac{\prod\limits_{s=0}^{i-1} q^{2a_i}-q^{2a_s}}{\prod\limits_{s=0}^i t-q^{2a_s} }\,
z_j(t) \propto t^{j-i-1}\label{asymptoz}
\end{equation}
and as $i>j$, there is no residue at $t=\infty$. We conclude that $U_{i,j}=0$ when $i>j$ and the Lemma follows.
\end{proof}

\begin{thm}\label{partitionfunction}
The partition function ${\mathcal Z}$ \eqref{defZ} for the domino tiling of the Aztec rectangle with defects reads:
\begin{equation*}
{\mathcal Z}=q^{\frac{n(n+1)(2n+1)}{6}} \, \prod_{s=0}^{n-1} (\gamma+q^{2s+1})^{n-s}\, 
\frac{\Delta(q^{2a_0},q^{2 a_1},...,q^{2a_n})}{\Delta(q^0,q^2,...,q^{2n})} 
\end{equation*}
where $\Delta(x_0,x_1,...,x_n)$ stands for the Vandermonde determinant $\Delta(x_0,x_1,...,x_n)=\prod_{i>j} (x_i-x_j)$.
\end{thm}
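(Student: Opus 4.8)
The plan is to compute $\mathcal Z=\det(A)=\prod_{i=0}^n U_{i,i}$ using the LU decomposition of Lemma~\ref{Llem}, so the whole task reduces to finding a closed form for the diagonal entries $U_{i,i}$. First I would specialize the contour-integral formula \eqref{Uelem} to $j=i$. Since $z_i(t)$ has degree exactly $i$ in $t$ (Theorem~\ref{polthm}), the integrand in \eqref{Uelem} behaves like $t^{-1}$ times a constant at infinity, so $U_{i,i}$ equals (minus) the residue at $t=\infty$, which picks out the product of the leading coefficient of $z_i(t)$ with the factor $\prod_{s=0}^{i-1}(q^{2a_i}-q^{2a_s})$. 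Reading off the leading coefficient of $z_i(t)$ from \eqref{eq:polz} gives
\begin{equation*}
[t^i]\,z_i(t)=\sum_{k=0}^i \gamma^k q^{k^2}{i\brack k}_{q^2}\prod_{s=1}^i\frac{q^{2(s-k)}}{q^{2s}-1}
=\frac{q^{i^2-i}}{\prod_{s=1}^i(q^{2s}-1)}\sum_{k=0}^i {i\brack k}_{q^2}\gamma^k q^{-k^2+k^2}\cdot(\text{power of }q),
\end{equation*}
and the inner sum is a $q$-binomial theorem evaluation that should collapse to $\prod_{s=0}^{i-1}(\gamma+q^{2s+1})$ up to an explicit power of $q$. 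So the key computational step is recognizing the $q$-Chu–Vandermonde / $q$-binomial identity $\sum_{k=0}^i {i\brack k}_{q^2}q^{k(k-1)}\gamma^k=\prod_{s=0}^{i-1}(1+\gamma q^{2s})$ (suitably normalized), which converts the sum over $k$ in the leading coefficient into the product $\prod_{s=0}^{i-1}(\gamma+q^{2s+1})$ appearing in the statement.

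Next I would assemble the pieces. Writing $U_{i,i}=\big(\prod_{s=0}^{i-1}(q^{2a_i}-q^{2a_s})\big)\cdot c_i$ where $c_i=[t^i]z_i(t)$ is the $a$-independent constant just computed, the product over $i$ factors cleanly:
\begin{equation*}
\mathcal Z=\prod_{i=0}^n U_{i,i}=\Big(\prod_{i=0}^n\prod_{s=0}^{i-1}(q^{2a_i}-q^{2a_s})\Big)\cdot\prod_{i=0}^n c_i
=\Delta(q^{2a_0},\dots,q^{2a_n})\cdot\prod_{i=0}^n c_i .
\end{equation*}
The first factor is exactly the Vandermonde $\Delta(q^{2a_0},\dots,q^{2a_n})$ in the numerator of the claimed formula. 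It then remains to check that $\prod_{i=0}^n c_i$ equals $q^{n(n+1)(2n+1)/6}\prod_{s=0}^{n-1}(\gamma+q^{2s+1})^{n-s}/\Delta(q^0,q^2,\dots,q^{2n})$. The denominator $\Delta(q^0,q^2,\dots,q^{2n})=\prod_{i>j}(q^{2i}-q^{2j})$ arises from collecting the $\prod_{s=1}^i(q^{2s}-1)$ denominators across all $i$ (after pulling out powers of $q$), the product $\prod_{s=0}^{n-1}(\gamma+q^{2s+1})^{n-s}$ arises because the factor $\prod_{s=0}^{i-1}(\gamma+q^{2s+1})$ appears in $c_i$ for every $i\geq s+1$, i.e.\ with multiplicity $n-s$, and the overall power $q^{n(n+1)(2n+1)/6}=q^{\sum_{i=0}^n i^2}$ is the bookkeeping of all the leftover powers of $q$ (it should match $\sum_{i} i^2$ coming from the $q^{i^2-i}$ and related prefactors after the dust settles). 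A clean way to organize this last bit is to verify the identity at a convenient specialization (e.g.\ $a_i=i$, so $\mathcal Z$ reduces to the no-defect Aztec rectangle and $\Delta(q^{2a_\bullet})=\Delta(q^0,\dots,q^{2n})$ cancels the denominator) and then note that both sides of the general formula are $\Delta(q^{2a_0},\dots,q^{2a_n})$ times the same $a$-independent constant, so the special case determines the constant once and for all.

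The main obstacle I anticipate is pinning down the exact power of $q$ — keeping track of all the $q$-exponents through the leading-coefficient computation, the $q$-binomial evaluation, and the recombination of denominators into the Vandermonde $\Delta(q^0,q^2,\dots,q^{2n})$ is where sign-of-exponent bookkeeping errors creep in. The trick of checking the defect-free case $a_i=i$ to fix the constant sidesteps most of this: once one knows $\mathcal Z=\Delta(q^{2a_\bullet})\cdot(\text{const})$ with the constant independent of the $a_i$, and one has identified the $\gamma$-dependence (the $\prod(\gamma+q^{2s+1})^{n-s}$ factor, which is visible already from which $c_i$ contribute each factor), only the pure power of $q$ in the constant remains, and that is forced by a single scalar evaluation. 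A secondary point requiring care is the justification that $U_{i,i}$ is given purely by the residue at infinity — this uses $\deg z_i = i$ exactly (not just $\leq i$), which is guaranteed by Theorem~\ref{polthm} and the non-vanishing of the leading coefficient $c_i$ for $\gamma\geq 0$, $q>0$.
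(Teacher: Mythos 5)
Your proposal is correct and follows essentially the same route as the paper: compute $\mathcal Z=\prod_i U_{i,i}$ via the LU decomposition of Lemma~\ref{Llem}, extract $U_{i,i}$ as (minus) the residue at infinity of the contour integral \eqref{Uelem}, i.e.\ the leading coefficient of $z_i(t)$ times $\prod_{s=0}^{i-1}(q^{2a_i}-q^{2a_s})$, evaluate that coefficient by the $q$-binomial product identity $\sum_{k=0}^{i}\gamma^{i-k}q^{k^2}{i\brack k}_{q^2}=\prod_{s=0}^{i-1}(\gamma+q^{2s+1})$, and assemble the product over $i$. The only cosmetic difference is your optional suggestion to fix the residual power of $q$ by specializing to $a_i=i$, which the paper does not need since it tracks the exponents directly in the closed form \eqref{Uii}.
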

\begin{proof}
Recall that ${\mathcal Z}=\det(A)=\prod_{i=0}^n U_{i,i}$ with $U$ as in Lemma \ref{Llem}.
The result of the Theorem follows from the fact that:
\begin{equation}\label{Uii}
U_{i,i}=\prod_{s=0}^{i-1} q^{2s+1}(\gamma+q^{2s+1})\, \frac{q^{2a_i}-q^{2a_s}}{q^{2i}-q^{2s}} \ .
\end{equation}
To show this, let us use the contour integral formula \eqref{Uelem} for $j=i$, and express the result in terms 
of the residue at $t=\infty$. The latter has a non-vanishing contribution, as readily seen from \eqref{asymptoz}
for $j=i$. More precisely, using the explicit formula \eqref{eq:polz} for $z_j(t)$ we find:
\begin{eqnarray*}
U_{i,i}&=&-{\rm Res}_{t\to\infty} \, \frac{z_i(t)}{t^{i+1}}\prod_{s=0}^{i-1} (q^{2a_i}-q^{2a_s})=\sum_{k=0}^{i} \gamma^k\, q^{k^2}\ \frac{\prod\limits_{s=0}^{i-1} q^{2a_i}-q^{2a_s}}{\prod\limits_{s=1}^{i-k}(q^{2s}-1)\prod\limits_{s=1}^{k}(q^{2s}-1)}\prod_{s=1}^i q^{2(s-k)}
 \\
&=&\prod_{s=0}^{i-1}q^{2s}\frac{q^{2a_i}-q^{2a_s}}{q^{2i}-q^{2s}} \sum_{k=0}^{i} \gamma^k\, q^{k^2}\  {i\brack k}_{q^2}
\prod_{s=1}^i q^{2(s-k)}=
\prod_{s=0}^{i-1}q^{2s}\frac{q^{2a_i}-q^{2a_s}}{q^{2i}-q^{2s}} \sum_{k=0}^{i} \gamma^{i-k}\, q^{i+k^2}\  {i\brack k}_{q^2}\\
&=&\prod_{s=0}^{i-1} q^{2s+1}(\gamma+q^{2s+1})\, \frac{q^{2a_i}-q^{2a_s}}{q^{2i}-q^{2s}}\ , 
\end{eqnarray*}
where we have first reexpressed the summand in terms of the $q^2$-binomial ${i\brack k}_{q^2}={i\brack k,i-k,0}_{q^2}$,
and then performed a change of summation $k\to i-k$. Finally, we have used the product formula\footnote{This relation is easily proved by recursion from the identity
${i\brack k}_{q^2}={i-1\brack k}_{q^2}+{i-1\brack k-1}_{q^2}q^{2(i-k)}$.}:
$$\sum_{k=0}^{i} \gamma^{i-k}\, q^{k^2}\  {i\brack k}_{q^2} =\prod_{s=0}^{i-1} (\gamma+q^{2s+1})\ .
$$
\end{proof}

Using the relation \eqref{Hvalue}, we now get an explicit formula for ${\mathcal H}^{(\ell)}$.
\begin{thm}\label{onepointfunctionthm}
The boundary one-point function ${\mathcal H}^{(\ell)}$ for the NILP with an outermost path exiting at point $(\ell,n)$ reads:
\begin{equation}
{\mathcal H}^{(\ell)}=q^{n(2\ell-n)} \prod_{s=0}^{n-1} \frac{q^{2n}-q^{2s}}{\gamma+q^{2s+1}} \,\oint_{\mathcal C_\ell(a_0,a_1,...,a_n)} \frac{dt}{2{\rm i}\pi} 
\frac{z_n(t q^{-2\ell})}{\prod\limits_{s=0}^n t-q^{2a_s} }
\label{eq:hlexp}
\end{equation}
where the contour of integration $\mathcal{C}_\ell(a_0,a_1,...,a_n)$ encircles only the points $q^{2a_s}$ such that $a_s\geq \ell$.
\end{thm}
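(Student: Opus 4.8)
The plan is to compute ${\mathcal H}^{(\ell)} = U^{(\ell)}_{n,n}/U_{n,n}$ via \eqref{Hvalue}, using the LU decomposition machinery already set up in Lemma~\ref{Llem}. Since $A^{(\ell)}$ agrees with $A$ except in the last column, where $A^{(\ell)}_{i,n} = Z_{(a_i,0)\to(n,\ell)}$, the same $L^{-1}$ from Lemma~\ref{Llem} gives $U^{(\ell)} = L^{-1}A^{(\ell)}$ upper triangular, and only the $(n,n)$ entry is needed. So the first step is to obtain a contour-integral formula for $U^{(\ell)}_{n,n}$ analogous to \eqref{Uelem}.

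The key input I would establish first is a polynomiality statement for $Z_{(i,0)\to(n,\ell)}$, i.e.\ the partition function of a single Schr\"oder path ending at the shifted endpoint $(n,\ell)$ rather than on the axis. By translating the origin of such a path to $(n,0)$ and accounting for the area of the strip of width $n$ to the left, one gets $Z_{(i,0)\to(n,\ell)} = q^{2n\ell}\,Z_{(i-n,0)\to(0,\ell)}$ when $i\ge n$ (with the appropriate reading when $i<n$), so by Theorem~\ref{polthm} this equals $q^{2n\ell}\,z_\ell(q^{2(i-n)})$; written in terms of $t=q^{2a_i}$ the relevant object is essentially $z_\ell(t\,q^{-2n})$ up to the prefactor. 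More to the point, what actually appears in the final column must be arranged so that the contour integral formula reads $U^{(\ell)}_{n,n} = \oint \frac{dt}{2{\rm i}\pi}\,\frac{\prod_{s=0}^{n-1}(q^{2a_n}-q^{2a_s})}{\prod_{s=0}^{n}(t-q^{2a_s})}\, (\text{last-column polynomial in }t)$, exactly mirroring \eqref{Uelem} but with $z_n(t)$ replaced by the polynomial encoding $Z_{(\cdot,0)\to(n,\ell)}$. Then I would recognize that this polynomial is $q^{n(2\ell-n)}\,z_n(t\,q^{-2\ell})$ (or whatever normalization makes the prefactors in \eqref{eq:hlexp} come out), so that $U^{(\ell)}_{n,n}$ becomes $q^{n(2\ell-n)}\prod_{s=0}^{n-1}(q^{2a_n}-q^{2a_s})\oint \frac{dt}{2{\rm i}\pi}\,\frac{z_n(tq^{-2\ell})}{\prod_{s=0}^n(t-q^{2a_s})}$.

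The last step is to divide by $U_{n,n}$, whose explicit product form $\prod_{s=0}^{n-1} q^{2s+1}(\gamma+q^{2s+1})\,\frac{q^{2a_n}-q^{2a_s}}{q^{2n}-q^{2s}}$ is given in \eqref{Uii}. The Vandermonde factors $\prod_{s=0}^{n-1}(q^{2a_n}-q^{2a_s})$ cancel between numerator and denominator, leaving $\prod_{s=0}^{n-1}(q^{2n}-q^{2s})/\prod_{s=0}^{n-1}q^{2s+1}(\gamma+q^{2s+1})$ times the integral; collecting the powers of $q$ and rearranging gives the stated $q^{n(2\ell-n)}\prod_{s=0}^{n-1}\frac{q^{2n}-q^{2s}}{\gamma+q^{2s+1}}$ prefactor (the $\prod q^{2s+1}$ merges with the overall $q$-power). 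The remaining subtlety is the contour: the full $L^{-1}A^{(\ell)}$ computation naturally produces an integral over a contour encircling \emph{all} points $q^{2a_s}$, $s=0,\dots,n$, but because $z_n(tq^{-2\ell})$ carries a factor vanishing whenever $t q^{-2\ell}$ hits certain roots, the residues at the poles $t=q^{2a_s}$ with $a_s<\ell$ vanish; so one may shrink the contour to $\mathcal{C}_\ell$ encircling only those $a_s\ge\ell$. I expect this contour-reduction argument — pinning down exactly which residues drop out from the factored form of $z_n$ in \eqref{eq:polz} — to be the main obstacle, together with bookkeeping the exact powers of $q$ generated by the origin shift $i\mapsto i-n$ and the strip area, since those must match \eqref{eq:hlexp} precisely.
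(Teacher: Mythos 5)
Your overall strategy is the same as the paper's: compute $U^{(\ell)}_{n,n}$ with the $L^{-1}$ of Lemma~\ref{Llem}, realize the sum over $k$ as a contour integral, and divide by the product form \eqref{Uii} of $U_{n,n}$ (your bookkeeping of the Vandermonde cancellation and of the power $q^{n(2\ell-n)}$ is correct). However, two steps as you describe them would fail.

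First, you have misplaced the exit point. It sits on the N boundary at abscissa $\ell$ and height $n$, so the last column entry is $A^{(\ell)}_{k,n}=Z_{(a_k,0)\to(\ell,n)}$; the relevant origin shift is by $\ell$ in the horizontal direction, giving $q^{2n\ell}\,Z_{(a_k-\ell,0)\to(0,n)}=q^{2n\ell}\,z_n(q^{2(a_k-\ell)})$, valid precisely when $a_k\geq\ell$. Your shift by $n$ produces $z_\ell(t\,q^{-2n})$ supported on $a_k\geq n$, which is a genuinely different object (different polynomial, different set of nonvanishing entries), and no adjustment of normalization can convert one into the other.

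Second, and more seriously, your mechanism for restricting the contour is wrong. You claim the residues at $t=q^{2a_s}$ with $a_s<\ell$ drop out because $z_n(t\,q^{-2\ell})$ vanishes there. It does not, for $\gamma\neq 0$: the identity $Z_{(i,0)\to(0,j)}=z_j(q^{2i})$ of Theorem~\ref{polthm} holds only for $i\geq 0$, and for $i<0$ the combinatorial quantity is zero while the polynomial is not. For instance $z_1(t)=\bigl(tq^2-1+\gamma q(t-1)\bigr)/(q^2-1)$ gives $z_1(q^{-2})=-\gamma q^{-1}\neq 0$. The contour $\mathcal{C}_\ell$ is therefore not obtained by ``shrinking'' a contour around all the $q^{2a_s}$; it is imposed from the outset because the matrix entries $A^{(\ell)}_{k,n}$ with $a_k<\ell$ are identically zero (a Schr\"oder path cannot move east, so no path from $(a_k,0)$ reaches $(\ell,n)$), so those terms are absent from the sum $\sum_k(L^{-1})_{n,k}A^{(\ell)}_{k,n}$, and one then chooses a contour encircling only the poles $q^{2a_s}$ with $a_s\geq\ell$ so that the residue sum reproduces exactly the surviving terms. (Your vanishing argument would work at $\gamma=0$, which is presumably why it looks plausible, but it does not survive the introduction of diagonal steps.)
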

\begin{proof}
Similarly to the computation of $U_{i,i}$, we obtain:
\begin{eqnarray*}
U_{n,n}^{(\ell)}&=& \sum_{k=0}^n (L^{-1})_{n,k}\, A^{(\ell)}_{k,n} =\sum_{k=0}^n \frac{\prod\limits_{s=0}^{n-1} q^{2a_n}-q^{2a_s}}{\prod\limits_{s=0\atop s\neq k}^n q^{2a_k}-q^{2a_s} }\, q^{2n\ell}\, \times \left\{ 
\begin{matrix} z_n(q^{2a_k-2\ell}) & {\rm if} \ a_k \geq \ell \\
0 & {\rm otherwise} 
\end{matrix} \right.
\\
&=&q^{2n\ell}\, \prod_{s=0}^{n-1} (q^{2a_n}-q^{2a_s}) \,\oint_{\mathcal C_\ell(a_1,a_2,...,a_n)} \frac{dt}{2{\rm i}\pi} 
\frac{z_n(t q^{-2\ell})}{\prod\limits_{s=0}^n t-q^{2a_s}} \ .
\end{eqnarray*}
Here we have first rewritten $A^{(\ell)}_{k,n}=Z_{(a_k,0)\to (\ell,n)}=q^{2n\ell} Z_{(a_k-\ell,0)\to (0,n)}$
by moving the origin to the position $(\ell,0)$ and correcting the area
factor by $q^{2n\ell}$, and then used our general expression $Z_{(a_k-\ell,0)\to (0,n)}= z_n(q^{2a_k-2\ell})$ in terms of the polynomial $z_n(t)$ of Eq.~\eqref{eq:polz}.
This identity is valid only for $a_k\geq \ell$ while $Z_{(a_k-\ell,0)\to (0,n)}=0$ for $a_k<\ell$ (since a Schr\"oder path cannot move toward east).
This condition $a_k\geq \ell$ is automatically fulfilled in the contour integral by the choice of integration contour $\mathcal{C}_\ell(a_0,a_1,...,a_n)$, encircling 
only those points $q^{2a_s}$ such that $a_s\geq \ell$. The Theorem follows by dividing the above expression by that for $U_{n,n}$ from \eqref{Uii} for $i=n$.
\end{proof}

\subsection{Scaling limit}
\label{sec:saddle}

We may now proceed with the second step of the Tangent Method, by deriving asymptotic estimates
for the quantities ${\mathcal Y}_{\ell,r}$ and ${\mathcal H}^{(\ell)}$. Following the same principle as in Refs.~\cite{DFGUI,DFG2},
we consider a scaling limit of large $n$, in which the positions $\{a_k\}_{0\leq k\leq n}$ of starting points of the paths in the NILP are asymptotically
distributed according to a piecewise differentiable function $\al(\sigma)$ via
\begin{equation*}
a_k=\left\lfloor n\, \al\left(\frac{k}{n}\right)\right\rfloor\ .
\end{equation*}
The function $\sigma \mapsto \al(\sigma)$, $0\le \sigma \leq 1$ is strictly increasing and such that $\al'(\sigma)\geq 1$ whenever defined (to ensure $a_{k+1}-a_k\geq 1$). Note also that 
$\al(0)=0$ and $\al(1)=\lim_{n\to\infty} a_n/n =\mu$ if we let $m=n\, \mu$ for some finite $\mu$ in the scaling limit.
Moreover, we introduce the following scaling variables for the various integers entering the formulas for 
${\mathcal Y}_{\ell,r}$ and ${\mathcal H}^{(\ell)}$:
\begin{equation*}
\ell=n \, \xi,\ \ r=n\, \rho, \ \  q=\qq^{\frac {1}{n}}\ .
\end{equation*}

Using the expression \eqref{Yvalue} for ${\mathcal Y}_{\ell,r}$ and the explicit asymptotic formula \eqref{eq:Zijscal} for $Z_{(i,0)\to (0,j)}$ with $j=r-1\sim n\, \rho$ 
and $i=\ell,\ell-1\sim n\, \xi$, we get immediately:

\begin{lemma}\label{lemY}
In the scaling limit $n\to \infty$, the quantity\ \,${\mathcal Y}_{\ell,r}$ has the leading exponential behavior
\begin{equation*}
{\mathcal Y}_{n \xi,n\rho}\underset{n \to \infty}{\sim} {\rm e}^{n S_0(\xi,\rho)}
\end{equation*}
where 
$S_0(\xi,\rho)=S_0(\xi,\rho,\phi(\xi,\rho))$ and
\begin{equation*}
\begin{split}
S_0(\xi,\rho,\phi)& =\phi {\rm Log}\, \gamma +\phi^2\, {\rm Log}\, \qq +\int_{0}^{\xi+\rho-\phi}d\sigma\,  {\rm Log}(\qq^{2\sigma}-1)
- \int_{0}^{\rho-\phi}d\sigma\,  {\rm Log}(\qq^{2\sigma}-1)\\
  & \qquad \qquad \qquad \qquad \qquad-\int_{0}^{\phi}d\sigma\,  {\rm Log}(\qq^{2\sigma}-1)
-\int_{0}^{\xi-\phi}d\sigma\,  {\rm Log}(\qq^{2\sigma}-1)\\
\end{split}
\end{equation*}
while $\phi(\xi,\rho)$ is the 
unique solution to $\frac{\partial{S_0(\xi,\rho,\phi)}}{\partial\phi}=0$ satisfying $0\leq \phi\leq {\rm Min}(\xi,\rho)$, namely:
\begin{equation*}
\gamma\, \qq^{2\phi} \frac{(\qq^{2(\xi-\phi)}-1)(\qq^{2(\rho-\phi)}-1)}{(\qq^{2 \phi}-1)(\qq^{2(\xi+\rho-\phi)}-1)}=1 \qquad (0\leq \phi\leq {\rm Min}(\xi,\rho) ).
\end{equation*}
\end{lemma}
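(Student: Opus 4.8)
The plan is to reduce the asymptotics of $\mathcal{Y}_{\ell,r}$ directly to Theorem~\ref{schroas}, since the formula \eqref{Yvalue} already expresses $\mathcal{Y}_{\ell,r}$ as a sum of two single-path partition functions with an overall $q$-power prefactor. First I would plug in the scaling $\ell = n\xi$, $r = n\rho$, $q = \qq^{1/n}$ and observe that both $Z_{(\ell,0)\to(0,r-1)}$ and $Z_{(\ell-1,0)\to(0,r-1)}$ have, by Theorem~\ref{schroas}, the leading behavior $\mathrm{e}^{n S_0(\xi,\rho)}$ (the shifts $\ell \mapsto \ell - 1$ and $r \mapsto r - 1$ are $O(1)$ corrections to arguments that scale as $n$, hence do not affect the exponential rate, by continuity of $S_0$ in its arguments). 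Similarly, the prefactors $q^{2\ell} = \qq^{2\xi}$ and $\gamma\, q^{2\ell-1} = \gamma\,\qq^{2\xi - 1/n}$ are subexponential in $n$: $q^{2\ell}$ tends to the constant $\qq^{2\xi}$, not to something growing like $\mathrm{e}^{cn}$. Wait — one must be a little careful here, since $q^{2\ell} = \qq^{2\ell/n} = \qq^{2\xi}$, which is indeed $O(1)$; so neither prefactor contributes to the exponential rate.

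The next step is to combine the two terms. Both summands in \eqref{Yvalue} carry the same exponential factor $\mathrm{e}^{n S_0(\xi,\rho)}$, multiplied by bounded (in the exponential scale) prefactors $q^{2\ell}$ and $\gamma q^{2\ell-1}$ respectively. Since these prefactors are positive (recall $\gamma \geq 0$, $q > 0$) and at least one of them — namely $q^{2\ell} = \qq^{2\xi} > 0$ — is bounded away from $0$, their sum is $\Theta(1)$ on the exponential scale, and therefore
$$
\mathcal{Y}_{n\xi, n\rho} = \left(q^{2\ell} + \gamma q^{2\ell - 1} + o(1)\right) \mathrm{e}^{n S_0(\xi,\rho) + o(n)} \underset{n\to\infty}{\sim} \mathrm{e}^{n S_0(\xi,\rho)}
$$
in the sense of leading exponential behavior (i.e.\ $\frac1n \log \mathcal{Y}_{n\xi,n\rho} \to S_0(\xi,\rho)$). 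The explicit form of $S_0(\xi,\rho,\phi)$ and the saddle-point equation for $\phi(\xi,\rho)$ are then obtained verbatim from \eqref{eq:Sform} and \eqref{sapok} by the substitution $(u,v) \to (\xi,\rho)$, together with the existence-and-uniqueness of the saddle point already established in the proof of Theorem~\ref{schroas}.

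The only genuinely delicate point — and the one I would state carefully — is the legitimacy of neglecting the $O(1)$ shifts in the arguments of $Z$ (the $\ell - 1$ and $r - 1$) when passing to the exponential asymptotics. This is harmless because Theorem~\ref{schroas} is an exponential-order statement and $S_0$ is continuous (indeed smooth) in $(u,v)$ on the relevant domain, so replacing $un$ by $un - O(1)$ and $vn$ by $vn - O(1)$ changes $S_0$ by $o(1)$ and hence $nS_0$ by $o(n)$. One subtlety worth a remark: if $\rho = 0$ then $r - 1 = -1 < 0$ and $z_{r-1}$ is not defined; but in the Tangent Method application one always takes $\rho > 0$ (the endpoint is genuinely shifted), so this edge case does not arise. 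No integration or saddle-point analysis beyond what Theorem~\ref{schroas} already provides is needed; the lemma is essentially a corollary of that theorem, and the proof is correspondingly short.
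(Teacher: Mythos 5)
Your proposal is correct and follows exactly the paper's route: the paper likewise obtains the lemma ``immediately'' by applying Theorem~\ref{schroas} to the two single-path partition functions in \eqref{Yvalue} with $i=\ell,\ell-1\sim n\xi$ and $j=r-1\sim n\rho$, the $O(1)$ shifts and the bounded prefactors $q^{2\ell}$, $\gamma q^{2\ell-1}$ being irrelevant at leading exponential order. Your write-up simply makes explicit the continuity and positivity points that the paper leaves implicit.
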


Similarly, using the expressions \eqref{eq:hlexp} for ${\mathcal H}^{(\ell)}$ and \eqref{eq:polz} for 
$z_j(t)$, with $k\sim n\, \kappa$, we obtain:
\begin{lemma}\label{lemH}
In the scaling limit $n\to \infty$, the quantity ${\mathcal H}^{(\ell)}$ has the leading exponential behavior
\begin{equation*}
{\mathcal H}^{(n\xi)}\underset{n \to \infty}{\sim} \int_{{\mathcal C}_\xi} \frac{dt}{2{\rm i}\pi} {\rm e}^{n S_1(t,\xi)}
\end{equation*}
where $S_1(t,\xi)=S_1(t,\xi,\kappa(t,\xi))$ and
\begin{eqnarray}
S_1(t,\xi,\kappa)&:=&2\xi\,{\rm Log}\, \qq+\kappa\,{\rm Log}\,\gamma +\kappa^2\,{\rm Log}\,\qq+\int_0^{1}
d\sigma {\rm Log}(t\qq^{2(\sigma-\kappa-\xi)}-1) \nonumber \\
&&\quad -\int_0^{1-\kappa}
d\sigma {\rm Log}(\qq^{2\sigma}-1)-\int_0^{\kappa}
d\sigma {\rm Log}(\qq^{2\sigma}-1) \nonumber \\
&&\quad -\int_0^1d\sigma {\rm Log}(t-\qq^{2\al(\sigma)}) +C \nonumber
\end{eqnarray}
with $C$ some unimportant $\qq$-dependent constant, while $\kappa(t,\xi)$ is the 
unique solution to $\frac{\partial{S_1(t,\xi,\kappa)}}{\partial\kappa}=0$ satisfying $0\leq \kappa \leq 1$, namely:
\begin{equation*}
\gamma \, \qq^{2\kappa}\frac{(\qq^{2(1-\kappa)}-1)(\qq^{2(\kappa+\xi)}-t)}{(\qq^{2\kappa}-1)(\qq^{2(\kappa+\xi)}-\qq^2\, t)}=1 \ .
\end{equation*}
The contour ${\mathcal C}_\xi$ is the scaling limit of the contour ${\mathcal C}_\ell(a_0,a_1,...,a_n)$ and therefore
encircles only the values $\qq^{2\al}$ with $\al\in [\xi,\al(1)]$. In particular, since $\al(1)=\mu$, the contour may be chosen so as to cross the real axis at $\qq^{2\xi}$ and anywhere 
above $\qq^{2\mu}$ if $\qq >1$ (respectively anywhere below $\qq^{2\mu}$ if $\qq<1$). 
\end{lemma}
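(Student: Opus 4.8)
The plan is to feed the scaling $\ell=n\xi$, $q=\qq^{1/n}$, $a_k=\lfloor n\,\al(k/n)\rfloor$ into the exact formula~\eqref{eq:hlexp} of Theorem~\ref{onepointfunctionthm}, together with the explicit polynomial~\eqref{eq:polz} for $z_n$, and to read off the leading exponential behaviour factor by factor. After taking logarithms, every product turns into a Riemann sum converging to an integral over $[0,1]$, so the only surviving genuine summation is the one over the index $k$ inside $z_n(t\,q^{-2\ell})$; that sum is treated by Laplace's method, and this is precisely what generates the auxiliary variable $\kappa$ together with its saddle equation. The $t$-integral is kept untouched at this stage, to be analysed later by a separate saddle-point argument.

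Carrying this out, the prefactor $q^{n(2\ell-n)}=\qq^{n(2\xi-1)}$ contributes $2\xi\,{\rm Log}\,\qq$ to $S_1$, the leftover $-{\rm Log}\,\qq$ going into the constant $C$; the prefactor $\prod_{s=0}^{n-1}(q^{2n}-q^{2s})/(\gamma+q^{2s+1})$ is a Riemann sum whose limit $\int_0^1\!\big({\rm Log}(\qq^2-\qq^{2\sigma})-{\rm Log}(\gamma+\qq^{2\sigma})\big)d\sigma$ is independent of $t$ and $\xi$ and so is also absorbed in $C$. In the integrand, the denominator $\prod_{s=0}^{n}(t-q^{2a_s})$ gives $-\sum_{s=0}^{n}{\rm Log}(t-\qq^{2a_s/n})\sim -n\int_0^1{\rm Log}(t-\qq^{2\al(\sigma)})\,d\sigma$ (using $a_s/n\to\al(s/n)$), producing the last integral of $S_1$. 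Finally, writing $z_n(t\,q^{-2\ell})$ from~\eqref{eq:polz} with summation index $k=n\kappa$: the factor $\gamma^k q^{k^2}$ yields $\kappa\,{\rm Log}\,\gamma+\kappa^2\,{\rm Log}\,\qq$; the $q^2$-binomial ${n\brack k}_{q^2}$, a ratio of products of the $(q^{2s}-1)$, gives $n\big(\int_0^1-\int_0^{\kappa}-\int_0^{1-\kappa}\big){\rm Log}(\qq^{2\sigma}-1)\,d\sigma$ (the $\int_0^1$ part joining $C$, the remaining two being the $\kappa$-dependent integrals of $S_1$); and the product $\prod_{s=1}^{n}\frac{t\,q^{2(s-k-\ell)}-1}{q^{2s}-1}$ gives $\sum_{s=1}^{n}{\rm Log}(t\,\qq^{2(s/n-\kappa-\xi)}-1)\sim n\int_0^1{\rm Log}(t\,\qq^{2(\sigma-\kappa-\xi)}-1)\,d\sigma$, its denominator $\prod_{s=1}^n(q^{2s}-1)$ contributing one more constant into $C$.

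Assembling these pieces yields $\mathcal{H}^{(n\xi)}\sim\oint_{\mathcal{C}_\xi}\frac{dt}{2{\rm i}\pi}\int_0^1 d\kappa\,{\rm e}^{nS_1(t,\xi,\kappa)}$ with $S_1$ exactly as stated, and Laplace's method in $\kappa$ keeps $S_1(t,\xi)=S_1(t,\xi,\kappa(t,\xi))$ at the unique critical point $\partial_\kappa S_1=0$ in $(0,1)$. Differentiating, after the substitution $u=\sigma-\kappa-\xi$ which turns the $\kappa$-derivative of the first integral into a boundary term, gives $\gamma\,\qq^{2\kappa}\frac{(\qq^{2(1-\kappa)}-1)\,(t\,\qq^{-2(\kappa+\xi)}-1)}{(\qq^{2\kappa}-1)\,(t\,\qq^{2(1-\kappa-\xi)}-1)}=1$, which is the displayed saddle equation once numerator and denominator of the middle ratio are multiplied by $\qq^{2(\kappa+\xi)}$. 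That this admits a unique root with $0\le\kappa\le1$ follows word for word from the argument in the proof of Theorem~\ref{schroas}: putting $F=\qq^{2\kappa}$ turns it into a quadratic in $F$ whose roots have fixed product, one lying in the admissible range and the other outside. For the contour, by its definition in Theorem~\ref{onepointfunctionthm} $\mathcal{C}_\ell$ encircles the $q^{2a_s}$ with $a_s\ge\ell$, i.e.\ in the limit the points $\qq^{2\al}$ with $\al\in[\xi,\mu]$ since $\al$ is increasing with $\al(1)=\mu$; hence $\mathcal{C}_\xi$ can be deformed to cross the real axis at $\qq^{2\xi}$ and past $\qq^{2\mu}$ on the side dictated by $\sgn({\rm Log}\,\qq)$.

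The step I expect to be the main obstacle, and which --- consistently with the non-rigorous status of the Tangent Method --- is here treated only formally, is the uniform control of the sum-to-integral passages: the densities ${\rm Log}(\qq^{2\sigma}-1)$ are logarithmically (but integrably) singular at $\sigma=0$, the replacement $a_s=\lfloor n\,\al(s/n)\rfloor$ introduces $O(1/n)$ floor corrections across the finitely many points where $\al$ fails to be differentiable, and one must also justify interchanging the Laplace evaluation in $\kappa$ with the $t$-contour integral, excluding any collision between the $\kappa$-saddle and the branch points of ${\rm Log}(t\,\qq^{2(\sigma-\kappa-\xi)}-1)$ in $t$. Everything else is bookkeeping: matching each product with its integral and performing the single differentiation that produces the saddle equation.
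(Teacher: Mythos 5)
Your proposal is correct and is essentially the argument the paper intends: the paper gives no explicit proof of Lemma~\ref{lemH}, stating only that it follows "similarly" by inserting the scaling into \eqref{eq:hlexp} and \eqref{eq:polz} with $k\sim n\kappa$, which is exactly the factor-by-factor Riemann-sum plus Laplace-in-$\kappa$ computation you carry out (and your bookkeeping of each contribution, the boundary-term differentiation yielding the saddle equation, and the identification of the contour all check out). Your closing caveats about uniformity of the sum-to-integral passage match the paper's own non-rigorous standard, so nothing further is needed.
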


\subsection{Most likely exit point}
We are now ready for the third stage of the Tangent Method: find the scaling limit of the most likely exit point 
of the outermost path in our NILP. This is determined by finding the value of $\ell$ that maximizes the contribution
to the sum:
\begin{equation*}
\frac{\tilde{\mathcal Z}}{\mathcal Z}=\sum_{\ell=0}^m {\mathcal H}^{(\ell)}\, {\mathcal Y}_{\ell,r}\ .
\end{equation*}
In the scaling limit, this gives to leading exponential order in $n$:
\begin{equation}\label{oneptsapo}
\frac{\tilde{\mathcal Z}}{\mathcal Z}\underset{n \to \infty}{\sim} \int_0^\mu d\xi \oint_{C_\xi} \frac{dt}{2{\rm i}\pi} {\rm e}^{n (S_0(\xi,\rho)+S_1(t,\xi))}\ .
\end{equation}

\begin{thm}
\label{thm:likely}
The most likely exit point of the outermost path in our NILP is given in the scaling limit by $(n\xi,n)$ where 
$(\xi,\kappa,\phi,t)$ is the unique solution satisfying $\xi\in [0,\mu]$, $\phi\in [0,{\rm Min}(\xi,\rho)]$,
 $\kappa\in [0,1]$ and $t\geq \qq^{2\mu}$ if $\qq>1$ (respectively $t\leq\qq^{2\mu}$ if $\qq<1$) to the system:
\begin{eqnarray}
\gamma\, \qq^{2\phi} \frac{(\qq^{2(\xi-\phi)}-1)(\qq^{2(\rho-\phi)}-1)}{(\qq^{2 \phi}-1)(\qq^{2(\xi+\rho-\phi)}-1)}&=&1 \label{oneH}\\
\gamma \, \qq^{2\kappa}\frac{(\qq^{2(1-\kappa)}-1)(\qq^{2(\kappa+\xi)}-t)}{(\qq^{2\kappa}-1)(\qq^{2(\kappa+\xi)}-\qq^2\, t)}&=&1\label{twoH}\\
\frac{\qq^{2(\kappa+\xi)}-\qq^2\, t}{\qq^{2(\kappa+\xi)}- t} \, \xx(t)&=&1\label{threeH}\\
\qq^2\, \frac{(\qq^{2(\xi+\rho-\phi)}-1)(\qq^{2(\kappa+\xi)}- t)}{(\qq^{2(\xi-\phi)}-1)(\qq^{2(\kappa+\xi)}-\qq^2\, t)}&=&1 \label{fourH}
\end{eqnarray}
where\ \, $\xx(t)$ is the $\qq$-exponential moment-generating function for the distribution $\al$:
\begin{equation}
\xx(t):=\qq^{-2t\textstyle{ \int_0^1 \frac{d\sigma}{t-\qq^{2\al(\sigma)}}}}\ .
\label{defxt}
\end{equation}
\end{thm}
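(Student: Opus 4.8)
The plan is to locate the dominant contribution to the double integral in \eqref{oneptsapo} by a saddle-point analysis in the two variables $\xi$ and $t$, combined with the saddle-point conditions already built into $S_0$ and $S_1$ via the auxiliary parameters $\phi$ and $\kappa$. Concretely, writing the integrand as $\mathrm{e}^{n(S_0(\xi,\rho)+S_1(t,\xi))}$ with $S_0(\xi,\rho)=S_0(\xi,\rho,\phi(\xi,\rho))$ from Lemma~\ref{lemY} and $S_1(t,\xi)=S_1(t,\xi,\kappa(t,\xi))$ from Lemma~\ref{lemH}, the most likely exit point is governed by the stationarity of the total exponent. The two equations $\partial_\phi S_0=0$ and $\partial_\kappa S_1=0$ are exactly \eqref{oneH} and \eqref{twoH}, so those are automatic. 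The remaining two equations must come from $\partial_t\bigl(S_0+S_1\bigr)=0$ and $\partial_\xi\bigl(S_0+S_1\bigr)=0$.

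For the $t$-derivative: since $S_0$ does not depend on $t$, we need $\partial_t S_1=0$, where because of the envelope theorem (the $\kappa$-derivative vanishes at the saddle) we may differentiate $S_1(t,\xi,\kappa)$ holding $\kappa$ fixed. Differentiating the explicit expression for $S_1(t,\xi,\kappa)$ in Lemma~\ref{lemH} gives
\begin{equation*}
\partial_t S_1=\int_0^1\frac{d\sigma}{t-\qq^{2(\kappa+\xi-\sigma)}}-\int_0^1\frac{d\sigma}{t-\qq^{2\al(\sigma)}}=0\ .
\end{equation*}
The first integral can be evaluated in closed form: with the substitution $\tau=\kappa+\xi-\sigma$ it becomes $\frac{1}{2{\rm Log}\,\qq}{\rm Log}\frac{t-\qq^{2(\kappa+\xi-1)}}{t-\qq^{2(\kappa+\xi)}}$. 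Exponentiating the resulting identity $\frac{1}{2{\rm Log}\,\qq}{\rm Log}\frac{t-\qq^{2(\kappa+\xi-1)}}{t-\qq^{2(\kappa+\xi)}}=\int_0^1\frac{d\sigma}{t-\qq^{2\al(\sigma)}}$ and recognizing the right side in the definition \eqref{defxt} of $\xx(t)$ should yield precisely \eqref{threeH}, after a little algebra to bring the ratio $\frac{\qq^{2(\kappa+\xi)}-\qq^2 t}{\qq^{2(\kappa+\xi)}-t}$ into the form matching $1/\xx(t)$. (Note $\qq^{2(\kappa+\xi-1)}$ and $\qq^2 t$ differ by the overall factor $\qq^{-2}\cdot\qq^{2(\kappa+\xi)}$, so one should track the bookkeeping of powers of $\qq$ carefully here.)

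For the $\xi$-derivative: now both $S_0$ and $S_1$ depend on $\xi$, and again by the envelope theorem we differentiate each at fixed $\phi$ and fixed $\kappa$ respectively. From Lemma~\ref{lemY}, $\partial_\xi S_0={\rm Log}(\qq^{2(\xi+\rho-\phi)}-1)-{\rm Log}(\qq^{2(\xi-\phi)}-1)$; from Lemma~\ref{lemH}, $\partial_\xi S_1=2{\rm Log}\,\qq+\int_0^1 d\sigma\,\frac{2{\rm Log}\,\qq\,(-t)\qq^{2(\sigma-\kappa-\xi)}}{t\qq^{2(\sigma-\kappa-\xi)}-1}$, and the integral telescopes similarly to $-{\rm Log}\frac{\qq^{2(1-\kappa-\xi)}t-1}{\qq^{-2(\kappa+\xi)}t-1}$ up to constants. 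Setting $\partial_\xi S_0+\partial_\xi S_1=0$, exponentiating, and simplifying the powers of $\qq$ should give \eqref{fourH}. The routine but delicate part is that the telescoped logarithms carry factors of $2{\rm Log}\,\qq$ and additive $\qq$-constants that must combine cleanly; I would organize the computation by first noting that $\partial_t S_1$ and $\partial_\xi S_1$ share the same telescoping integral (up to the overall $t$ versus $\xi$ differentiation), which lets one reuse the closed form.

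The main obstacle, beyond the algebraic bookkeeping, is the justification that the saddle point is \emph{unique} in the stated ranges $\xi\in[0,\mu]$, $\phi\in[0,{\rm Min}(\xi,\rho)]$, $\kappa\in[0,1]$, and $t\gtrless\qq^{2\mu}$ (the last interval being dictated by the contour $\mathcal C_\xi$ of Lemma~\ref{lemH}). Existence and uniqueness of $\phi$ and $\kappa$ given the other variables is inherited from the quadratic arguments already used in the proof of Theorem~\ref{schroas} (and its analogue for $S_1$), so the real work is the outer system in $(\xi,t)$: one would argue that the total exponent is strictly concave along the relevant real section, or equivalently reduce \eqref{threeH}--\eqref{fourH} to a single monotone equation in one variable after eliminating the others, using that $\xx(t)$ is a monotone function of $t$ on the allowed interval (being a $\qq$-exponential of an integral of $\frac{1}{t-\qq^{2\al(\sigma)}}$, which has fixed sign when $t$ lies outside the support of $\qq^{2\al(\sigma)}$). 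I would also need to check that the contour deformation is legitimate, i.e.\ that the relevant saddle $t$ can be reached by deforming $\mathcal C_\xi$ without crossing the poles at $\qq^{2\al(\sigma)}$ — which is consistent precisely because $t$ is constrained to lie on the far side of $\qq^{2\mu}$.
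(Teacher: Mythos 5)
Your derivation of the four saddle-point equations is exactly the paper's route: extremize $S_0+S_1$ in $(t,\xi)$, with \eqref{oneH}--\eqref{twoH} supplied by the interior stationarity in $\phi$ and $\kappa$ and the envelope argument letting you differentiate at fixed $\phi,\kappa$; the telescoping of $\int_0^1 d\sigma/(t-\qq^{2(\kappa+\xi-\sigma)})$ does produce \eqref{threeH} and \eqref{fourH} once the definition \eqref{defxt} is inserted. One bookkeeping point you flagged yourself: the closed form of that integral is $\frac{1}{2t\,{\rm Log}\,\qq}\,{\rm Log}\bigl[\qq^2\,(t-\qq^{2(\kappa+\xi)-2})/(t-\qq^{2(\kappa+\xi)})\bigr]$, i.e.\ it carries a $1/t$ prefactor (which is precisely what matches the $-2t$ in the exponent of $\xx(t)$) and an extra $\qq^2$; with those tracked, everything closes. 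Where you genuinely diverge from the paper is the existence/uniqueness step: rather than arguing concavity or monotonicity of the reduced exponent, the paper passes to the variables $F=\qq^{2\phi}$, $K=\qq^{2\kappa}$, $L=\qq^{2\xi}$, $R=\qq^{2\rho}$ and \emph{solves the system in closed form}, obtaining $K,F,L,R$ as explicit rational functions of $t$ and $\xx(t)$ (Eq.~\eqref{KFLR}); uniqueness for given $t$ is then immediate, and the range constraints ($1\leq F\leq\min(L,R)$, $1\leq L\leq\qq^{2\mu}$, etc.\ for $\qq>1$) are verified by elementary sign considerations together with the bounds $\sigma\leq\al(\sigma)\leq\mu+1-\sigma$, which sandwich $\xx(t)$ between its values for the extreme admissible distributions. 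This explicit solution is not just a uniqueness device: the formulas $L(t),R(t)$ are reused verbatim in the proof of Theorem~\ref{thm:geofamily}, so your concavity/monotonicity sketch, even if completed, would leave you without the ingredient needed downstream. I would therefore recommend replacing the last paragraph of your proposal by the explicit resolution of the system.
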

\begin{proof}
The leading contribution to the integral formula \eqref{oneptsapo} maximizes 
$S(t,\xi,\rho):=S_0(\xi,\rho)+S_1(t,\xi)$
and must satisfy $\frac{\partial S}{\partial t}=\frac{\partial S}{\partial \xi}=0$. Using  $\frac{\partial S}{\partial t}=\frac{\partial S_1(t,\xi)}{\partial t}=\frac{\partial S_1(t,\xi,\kappa)}{\partial t}$ at
$\kappa=\kappa(t,\xi)$ (since at this point $\frac{\partial S_1(t,\xi,\kappa)}{\partial \kappa}=0$), and similarly $\frac{\partial S}{\partial \xi}=\frac{\partial S_0(\xi,\rho,\phi)}{\partial \xi}+\frac{\partial S_1(t,\xi,\kappa)}{\partial \xi}$ at $\phi=\phi(\xi,\rho)$ and 
$\kappa=\kappa(t,\xi)$, the above two extremization conditions lead directly to \eqref{threeH} and
\eqref{fourH} respectively, while \eqref{oneH} and \eqref{twoH} determine $\phi=\phi(\xi,\rho)$ and 
$\kappa=\kappa(t,\xi)$ as in Lemmas \ref{lemY} and \ref{lemH}.
To investigate the existence and uniqueness of the solution, let us use variables:
$$F=\qq^{2\phi},\ \ K=\qq^{2\kappa},\ \ R=\qq^{2\rho},\ \ L=\qq^{2\xi}$$
and rewrite the above system as:
\begin{eqnarray*}
&&\gamma\, (L-F)(R-F)=(F-1)(LR-F), \quad \gamma\, (\qq^2-K)(KL-t)=(K-1)(KL-\qq^2\, t)\\
&&(KL-\qq^2\, t)\, \xx(t)=(KL-t),\quad \qq^2\, (LR-F)(KL-t)=(L-F)(KL-\qq^2\, t)
\end{eqnarray*}
whose solution may be written as
\begin{equation}
\begin{split}
K&=K(t)= \frac{1+\gamma\, \qq^2 \xx(t)}{1+\gamma\, \xx(t)}\ ,\\
F&=F(t)=\frac{(1+\gamma) t\,\big(1+\gamma\, \xx(t)\big)}{\big(1+\gamma\, \qq^2\xx(t)\big)\big(t\, \big(1+\gamma\, \xx(t)\big)+\gamma\, \big(1- \xx(t)\big)\big)}\ ,\\
L&=L(t)=\frac{t\, \big(1-\qq^2 \xx(t)\big)\big(1+\gamma\, \xx(t)\big)}{\big(1-\xx(t)\big)\big(1+\gamma\, \qq^2 \xx(t)\big)}\ ,\\
R&=R(t)=\frac{t\, \big(1+\gamma\, \xx(t)\big)-\big(1- \xx(t)\big)}{\qq^2 \xx(t)\, \big(t\, \big(1+\gamma\, \xx(t)\big)+\gamma\, \big(1- \xx(t)\big)\big)}\ ,\\
\end{split}
\label{KFLR}
\end{equation}
which yields the most likely value of $\xi$ as a function of $\rho$, or equivalently the most likely value of $L$ as a function of $R$ in
the \emph{parametric form} $L=L(t)$, $R=R(t)$ with a varying $t$. The value of $t$ must be real (for $K,F,L,R$ to be real) and must lie on the contour $\mathcal{C}_\xi$,
which implies\footnote{The equation for $L=L(t)$ forbids $t=\qq^{2\xi}=L$ unless $\xx(t)=0$, which corresponds to the degenerate situation where $t=\qq^{2\mu}$ and $\xi=\mu$.}  $t\geq\qq^{2\mu}$ if $\qq>1$ (respectively $t\leq\qq^{2\mu}$ if $\qq<1$). All the values of $t$ in this range are valid and lead to values of
$L$, $R$, $K$, and $F$ in their respective allowed range. Assuming for instance $\qq>1$, we must have $1\leq K\leq \qq^2$, $1\leq F\leq \min(L,R)$, $1\leq L\leq \qq^{2\mu}$
and $R\geq 1$. The inequalities for $K$ follow from the fact that $\xx(t)$ increases from $0$ to $\qq^{-2}$ when $t$ increases from $\qq^{2\mu}$ to $+\infty$.
As for the other inequalities, it is easily checked from the above expressions that $R-1$, $L-F$, $R-F$ and $F-1$ all have the same sign as $L-1$. 
Indeed we have the ratios 
\begin{equation}
\frac{F\!-\!1}{R\!-\!F}=\qq^2\gamma\, \xx(t), \quad \frac{F\!-\!1}{L\!-\!F}=\frac{\gamma\, \big(1\!-\!\xx(t)\big)}{t\, \big(1\!+\!\gamma\ \xx(t)\big)}, \quad 
\frac{R\!-\!1}{F\!-\!1}=1+\frac{1}{\gamma\, \qq^2\, \xx(t)},
\quad \frac{L\!-\!1}{F\!-\!1}=1+\frac{t\, \big(1\!+\!\gamma\, \xx(t)\big)}{\gamma\,  \big(1\!-\!\xx(t)\big)}\ ,
\label{ratios}
\end{equation}
all manifestly positive since $t\geq \qq^{2\mu}\geq 1$ and $0\leq \xx(t)\leq \qq^{-2}< 1$.
It is thus sufficient to prove $1\leq L\leq \qq^{2\mu}$. From the definition of $\xx(t)$ and the fact that $\al(\sigma)\geq \sigma$, we deduce that the value of $\xx(t)$ for any 
acceptable distribution $\al(\sigma)$ is bounded from above by its value for 
$\al(\sigma)= \sigma$, namely $\xx(t)\leq (t\, \qq^{-2}-1)/(t-1)$, henceforth
\begin{equation*}
\begin{split}
& 1-\qq^2 \xx(t)\geq \frac{\qq^2-1}{t-1}\quad \Rightarrow \quad t\, \big(1-\qq^2 \xx(t)\big)\geq \qq^2\, \big(1-\xx(t)\big)\\
& \Rightarrow t\, \big(1-\qq^2 \xx(t)\big)\big(1+\gamma\, \xx(t)\big)\geq \big(1-\xx(t)\big)\big(\qq^2+\gamma\, \qq^2\, \xx(t)\big)\geq 
\big(1-\xx(t)\big)\big(1+\gamma\, \qq^2\, \xx(t)\big)\\
\end{split}
\end{equation*}
which implies $L\geq 1$, as wanted. Similarly, from the fact that $\al(1)-\al(\sigma)\geq 1-\sigma$, hence $\al(\sigma)\leq \mu+1-\sigma$, 
we deduce that the value of $\xx(t)$ for any acceptable $\al(\sigma)$ is bounded from below by its value for 
$\al(\sigma)=\mu+1-\sigma$, namely $\xx(t)\geq (t-\qq^{2\mu})/(t\, \qq^2-\qq^{2\mu})$, henceforth
\begin{equation*}
\begin{split}
& 1-\qq^2 \xx(t)\leq\qq^{2\mu} \frac{1-\qq^{-2}}{t-\qq^{2\mu-2}}\quad \Rightarrow \quad t\, \big(1-\qq^2 \xx(t)\big)\leq\qq^{2\mu}\, \big(1-\xx(t)\big)\\
& \Rightarrow t\, \big(1-\qq^2 \xx(t)\big)\big(1+\gamma\, \xx(t)\big)\leq\qq^{2\mu} \big(1-\xx(t)\big)\big(1+\gamma\, \xx(t)\big)\leq
\qq^{2\mu} \big(1-\xx(t)\big)\big(1+\gamma\, \qq^2\, \xx(t)\big)\\
\end{split}
\end{equation*}
which implies $L\leq\qq^{2\mu}$. A similar argument holds for $\qq<1$ and $t\leq \qq^{2\mu}$, implying now $ \qq^2 \leq K\leq 1$, $\max(L,R) \leq F\leq 1$, $ \qq^{2\mu}\leq L\leq 1$
and $R\leq 1$. For $\qq<1$, we have $t\leq \qq^{2\mu}$ and we must distinguish the case where $t\geq 0$ (in which case $0\leq \xx(t)\leq 1$ as $\xx(t)$ increases from $0$ to $1$
when $t$ decreases from $\qq^{2\mu}$ to $0$) and the case $t<0$ (in which case $\xx(t)>1$ as $\xx(t)$ increases from $1$ to $\qq^{-2}$ with increasing $|t|$). In particular, the quantity $t/(1-\xx(t))$
appearing in some of the ratios \eqref{ratios} above, and the combination $(1-\qq^2\, \xx(t))$ appearing in $L$ are always positive. This again proves that $\kappa$, $\phi$, $\xi$ and $\rho$ are in the announced range, provided now $t\leq \qq^{2\mu}$.

\end{proof}

\section{Tangent Method and Arctic curve II}
\label{sec:arctictwo}

\subsection{The family of tangent curves}
\label{sec:family}

The arctic curve for our NILP is tangent to the family of geodesics passing through the modified endpoints $(0,n (1+\rho))$ and the associated most likely exit points $(n\xi,n)$ for a varying $\rho$. Theorem~\ref{thm:likely} gives the relation 
between $\rho$ and $\xi$ in a parametric form with parameter $t$, leading to a family of tangent geodesics 
parametrized by $t$. 
We have the following:
\begin{thm}\label{thm:geofamily}
The family of geodesics through the points $(0,n (1+\rho(t)))$ (modified endpoint) and $(n\xi(t),n)$ (most likely exit point from the Aztec rectangle) reads, in cartesian coordinates $(x,y_t(x))$ in the scaling limit:
\begin{equation}\label{family}
\begin{split}
Y_t(X)&=\frac{1}{\xx(t)}\ \frac{t\, \big(1+\gamma\, \xx(t)\big)-\big(1-\xx(t)\big)\, X}{t\, \big(1+\gamma\, \xx(t)\big)+\gamma\, \big(1-\xx(t)\big)\,  X} \\
&\!\!\!\!\!\!\!\!\!\!\!\!\!\!\!\!\!\!\hbox{where}\ \ Y_t(X)=\qq^{2y_t(x)}\ , \ \ {and} \ \  X=\qq^{2x}
\end{split}
\end{equation}
for \, $\xx(t)$ as in \eqref{defxt}. We get the same expression for the tangent geodesics, 
whether $\qq>1$ or $\qq<1$: only the range of 
$t$ differs, with $t\in [\qq^{2\mu},\infty[$ for $\qq>1$ and $t\in ]-\infty, \qq^{2\mu}]$ for $\qq<1$. 
\end{thm}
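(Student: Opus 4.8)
\emph{Plan.} The escape portion of the outermost path runs from the most likely exit point $(n\xi,n)$ of Theorem~\ref{thm:likely} to the modified endpoint $(0,n(1+\rho))$. Since the area weight $q^{\mathcal A(p)}$ of a path depends only on the $x$-coordinates of its vertices, it is unchanged by a global vertical shift; hence this geodesic is the geodesic of Theorem~\ref{thm:geod} joining $(\xi,0)$ to $(0,\rho)$, translated upward by $1$. In the rescaled variables $X=\qq^{2x}$, $Y_t(X)=\qq^{2y_t(x)}$ this reads $Y_t(X)=\qq^{2}\,\widetilde Y(X)$, where $\widetilde Y$ is obtained from \eqref{eq:eqgeod} by setting $U=\qq^{2\xi}=L(t)$ and $V=\qq^{2\rho}=R(t)$, with $L(t),R(t)$ as in the parametric solution \eqref{KFLR} of the saddle-point system of Theorem~\ref{thm:likely}. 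One could substitute \eqref{KFLR} directly into \eqref{eq:eqgeod} and simplify; I would instead go through the normal form of a geodesic, which is lighter.

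\emph{Main step.} From the integration of the geodesic ODE preceding Theorem~\ref{thm:geod}, every geodesic is a fractional-linear map $Y(X)=Y_{0}\,\frac{(1+\gamma)X-2\gamma X_{0}}{(1+\gamma)X+2X_{0}}$, and conversely any such map solves the geodesic equation once $X_{0}$ lies in the admissible branch ($X_{0}>\frac{1+\gamma}{2\gamma}\max(1,U)$ for $\qq>1$, and the analogous range for $\qq<1$). A short computation shows the right-hand side of \eqref{family} is exactly of this form, with $Y_{0}=-\frac{1}{\gamma\,\xx(t)}$ and $X_{0}=\frac{(1+\gamma)\,t\,(1+\gamma\,\xx(t))}{2\gamma\,(1-\xx(t))}$. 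Inserting $L(t)$ from \eqref{KFLR} for $X$ and simplifying then gives $Y_t(L(t))=\qq^{2}$, i.e.\ $y_t(\xi)=1$, and $Y_t(1)=\qq^{2}R(t)$, i.e.\ $y_t(0)=1+\rho$, so \eqref{family} runs through both the exit point and the modified endpoint. By the uniqueness of the geodesic joining two prescribed points (proved in Theorem~\ref{thm:geod} via the branch analysis), \eqref{family} must be the escape geodesic, which is the claim; its identification with the specialization of \eqref{eq:eqgeod} at $(U,V)=(L(t),R(t))$ is then automatic.

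\emph{Range of $t$ and the two branches.} What is left — and where some care is needed — is to check that the $X_{0}$ above lies in the admissible branch, which simultaneously fixes the interval of $t$. For $\qq>1$ and $t\geq\qq^{2\mu}$ one has $0\leq\xx(t)<\qq^{-2}$ (shown in the proof of Theorem~\ref{thm:likely}), whence $X_{0}=\frac{1+\gamma\qq^{2}\xx(t)}{1-\qq^{2}\xx(t)}\cdot\frac{1+\gamma}{2\gamma}\,L(t)\ \geq\ \frac{1+\gamma}{2\gamma}\,L(t)=\frac{1+\gamma}{2\gamma}\max(1,L(t))$, using $L(t)\geq1$; so $X_{0}$ is in the right branch and $t$ may range over $[\qq^{2\mu},\infty[$. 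The case $\qq<1$, $t\in\,]-\infty,\qq^{2\mu}]$, is treated analogously with $\epsilon(\qq)=-1$ and the sign information on $\xx(t)$ (in particular $t/(1-\xx(t))>0$ and $1-\qq^{2}\xx(t)>0$) recorded in that same proof. Because the sign $\epsilon(\qq)$ already appears in \eqref{eq:eqgeod}, the resulting formula \eqref{family} is literally the same for $\qq>1$ and $\qq<1$, only the interval of $t$ changing, as stated. The one genuine obstacle here is computational: either the collapse of $\epsilon(\qq)\,\delta(L(t),R(t))$ to a rational function upon substituting \eqref{KFLR} into \eqref{eq:eqgeod}, or — on the route I favor — the handful of one-line identities verifying the fractional-linear form of \eqref{family} and its two boundary values.
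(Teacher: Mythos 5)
Your proposal is correct, and it shares the paper's overall strategy --- translate the origin to $(0,n)$ (legitimate because the area weight depends only on vertical increments, as you note) and identify the escape trajectory as the geodesic of Theorem \ref{thm:geod} specialized to $(U,V)=(L(t),R(t))$ from \eqref{KFLR} --- but the decisive computation is organized differently. The paper \emph{derives} \eqref{family}: it evaluates $\delta(L(t),R(t))$ from \eqref{eq:geodtwo}, shows $\delta^2$ is the square of an explicit rational expression in $t$ and $\xx(t)$, fixes the sign of $\epsilon(\qq)\,\delta(L,R)$ via the sign of $L-1$, and substitutes into \eqref{eq:eqgeod}. You instead \emph{verify} the stated formula: the right-hand side of \eqref{family} is of the fractional-linear normal form $Y_0\,\frac{(1+\gamma)X-2\gamma X_0}{(1+\gamma)X+2X_0}$ with $Y_0=-1/(\gamma\,\xx(t))$ and $X_0=\frac{(1+\gamma)\,t\,(1+\gamma\,\xx(t))}{2\gamma(1-\xx(t))}$ (both identifications check out), it takes the value $\qq^2$ at $X=L(t)$ and $\qq^2R(t)$ at $X=1$ (both evaluations check out against \eqref{KFLR}), and $X_0$ lies in the admissible branch, so uniqueness of the geodesic through two prescribed points forces the identification. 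Your route buys a cleaner computation --- the collapse of the square root never has to be carried out --- at the price of needing the answer in advance and of making the admissibility of $X_0$ genuinely load-bearing: the quadratic for $X_0$ has a second, inadmissible root whose normal form also passes through both points, so the branch check is not optional. You rightly flag this and do it for $\qq>1$; for $\qq<1$ the required inequality $X_0>\frac{1+\gamma}{2\gamma}\max(1,L(t))=\frac{1+\gamma}{2\gamma}$ reduces to showing that $t\,(1+\gamma\,\xx(t))-(1-\xx(t))$ has the same sign as $1-\xx(t)$, which follows from $R(t)>0$ together with the sign facts recorded in the proof of Theorem \ref{thm:likely}; the step you leave to the reader there is routine rather than a gap.
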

\begin{proof}
The theorems \ref{thm:geod} and \ref{thm:likely} are all we need to construct our family of tangent curves. 
Let us move the origin to position $(0,n)$. The geodesic
passing through $(0,n (1+\rho))$ and $(n\xi,n)$ is obtained by taking the expression \eqref{eq:eqgeod} of Theorem~\ref{thm:geod} with $X\to \qq^{2x}=X$, $Y\to \qq^{2(y-1)}=Y \qq^{-2}$, 
$U\to \qq^{2\xi}=L$ and $V\to \qq^{2(1+\rho-1)}=\qq^{2\rho}=R$,
where $L$ and $R$ must be expressed as in \eqref{KFLR}, as derived in the proof of Theorem~\ref{thm:likely}.
Plugging these expressions into \eqref{eq:eqgeod}, we must use the value of $\delta(U=L,V=R)$ of \eqref{eq:geodtwo},
which satisfies:
\begin{equation*}
\delta^2(L,R)=(L-1)^2\left(\frac{t\, \big(1+\gamma \qq^2 \xx(t)\big)\big(1+\gamma\, \xx(t)\big)+\gamma\, \big(1-\xx(t)\big)\big(1- \qq^2 \xx(t)\big)}{
\qq^2\, \xx(t)\big(t\, \big(1+\gamma\, \xx(t)\big)+\gamma\, \big(1- \xx(t)\big)\big)}\right)^2
\end{equation*}
where the term in the second square is always positive (since in particular, as we have already seen, $t/(1-\xx(t))$ and $(1-\qq^2\, \xx(t))$ are positive).
Since $(L-1)$ has the sign $\epsilon(\qq)$ (i.e is positive iff $\qq>1$), we deduce
\begin{equation*}
\epsilon(\qq)\, \delta(L,R)=(L-1)\, \frac{t\, \big(1+\gamma \qq^2 \xx(t)\big)\big(1+\gamma\, \xx(t)\big)+\gamma\, \big(1-\xx(t)\big)\big(1- \qq^2 \xx(t)\big)}{
\qq^2\, \xx(t)\big(t\, \big(1+\gamma\, \xx(t)\big)+\gamma\, \big(1- \xx(t)\big)\big)}
\end{equation*}
and \eqref{family} follows from substituting this into \eqref{eq:eqgeod}, whereas the ranges of $t$ follow from the discussion in the proof of Theorem \ref{thm:likely}.
\end{proof}

\begin{remark}
\label{rmkfamily}
Let us consider the limit $\qq\to1$ of the result of Theorem \ref{thm:geofamily}.
Setting $t=\qq^{2\tau}$ and expanding the above equation at first order in $\qq-1$ yields the parametric family:
\begin{equation*}
\begin{split}
y_\tau(x)&=\frac{\big(1-\xx(\tau)\big)\big(1+\gamma\, \xx(\tau)\big)}{\xx(\tau)(1+\gamma)}\ (\tau-x)\\
\xx(\tau)&:= {\rm e}^{\textstyle{-\int_0^1d\sigma \frac{1}{\tau-\al(\sigma)}}} \\
\end{split}
\end{equation*}
with $\tau\in [\mu,+\infty[$, and with a slight abuse of notation $\xx(\tau)$ and $y_\tau(x)$ for the $\qq\to 1$ limits 
as well. We see that the tangent curves are straight lines in this limit.
\end{remark}

\begin{remark}
In the limit $\gamma\to 0$, the result of Theorem \ref{thm:geofamily} reduces to:
\begin{equation*}
Y_t(X)= \frac{t-\big(1-\xx(t)\big)\, X}{t\,\xx(t)} \\
\end{equation*}
with $\xx(t)$ as in \eqref{defxt}.
This matches the result of \cite{DFG2} (Section 5.1).
\end{remark}

\subsection{The Arctic curve and its properties}
\label{sec:arcurve}
The family of tangent curves \eqref{family} may alternatively be characterized by their equation in the $(X,Y)$ plane:
\begin{equation}
\mathcal{F}_t(X,Y):= \xx(t)\big(t\, \big(1+\gamma\, \xx(t)\big)+\gamma\, \big(1-\xx(t)\big)\,  X\big)Y-t\, \big(1+\gamma\, \xx(t)\big)+\big(1-\xx(t)\big)\, X =0\ .
\label{eq:Ft}
\end{equation}
The core of the Tangent Method is that the arctic curve is precisely the envelope of this family of tangent curves, hence it may be obtained as the solution of 
\begin{equation}\label{envelope}
\mathcal{F}_t(X,Y)=\frac{\partial}{\partial t}\mathcal{F}_t(X,Y)=0.
\end{equation} 
Solving these equations in $X$ and $Y$ yields the arctic curve in parametric form $(X(t),Y(t))$:
\begin{thm}
\label{thmac}
The arctic curve for the domino tiling of the Aztec rectangle with defects reads in parametric form:
\begin{equation*}
\begin{split}
X(t)&=\qq^{2x(t)}=t-\frac{(1+\gamma) \bigg(\xx(t)\, \big(1-\xx(t)\big)\, \big(1+\gamma\, \xx(t)\big)+t\ \xx'(t)\, \big(1+\gamma\  \xx(t)^2\big)
 \bigg)-\epsilon(\qq)\, \omega(t)}{2 \gamma \big(1-\xx(t)\big)^2 \xx'(t)}\\
 Y(t)&=\qq^{2y(t)}=\frac{\epsilon(\qq)\, \omega(t)-(1-\gamma)\, \xx(t)\, \big(1-\xx(t)\big) \big(1+\gamma\, \xx(t)\big)-(1+\gamma)\, t\ \xx'(t)\, \big(1-\gamma\, \xx(t)^2\big)}
 {2 \gamma\,  \xx(t)^2\, \bigg(\big(1-\xx(t)\big) \big(1+\gamma\, \xx(t)\big)+(1+\gamma)\,t\ \xx'(t)\bigg)}
 \\
\omega(t)&= \Bigg(
\bigg\{ (1+\gamma)\, \xx(t)\, \big(1-\xx(t)\big)\big(1+\gamma\, \xx(t)\big)+t\ \xx'(t)\, \big(\big(1+\gamma\xx(t)\big)^2
-\gamma\, \big(1-\xx(t)\big)^2\big)\bigg\}^2 \\
&
\quad\quad \quad\quad\qquad\qquad\qquad\qquad\qquad\qquad\qquad\qquad+4 
\gamma \bigg\{ t\  \xx'(t)\, \big(1+\gamma\, \xx(t)\big)\big(1-\xx(t)\big)\bigg\}^2
\Bigg)^{1/2}\\
   \end{split}
\end{equation*}
for \, $\xx(t)$ as in \eqref{defxt} and, as before, with \, $\epsilon(\qq)=1$ if \, $\qq>1$ and \, $\epsilon(\qq)=-1$ if \, $\qq<1$. 
\end{thm}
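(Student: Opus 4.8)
The plan is to carry out the envelope computation \eqref{envelope} for the one-parameter family of tangent curves \eqref{eq:Ft} directly, exploiting the fact that $\mathcal{F}_t(X,Y)$ is \emph{affine in each of $X$ and $Y$ separately}. Writing $\xx=\xx(t)$, equation \eqref{eq:Ft} has the bilinear form $\mathcal{F}_t(X,Y)=A(t)\,XY+B(t)\,Y+C(t)\,X+D(t)$ with $A=\gamma\,\xx(1-\xx)$, $B=t\,\xx(1+\gamma\xx)$, $C=1-\xx$, $D=-t(1+\gamma\xx)$. Since both $\mathcal{F}_t$ and $\partial_t\mathcal{F}_t$ are affine in $Y$, I would solve the system \eqref{envelope} by eliminating $Y$: from $\mathcal{F}_t=0$ one gets $Y=-(CX+D)/(AX+B)$, which is exactly the family \eqref{family} (a reassuring internal consistency check), and inserting this into $\partial_t\mathcal{F}_t=(A'X+B')Y+(C'X+D')=0$ collapses the system to the single quadratic
\begin{equation*}
a(t)\,X^2+b(t)\,X+c(t)=0,\qquad a=CA'-C'A,\quad b=CB'+DA'-C'B-D'A,\quad c=DB'-D'B .
\end{equation*}

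Next I would evaluate $a,b,c$ in terms of $\xx$ and $\xx'=\xx'(t)$, which thanks to \eqref{defxt} we may keep symbolic. One finds that $a$ and $c$ are purely proportional to $\xx'$, with the compact forms $a=\gamma(1-\xx)^2\xx'$ and $c=-t^2(1+\gamma\xx)^2\xx'$, while grouping $b$ into its $\xx'$-free and $\xx'$-linear parts gives
\begin{equation*}
b=(1+\gamma)\,\xx(1-\xx)(1+\gamma\xx)+t\,\xx'\big((1+\gamma\xx)^2-\gamma(1-\xx)^2\big)=N(t)-2t\,a(t),
\end{equation*}
where $N(t)$ is the bracket appearing in the statement of $X(t)$. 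Hence $X(t)=\big(-b\pm\sqrt{b^2-4ac}\big)/(2a)=t-\big(N\mp\sqrt{b^2-4ac}\big)/(2a)$, and I would then check, using the elementary identity $(1+\gamma\xx)^2-\gamma(1-\xx)^2=(1-\gamma)(1-\gamma\xx^2)+4\gamma\xx$ to match terms, that $b^2-4ac=\omega(t)^2$ with $\omega$ exactly as written in the statement. This yields $X(t)$ up to the choice of square-root branch.

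To pin the branch I would argue that, since the curves $\mathcal{F}_t=0$ are smooth conics, the solution of \eqref{envelope} is the genuine envelope, and that the relevant root is the one for which the tangency point $(X(t),Y(t))$ lies in the physical region; this forces the sign $+\epsilon(\qq)\,\omega(t)$. The cleanest way to see this is a monotonicity and positivity check in the spirit of the proof of Theorem~\ref{thm:likely}, cross-checked against the $\qq\to1$ limit of Remark~\ref{rmkfamily} and against the branch already fixed for the underlying geodesics in Theorem~\ref{thm:geod}. Finally I would obtain $Y(t)$ by substituting $X(t)$ back into $Y=-(CX+D)/(AX+B)$ and rationalizing away the $\omega(t)$ in the denominator; simplifying the result gives the stated $Y(t)$.

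The main obstacle is the algebraic bookkeeping of these last two steps: verifying that $b^2-4ac$ collapses to the displayed form of $\omega(t)^2$ (the intermediate expressions are bulky until the identity above is applied), and carrying out the rationalization and simplification of $Y(t)$, where numerator and denominator each initially carry a factor involving $\omega(t)$ that must be cleared. A secondary, lighter point requiring care is a fully rigorous justification of the branch selection.
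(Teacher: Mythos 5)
Your proposal is correct and follows essentially the same route as the paper: eliminating $Y$ from the system \eqref{envelope} to obtain a quadratic in $X$ whose discriminant is $\omega(t)^2$ (your identity $b^2-4ac=\omega^2$ is immediate since the first brace of $\omega$ is exactly $b$ and $-4ac$ is exactly the second term), then selecting the root via positivity of $X=\qq^{2x}$ and back-substituting for $Y$. The paper's branch argument is just the cleaner observation that $a$ and $c$ have opposite signs, so exactly one root is positive, and the sign $\epsilon(\qq)$ in front of $\omega$ is forced by the sign of $\xx'(t)$, which agrees with what you sketch.
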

\begin{proof}
Eliminating $Y$ between the two equations of \eqref{envelope}, we obtain the quadratic equation:
\begin{eqnarray*}
&t^2\,\big(1+\gamma\, \xx(t)\big)^2\xx'(t) -X\, \bigg( (1+\gamma)\, \xx(t)\, \big(1-\xx(t)\big)\big(1+\gamma\, \xx(t)\big)
+t\ \xx'(t)\big(\big(1+\gamma\, \xx(t)\big)^2-\gamma\, \big(1-\xx(t)\big)^2\big)\bigg)\\
&\qquad\qquad\qquad\qquad\qquad\quad\qquad\qquad\qquad\qquad\qquad\qquad\qquad\qquad-X^2 \gamma\, \big(1-\xx(t)\big)^2\, \xx'(t)=0\ .
\end{eqnarray*}
Note that the constant and $X^2$ coefficients have opposite signs, hence we must pick the unique solution
with $X>0$. The corresponding manifestly positive discriminant reads:
\begin{eqnarray*}&\omega(t)^2=
\bigg\{ (1+\gamma)\, \xx(t)\, \big(1-\xx(t)\big)\big(1+\gamma\, \xx(t)\big)+t\ \xx'(t)\, \big(\big(1+\gamma\xx(t)\big)^2
-\gamma\, \big(1-\xx(t)\big)^2\big)\bigg\}^2\\
&\quad\quad \quad\quad\qquad\qquad\qquad\qquad\qquad\qquad\qquad\qquad+4 
\gamma \bigg\{ t\, \big(1+\gamma\, \xx(t)\big)\big(1-\xx(t)\big)\, \xx'(t)\bigg\}^2\ .
\end{eqnarray*}
To get a positive $X$, we must take in front of $\omega(t)$ the sign $\epsilon(\qq)=+1$ if  $\qq>1$ and  $\epsilon(\qq)=-1$ if  $\qq<1$
to ensure that the global prefactor $\epsilon(\qq)/(2\gamma(1-\xx(t))^2\xx'(t))$ of  $\omega(t)$ in $X(t)$ is positive, 
where $\xx'(t)={\rm Log}(\qq^2)\, \xx(t) \int_0^1 d\sigma \frac{\qq^{\al(\sigma)}}{(t-\qq^{\al(\sigma)})^2}$ has the sign of ${\rm Log}(\qq)$.

 \end{proof}
 
 \begin{remark}\label{qonerem}
Let us consider the limit $\qq\to1$ of the result of Theorem \ref{thmac}.
Setting as before $t=\qq^{2\tau}$ and expanding the above equation at first order in $\qq-1$ yields the parametric equation
of the arctic curve:
\begin{equation}\label{aczero}
\begin{split}
x(\tau)& =\tau-\frac{\xx(\tau)\, \big(1-\xx(\tau)\big)\big(1+\gamma\, \xx(\tau)\big)}{\big(1+\gamma\, \xx(\tau)^2\big)\, \xx'(\tau)}\\
y(\tau)&=\frac{\big(1-\xx(\tau)\big)^2\big(1+\gamma\, \xx(\tau)\big)^2}{(1+\gamma)\big(1+\gamma\, \xx(\tau)^2\big)\, \xx'(\tau)}\\
\end{split}
\end{equation}
with $\xx(\tau)$ as in Remark~\ref{rmkfamily}. This result is in agreement with the expression found in \cite{BuKni} (see Eq.~(5.3)) 
in some particular case (see Section~\ref{sec:summary} for a detailed discussion).
\end{remark}

\begin{remark}
In the limit $\gamma\to 0$, the result of Theorem \ref{thmac} reduces to:
\begin{equation*}
\begin{split}
X(t)&=\frac{t^2\ \xx'(t)}{\xx(t)\, \big(1-\xx(t)\big)+t\ \xx'(t)}\\
 Y(t)&=\frac{ \big(1-\xx(t)\big)+t\ \xx'(t)}{\xx(t)\, \big(1-\xx(t)\big)+t\ \xx'(t)}\\
\end{split}
\end{equation*}
with $\xx(t)$ as in \eqref{defxt}.
This matches the result of \cite{DFG2} (Theorem 1.1).
\end{remark}

\begin{figure}
\begin{center}
\includegraphics[width=8cm]{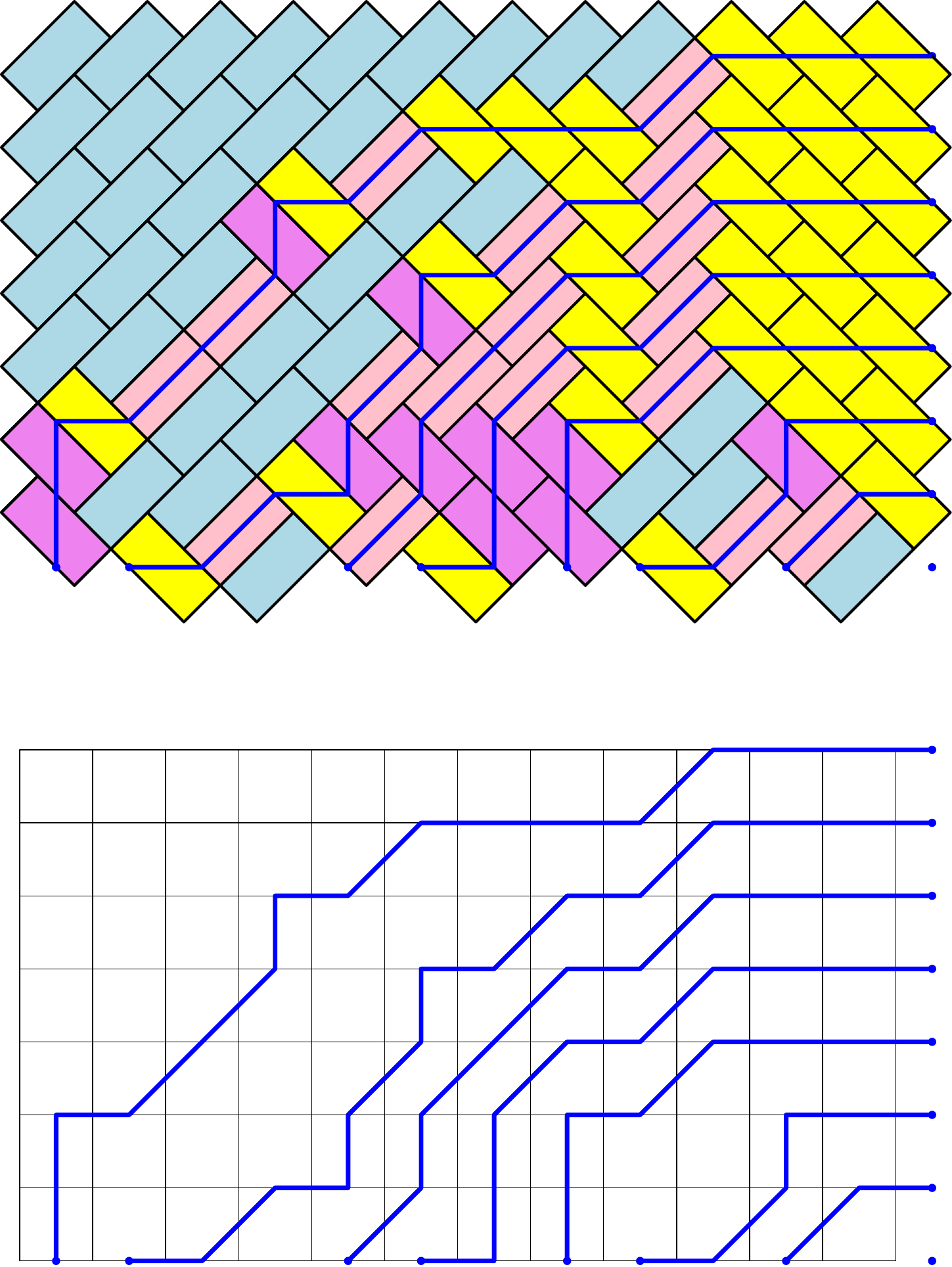}
\end{center}
\caption{\small{A alternative description of the tiling configuration of Fig.~\ref{fig:NILP} by a Non-intersecting Lattice Path configuration. The correspondence between 
elementary path steps and dominos is obtained from that of \eqref{dominos} by a left-right symmetry.}}
\label{fig:otherNILP}
\end{figure}
 
In Theorem \ref{thmac}, it was, so far, implicitly assumed that
$t$ runs over the range $[\qq^{2\mu},+\infty[$ for $\qq>1$ and over the range $]-\infty,\qq^{2\mu}]$ for $\qq<1$, with $\mu=\al(1)$.
This is indeed the domain in which the Tangent Method approach that we described is valid. 
As explained in \cite{DFGUI} and \cite{DFG2}, this limited range of $t$ gives only one portion of the arctic curve. To get all the portions,
we may consider other families of NILP which are (bijectively) equivalent to the original tiling problem and let the Tangent Method machinery act on these new 
NILP. For instance, the set of NILP described in Fig.~\ref{fig:otherNILP} provides an equivalent description of the tiling problem amenable to the Tangent Method approach.
As discussed in details in \cite{DFGUI} and \cite{DFG2}, using this new set of NILP allows to extend the range of validity of Theorem \ref{thmac} by allowing $t$ to now span the larger domain
$]-\infty,1]\cup[\qq^{2\mu},+\infty[$ if $\qq>1$ or $]-\infty,\qq^{2\mu}]\cup[1,+\infty[$ if $\qq<1$. More generally, the result of the previous studies in \cite{DFGUI}, \cite{DFG2} and \cite{DR}
is that \emph{the range of $t$ for which Theorem \ref{thmac} is valid is that for which for which $\xx(t)$ is real}. Letting $t$ vary in this range produces the entire arctic curve which
is in general made of several, possibly disconnected, portions (corresponding to disconnected allowed intervals for $t$). 
Recall that the asymptotic distribution $\al(\sigma)$ of path starting points is a strictly increasing function from $\al(0)=0$ to $\al(1)=\mu$ (with moreover $\al'(\sigma)\geq 1$ when defined). For a 
generic such distribution, the ranges of $t$ for which $\xx(t)$ is real is simply $]-\infty,\qq^{2\al(0)}]\cup[\qq^{2\al(1)},+\infty[$ for $\qq>1$ and $]-\infty,\qq^{2\al(1)}]\cup[\qq^{2\al(0)},+\infty[$ for $\qq<1$,
corresponding to the extended range just discussed. The corresponding arctic curve is thus made of two portions which meet at some limiting point, tangentially to 
the N-boundary and corresponding to $t\to \pm \infty$. Indeed, for $t\to \pm \infty$, $\xx(t)\to \qq^{-2}$ and $\xx'(t)\sim B\, \qq^{-2}/t^2$ with 
$B=2{\rm Log}(\qq)\, \int_0^1\qq^{2\al(\sigma)}d\sigma$,
and the expressions in Theorem~\ref{thmac} lead to $Y\to \qq^2$ (i.e. $y\to 1$) while $X\to B \frac{\qq^2(\gamma+\qq^2)}{(\qq^2-1)(1+\gamma)}$, and it is easily checked that 
the tangent curve has slope $0$ at this point. 

\subsection{Freezing boundaries}
\label{sec:freezing}
As discussed in detail in \cite{DFGUI}, \cite{DFG2} and \cite{DR}, extra domains where $\xx(t)$ is real may appear for particular distributions 
$\al(\sigma)$ corresponding to the following two situations, referred to as {\emph``freezing boundaries"} since the new created portions of arctic curve delimitate frozen domains
which are adjacent to the S boundary:
\begin{itemize}
\item{The function $\al(\sigma)$ presents some \emph{``gaps"} i.e.\ has discontinuities at some values $\sigma=\sigma_p\in [0,1]$ (i.e.\ $\al(\sigma_p^+)-\al(\sigma_p^-)=\delta_p>0$,
corresponding to a region of the S boundary of linear size $n\, \delta_p$ free of starting points, i.e.\ filled with defects).
Then the function $\xx(t)$, as given by \eqref{defxt}, remains well defined and real positive
for $t\in [\qq^{2\al(\sigma_p^-)},\qq^{2\al(\sigma_p^+)}]$ if $\qq>1$ 
(respectively $t\in [\qq^{2\al(\sigma_p^+)},\qq^{2\al(\sigma_p^-)}]$ if $\qq<1$). 
This may be seen by writing
\begin{equation}
\xx(t)=\qq^{-2t\textstyle{ \int_0^{\sigma_p^-} \frac{d\sigma}{t-q^{2\al(\sigma)}}}}\ \qq^{-2t\textstyle{ \int_{\sigma_p^+}^1 \frac{d\sigma}{t-q^{2\al(\sigma)}}}}
\label{eq:xtgap}
\end{equation}
}with, assuming say $\qq>1$,  a first and second factor well defined and real positive\footnote{Here we allow for convenience the limiting possibility $\xx(t)\to +\infty$.} 
respectively for $t\in]-\infty,1]\cup[\qq^{2\al(\sigma_p^-)},+\infty[$ and
for $t\in]-\infty,\qq^{\al(\sigma_p^+)}]\cup[\qq^{2\al(1)},+\infty[$ so that $\xx(t)$ is well defined and real positive for $t\in]-\infty,1]\cup[\qq^{2\al(\sigma_p^-)},\qq^{2\al(\sigma_p^+)}]\cup[\qq^{2\al(1)},+\infty[$.
The new (middle) interval of $t$ then creates a new portion of arctic curve via the parametric expression of Theorem~\ref{thmac}.

\item{The function $\al(\sigma)$ presents some \emph{``minimal slope"} intervals, i.e.\ satisfies $\al'(\sigma)=1$ for $\sigma\in [\tau_p,\tau_{p+1}]$ for some $\tau_p$'s in $[0,1]$
(corresponding to a linear portion of the S boundary of length $n(\tau_{p+1}-\tau_p)$ without defect). Then the function $\xx(t)$ 
remains well defined (by analytic continuation) and is now real \emph{negative} for $t\in [\qq^{2\al(\tau_p)},\qq^{2\al(\tau_{p+1})}]$ if $\qq>1$ (respectively 
$t\in [\qq^{2\al(\tau_{p+1})},\qq^{2\al(\tau_p)}]$ if $\qq<1$).
This may be seen by writing
\begin{equation}
\begin{split}
\xx(t)&=\qq^{-2t\textstyle{ \int_0^{\tau_p} \frac{d\sigma}{t-q^{2\al(\sigma)}}}}\ \qq^{-2t\textstyle{ \int_{\tau_p}^{\tau_{p+1}} \frac{d\sigma}{t-q^{2(\al(\tau_p)+\sigma-\tau_p)}}}}\ \qq^{-2t\textstyle{ \int_{\tau_{p+1}}^1 \frac{d\sigma}{t-q^{2\al(\sigma)}}}}\\
&=\qq^{-2t\textstyle{ \int_0^{\tau_p} \frac{d\sigma}{t-q^{2\al(\sigma)}}}}\ \frac{\qq^{2\al(\tau_p)}\, \left(t-\qq^{2\al(\tau_{p+1})}\right)}{\qq^{2\al(\tau_{p+1})}\, \left(t-\qq^{2\al(\tau_{p})}\right)}\ \qq^{-2t\textstyle{ \int_{\tau_{p+1}}^1 \frac{d\sigma}{t-q^{2\al(\sigma)}}}}\\
\end{split}
\label{eq:xtfilled}
\end{equation}
(where we used $\al(\tau_{p+1})-\al(\tau_p)=\tau_{p+1}-\tau_p$). Assuming say $\qq>1$,  the left and right factors are well defined and real positive respectively for $t\in]-\infty,1]\cup[\qq^{2\al(\tau_p)},+\infty[$ and
for $t\in]-\infty,\qq^{\al(\tau_{p+1})}]\cup[\qq^{2\al(1)},+\infty[$ and the middle factor is well defined for all values\footnote{Again we allow for convenience the limiting possibility $\xx(t)\to \pm\infty$.}
of $t$ so that $\xx(t)$ is well defined and real for $t\in]-\infty,1]\cup[\qq^{2\al(\tau_p)},\qq^{2\al(\tau_{p+1})}]\cup[\qq^{2\al(1)},+\infty[$.
The new (middle) interval of $t$ then creates a new portion of arctic curve via the expression of Theorem~\ref{thmac}. Note that $\xx(t)$ is positive for 
$t\in]-\infty,1]\cup[\qq^{2\al(1)},+\infty[$ but negative for $t\in[\qq^{2\al(\tau_p)},\qq^{2\al(\tau_{p+1})}]$.
}
\end{itemize}
At this stage, let us make the following remark: from the expression \eqref{eq:Ft} for the family of tangent curves, we deduce at $Y=1$ the identity
\begin{equation*}
\mathcal{F}_t(X,1)=\big(X-t\big)\big(1-\xx(t)\big)\big(1+\gamma\, \xx(t)\big) \ .
\end{equation*}
Therefore, if we may find a \emph{finite} (and strictly positive) value of $t$ such that either $\xx(t)=1$ or $\xx(t)=-1/\gamma$, then, upon differentiating $\mathcal{F}_t(X,1)$ with respect
to $t$, we deduce that both $\mathcal{F}_t(X,1)$ and $\frac{\partial \mathcal{F}_t}{\partial t}(X,1)$ vanish at $X=t$, hence the point $\big({\rm Log}(t)/{\rm Log}(\qq^2),0\big)$ lies on the
arctic curve. Moreover, since $\frac{\partial \mathcal{F}_t}{\partial X}(X,1)$ also vanishes, the tangent curve, hence the arctic curve itself, has a slope $0$ at this point. In other words, 
we have the following property:
\begin{prop}
\label{proptang}
The arctic curve is tangent to the horizontal axis at the point $\big({\rm Log}(t)/{\rm Log(\qq^2)},0\big)$ for any \emph{finite} (and strictly positive) value of $t$ such that either\, $\xx(t)=1$ or\, $\xx(t)=-1/\gamma$.
\end{prop}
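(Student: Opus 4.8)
The plan is to exploit the explicit factorized form of the tangent family at height one. From~\eqref{eq:Ft}, substituting $Y=1$ gives
\[
\mathcal{F}_t(X,1)=\big(X-t\big)\big(1-\xx(t)\big)\big(1+\gamma\, \xx(t)\big),
\]
since the coefficient of $Y$ is $\xx(t)\big(t(1+\gamma\xx(t))+\gamma(1-\xx(t))X\big)$ and the constant term is $-t(1+\gamma\xx(t))+(1-\xx(t))X$, and one checks directly that their sum collapses to the product above. This is the key structural input: the line $Y=1$ meets each tangent curve at the single point $X=t$, except precisely when $\xx(t)=1$ or $\xx(t)=-1/\gamma$, where the factorization shows $\mathcal{F}_t(X,1)\equiv 0$ identically in $X$.

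Next I would apply the envelope characterization~\eqref{envelope}. Fix a finite strictly positive $t=t_\ast$ with $\xx(t_\ast)=1$ or $\xx(t_\ast)=-1/\gamma$. Then $\mathcal{F}_{t_\ast}(X,1)=0$ for all $X$; in particular $\mathcal{F}_{t_\ast}(t_\ast,1)=0$. Differentiating $\mathcal{F}_t(X,Y)$ with respect to $t$ and then evaluating at $(X,Y)=(t_\ast,1)$: because $\mathcal{F}_t(X,1)=\big(X-t\big)g(t)$ with $g(t)=(1-\xx(t))(1+\gamma\xx(t))$ and $g(t_\ast)=0$, we have $\partial_t\mathcal{F}_t(X,1)=-g(t)+(X-t)g'(t)$, which at $t=t_\ast$, $X=t_\ast$ equals $-g(t_\ast)+0=0$. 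Hence both defining equations of the envelope hold at $(t_\ast,1)$, so the point with $X=t_\ast$, $Y=1$ lies on the arctic curve. Translating back to cartesian coordinates via $X=\qq^{2x}$, $Y=\qq^{2y}$ this is the point $\big(\mathrm{Log}(t_\ast)/\mathrm{Log}(\qq^2),0\big)$.

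Finally, to establish tangency to the horizontal axis, I would note that the tangent curve $\mathcal{F}_{t_\ast}(X,Y)=0$ passing through this envelope point has, by construction, the same tangent line there as the arctic curve (that is the content of the envelope property, provided the curve is smooth at the point, i.e.\ $\nabla\mathcal{F}_{t_\ast}\neq 0$ there). So it suffices to show the tangent curve $\mathcal{F}_{t_\ast}(X,Y)=0$ is horizontal at $(t_\ast,1)$, equivalently that $\partial_X\mathcal{F}_{t_\ast}(t_\ast,1)=0$ while $\partial_Y\mathcal{F}_{t_\ast}(t_\ast,1)\neq 0$. From $\mathcal{F}_{t_\ast}(X,1)\equiv 0$ in $X$ we get $\partial_X\mathcal{F}_{t_\ast}(t_\ast,1)=0$ immediately; and $\partial_Y\mathcal{F}_{t_\ast}(t_\ast,1)=\xx(t_\ast)\big(t_\ast(1+\gamma\xx(t_\ast))+\gamma(1-\xx(t_\ast))t_\ast\big)=\xx(t_\ast)t_\ast(1+\gamma)\neq 0$ since $t_\ast>0$ and (in either case $\xx(t_\ast)=1$ or $-1/\gamma$) $\xx(t_\ast)\neq 0$. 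This gives slope zero for the tangent curve, hence for the arctic curve, at the point in question.

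The main obstacle is the implicit smoothness/regularity assumption: the envelope-equals-arctic-curve identification, and the inference that tangent lines agree, presuppose that the arctic curve is a genuine smooth branch near this point and that the relevant Jacobian does not degenerate, so that $(t_\ast,1)$ is not a cusp or self-intersection of the envelope. In the generic setting this is harmless—the explicit parametric formulas of Theorem~\ref{thmac} can be used to verify directly that $x(t),y(t)$ extend analytically through $t=t_\ast$ with $y(t_\ast)=0$ and $y'(t_\ast)=0$—but making this fully rigorous requires checking that $\omega(t)$ and the denominators in Theorem~\ref{thmac} stay finite and nonzero as $t\to t_\ast$, which is the one place where a little care with the explicit expressions is needed rather than pure formal manipulation.
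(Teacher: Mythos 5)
Your proposal is correct and follows essentially the same route as the paper: factorizing $\mathcal{F}_t(X,1)=(X-t)\big(1-\xx(t)\big)\big(1+\gamma\,\xx(t)\big)$, checking that both envelope equations \eqref{envelope} hold at $(X,Y)=(t_\ast,1)$ when $\xx(t_\ast)=1$ or $\xx(t_\ast)=-1/\gamma$, and deducing horizontality from $\partial_X\mathcal{F}_{t_\ast}(t_\ast,1)=0$. Your added checks ($\partial_Y\mathcal{F}_{t_\ast}(t_\ast,1)=\xx(t_\ast)\,t_\ast(1+\gamma)\neq 0$ and the regularity caveat) are slightly more careful than the paper's one-line justification but do not change the argument.
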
\
\begin{figure}
\begin{center}
\includegraphics[width=8cm]{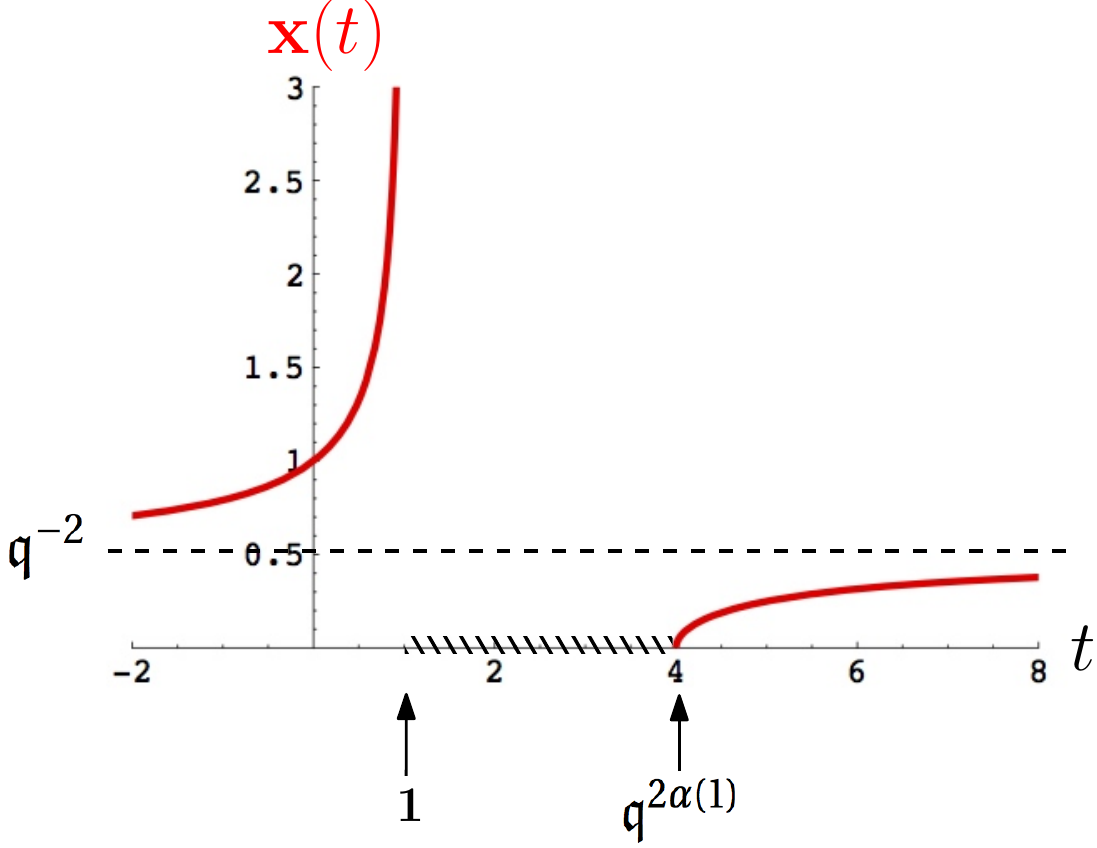}
\end{center}
\caption{\small{Variations of the function $\xx(t)$ for a generic distribution $\al(\sigma)$, here for $\qq>1$.}}
\label{fig:xt}
\end{figure}Note that such a situation never occurs for a generic distribution $\al(\sigma)$ for which, as illustrated in Fig.~\ref{fig:xt}, $\xx(t)$ remains positive 
(therefore cannot be equal to $-1/\gamma$) and
is such that $\xx(t)=1$ only for $t=0$. Indeed, for $\qq>1$, $\xx(t)$ increases from $\qq^{-2}<1$ to $1$ when $t$ increases from $-\infty$ to $0$, then increases 
from $1$ to $+\infty$ when $t$ increases from $0$ to $1$, and finally increases from $0$ to $\qq^{-2}<1$ when $t$ increases from $\qq^{2\al(1)}$ to $+\infty$ (recall that
$\xx(t)$ is generically not defined for $t\in]1,\qq^{2\al(1)}[$). As for $\qq<1$, $\xx(t)$ decreases from $\qq^{-2}>1$ to $1$ when $t$ increases from $-\infty$ to $0$, then decreases 
from $1$ to $0$ when $t$ increases from $0$ to $\qq^{2\al(1)}$, and finally decreases from $+\infty$ to $\qq^{-2}>1$ when $t$ increases from $1$ to $+\infty$ (recall that
$\xx(t)$ is generically not defined for $t\in]\qq^{2\al(1)},1[$). 

\begin{figure}
\begin{center}
\includegraphics[width=8cm]{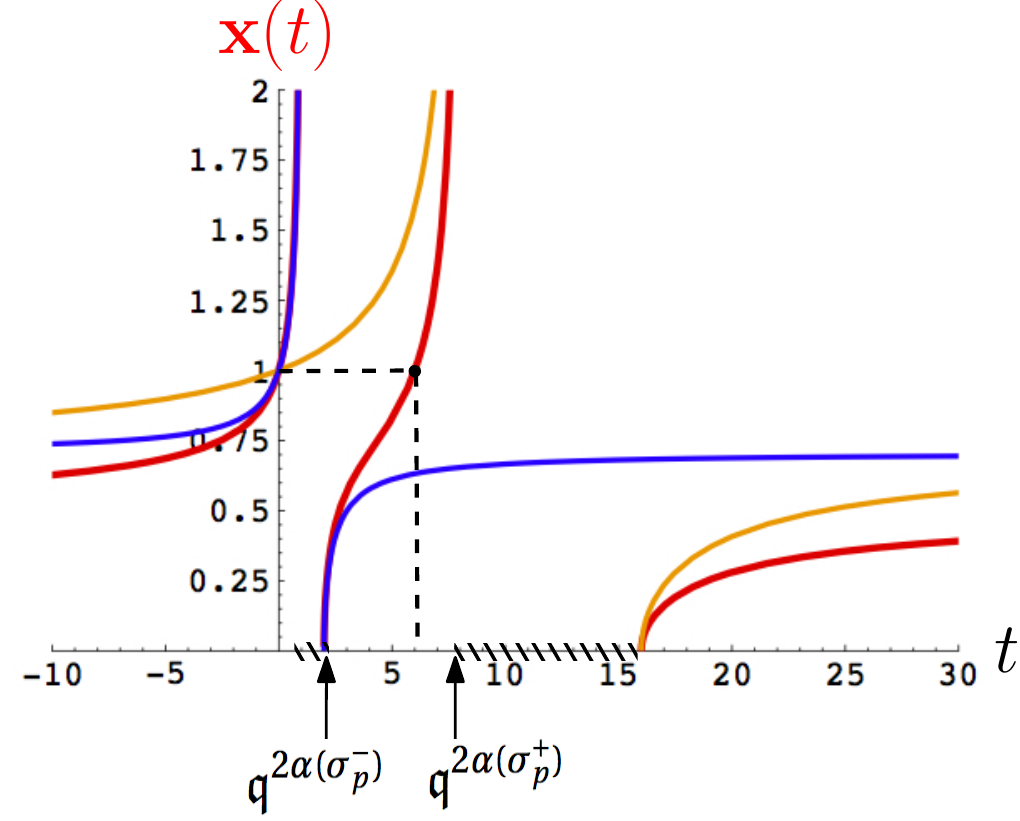}
\end{center}
\caption{\small{Variations of the function $\xx(t)$ of \eqref{eq:xtgap} for a distribution $\al(\sigma)$ presenting
a single gap, and for $\qq>1$. The first contribution to $\xx(t)$ (in blue) increases from $0$ to $\qq^{-2}$
for $t$ increasing from $\qq^{2\al(\sigma_p^-)}$ to $+\infty$. The second contribution (in orange) increases from $\qq^{-2}$ to $+\infty$ 
for $t$ increasing from $-\infty$ to $\qq^{2\al(\sigma_p^+)}$. The function $\xx(t)$ ) (in bold red) therefore increases continuously from $0$ to $+\infty$ when $t$ increases 
from $\qq^{2\al(\sigma_p^-)}$ to $\qq^{2\al(\sigma_p^+)}$, hence passes through the value $\xx(t)=1$ for some finite $t\in [\qq^{2\al(\sigma_p^-)},\qq^{2\al(\sigma_p^+)}]$.}}
\label{fig:xtone}
\end{figure}
The possibility of finding a finite $t$ for which $\xx(t)=1$ is however encountered in the above-described case of a distribution $\al(\sigma)$ presenting
a gap: indeed, as illustrated in Fig.~\ref{fig:xtone} for $\qq>1$, since the first contribution to $\xx(t)$ in \eqref{eq:xtgap} increases from $0$ to $\qq^{-2}$
for $t$ increasing from $\qq^{2\al(\sigma_p^-)}$ to $+\infty$ (assuming for simplicity a single gap) while the second contribution to $\xx(t)$  increases from $\qq^{-2}$ to $+\infty$ 
for $t$ increasing from $-\infty$ to $\qq^{2\al(\sigma_p^+)}$, then $\xx(t)$ increases continuously from $0$ to $+\infty$ when $t$ increases from $\qq^{2\al(\sigma_p^-)}$ to $\qq^{2\al(\sigma_p^+)}$,
hence passes through the value $\xx(t)=1$ for some finite $t\in [\qq^{2\al(\sigma_p^-)},\qq^{2\al(\sigma_p^+)}]$, leading to a tangency point
with $x\in[\al(\sigma_p^-),\al(\sigma_p^+)]$. A similar scenario holds for $\qq<1$.

\begin{figure}
\begin{center}
\includegraphics[width=8cm]{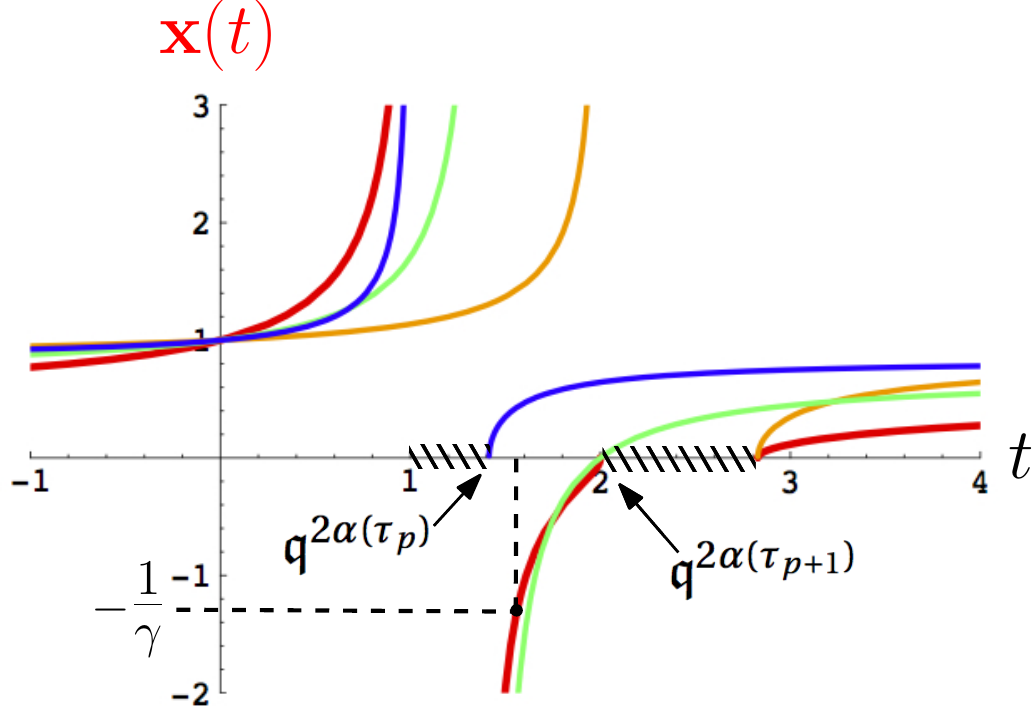}
\end{center}
\caption{\small{Variations of the function $\xx(t)$ of \eqref{eq:xtfilled} for a distribution $\al(\sigma)$ presenting
a single minimal slope interval, and for $\qq>1$. The function $\xx(t)$ (in bold red) is the product of a first (in blue), a second (in green) and third (in orange)
contributions and runs from $-\infty$ to $0$ for $t\in [\qq^{2\al(\tau_p)},\qq^{2\al(\tau_{p+1})}]$, hence 
 passes through the value $\xx(t)=-1/\gamma$ for some finite $t$ in this interval.}}
\label{fig:xtgamma}
\end{figure}
The case $\xx(t)=-1/\gamma$ occurs for a distribution $\al(\sigma)$ presenting now a minimal slope interval, as illustrated in Fig.~\ref{fig:xtgamma} for $\qq>1$.
Looking now at the expression \eqref{eq:xtfilled} for $\xx(t)$, the left contribution increases from $0$ to $\qq^{-2}$
for $t$ increasing from $\qq^{2\al(\tau_p)}$ to $+\infty$ (assuming for simplicity a single minimal slope interval) while the right contribution to $\xx(t)$  increases from $\qq^{-2}$ to $+\infty$ 
for $t$ increasing from $-\infty$ to $\qq^{2\al(\tau_{p+1})}$, so that their product increases from $0$ to $+\infty$ when $t$ increases from $\qq^{2\al(\tau_p)}$ to $\qq^{2\al(\tau_{p+1})}$.
A closer look at the integrals shows that the approach to $0$ for $t\to \qq^{2\al(\tau_p)}$ is of the form $(t-\qq^{2\al(\tau_p)})^{1/\al'(\tau_p^+)}$ and the approach to $+\infty$ for $t\to \qq^{2\al(\tau_{p+1})}$ 
is of the form $(\qq^{2\al(\tau_{p+1})}-t)^{-1/\al'(\tau_{p+1}^-)}$. This product must be multiplied by the middle contribution in \eqref{eq:xtfilled} which is \emph{negative} and runs from $-\infty$ to $0$ in this interval,
with a divergence of the form $-(t-\qq^{2\al(\tau_p)})^{-1}$ and a vanishing of the form $-(\qq^{2\al(\tau_{p+1})}-t)$. Since $\al(\tau_p^+)>1$ and $\al(\tau_{p+1}^-)>1$ (we assume that the
minimal slope interval does not extend outside the interval $[\qq^{2\al(\tau_p)},\qq^{2\al(\tau_{p+1})}]$), the function $\xx(t)$ runs from $-\infty$ to $0$ in this interval, 
hence passes through the value $\xx(t)=-1/\gamma$ for some finite $t\in [\qq^{2\al(\tau_p)},\qq^{2\al(\tau_{p+1})}]$, leading to a tangency point
with $x\in[\al(\tau_p),\al(\tau_{p+1})]$.

\begin{remark}
\label{qonetgrmk} We have the following $\qq=1$ equivalent of Proposition \ref{proptang} (obtained by setting $t=\qq^{2\tau}$ and letting $\qq\to 1$):

\begin{itemize}
\item[]{the arctic curve for $\qq=1$ is tangent to the horizontal axis at the point $(\tau,0)$ for any \emph{finite} (and strictly positive) value of\, $\tau$ such that either\, $\xx(\tau)=1$ or\, $\xx(\tau)=-1/\gamma$,
with $\xx(\tau)$ as in Remark~\ref{rmkfamily}.}
\end{itemize}

\noindent The above discussion extends straightforwardly and allows to follow the variation of the function $\xx(\tau)$ in a generic case or in the presence of a gap or a minimal slope interval
in the distribution $\al(\sigma)$. Similarly, the situation with $\xx(\tau)=1$ $(\tau>0)$ is encountered in the presence of a gap while that with $\xx(\tau)=-1/\gamma$ is encountered in the presence
of a minimal slope interval.
\end{remark}

\section{Examples}
\label{sec:examples}
This section is devoted to the exposition of a number of examples of arctic curves for various boundary conditions encoded in the distribution 
$\al(\sigma)$ of path starting points. We display in particular the deformation of the arctic curve for varying values of $\qq$ and $\gamma$
and its limiting shape for $\qq$ and $\gamma$ getting small or large. As already mentioned, the arctic curve decomposes into various portions 
corresponding to various domains of the variable $t$. We will distinguish three categories of starting point distributions:
\begin{itemize}
\item{A generic case, with $\al(\sigma)$ continuous and  $\al'(\sigma)>1$. The variable $t$ then spans two semi-infinite interval and the arctic curve 
is made of two portions meeting at some limiting point with tangent $y=1$ (corresponding to $t\to \pm \infty$).} 
\item{A case with \emph{freezing boundaries}, i.e.\  with either minimal slope intervals on which $\al'(\sigma)=1$ (intervals of the S-boundary with no defect) or 
with gaps i.e.\ discontinuities of $\al(\sigma)$ (intervals of the S-boundary filled with defects) interspersed with generic portions with $\al'(\sigma)>1$. The variable $t$ then spans some extra finite intervals, thus adding new portions to the arctic curve corresponding to the creation of frozen domains adjacent to the S-boundary.}
\item{A case of \emph{fully frozen boundaries} with an alternation of freezing boundaries of both types (minimal slope intervals and gaps)
with no generic interval in-between. This includes in particular the case of an S-boundary 
with no defect, corresponding to the original Aztec Diamond tiling problem.}    
\end{itemize}

\subsection{Generic case}
For illustration, we discuss here the simplest case of a generic distribution of starting points by choosing $\al(\sigma)=2\sigma$ corresponding for instance to
a regular pattern of defects, with a defect at every second point of the S-boundary. We have in this case
\begin{equation*}
\xx(t)=\frac{1}{\qq^2}\, \sqrt{\frac{t-\qq^4}{t-1}}\ .
\end{equation*}
\begin{figure}
\begin{center}
\includegraphics[width=8cm]{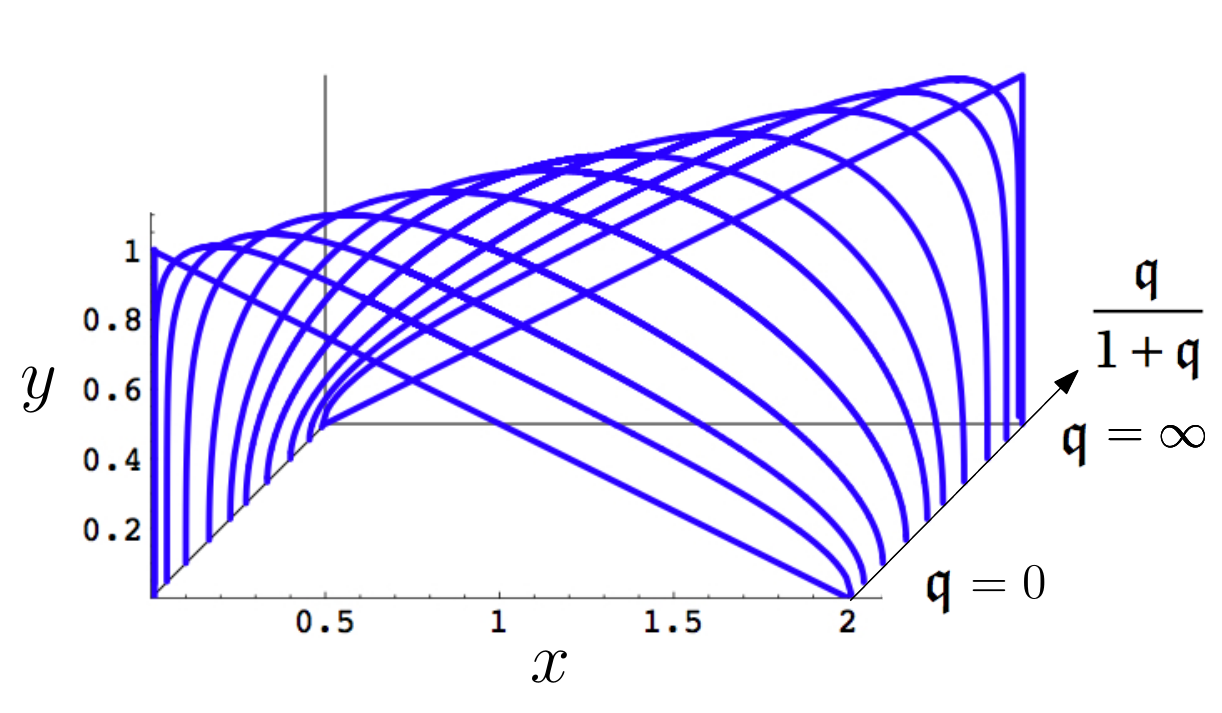}
\end{center}
\caption{\small{The variation of the arctic curve for $\al(\sigma)=2\sigma$ at fixed $\gamma=1$ as a function of $\qq\in [0,+\infty[$. The curve for a given $\qq$ is displayed in
the vertical plane of depth $\qq/(1+\qq)$.}}
\label{fig:genericq}
\end{figure}
Fig.~\ref{fig:genericq} displays the variation of the arctic curve for $\gamma=1$ and a value of $\qq$ ranging from $0$ to $+\infty$. For $\qq=0$, 
the arctic curve degenerates into a broken line made of a vertical line segment from $(0,0)$ to $(0,1)$ and a line segment with slope $-1/2$ from $(0,1)$ to $(2,0)$.
This corresponds to the unique NILP configuration with minimal area, where each path alternates between horizontal and diagonal steps. The above 
segment with slope $-1/2$ is the limit of the region occupied by these paths. Similarly, for $\qq\to+\infty$, the arctic curve is made of a vertical line segment from $(2,0)$ to $(2,1)$ 
and a segment with slope $1/2$ from $(0,0)$ to $(2,1)$. This now corresponds to the unique NILP configuration with maximal area, where the $i$-th path (from the bottom) 
is made of $i$ vertical steps followed by $2i$ horizontal steps. The segment with slope $1/2$ is the locus of the change from vertical to horizontal.
The manifest symmetry relating small and large $\qq$ is a consequence of a more general left-right symmetry discussed in Section \ref{sec:symsec}.

\begin{figure}
\begin{center}
\includegraphics[width=8cm]{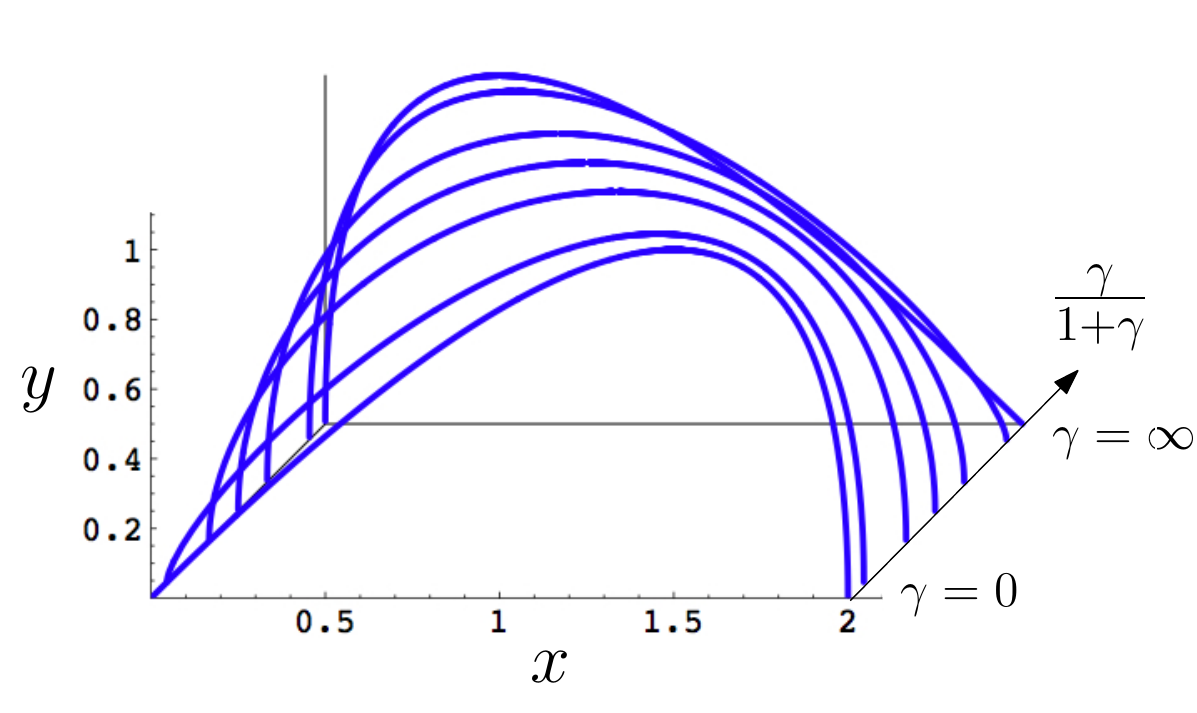}
\end{center}
\caption{\small{The variation of the arctic curve for $\al(\sigma)=2\sigma$ at fixed $\qq=1$ as a function of $\gamma\in [0,+\infty[$. The curve for a given $\gamma$ is displayed in
the vertical plane of depth $\gamma/(1+\gamma)$.}}
\label{fig:genericgamma}
\end{figure}
Fig.~\ref{fig:genericgamma} displays the variation of the arctic curve for $\qq=1$ and a varying $\gamma$ in the range $[0,+\infty[$. Again we have a manifest symmetry relating small and large $\gamma$.
The arctic curve for $\gamma=0$ or $+\infty$ is not degenerate. As already mentioned, the case $\gamma=0$ corresponds to that studied in \cite{DFGUI} in connection with rhombus tilings with defects
and the arctic curve for $\al(\sigma)=2\sigma$ is a portion of parabola, as first obtained in \cite{DFLAP}. The arctic curve for $\gamma=+\infty$ is the reflected piece of parabola under $x\to 2-x$. Note also that the arctic curve for $\gamma=1$ is the semi-circle $x^2+y^2-2x=0$
of radius $1$ centered in $(1,0)$. For generic $\gamma$, the curve has a maximum at $(x,y)=\left(\frac{3+\gamma}{2(1+\gamma)},1\right)$ with a horizontal tangent.

\subsection{Freezing boundaries}
We now address the case of a freezing boundary and start with a distribution $\al(\sigma)$ presenting a unique minimal slope interval for $\sigma\in [1/3,2/3]$. For illustration,
we choose a distribution 
$\al(\sigma)=2\sigma$ for $\sigma\in[0,1/3]$, $\al(\sigma)=\sigma+1/3$ for $\sigma\in[1/3,2/3]$  and $\al(\sigma)=2\sigma-1/3$ for $\sigma\in[2/3,1]$. This leads to
\begin{equation*}
\xx(t)=\frac{1}{\qq^2}\, \sqrt{\frac{t-\qq^{4/3}}{t-1}}\ \frac{t-\qq^{2}}{t-\qq^{4/3}}\ \sqrt{\frac{t-\qq^{10/3}}{t-\qq^2}}\ .
\end{equation*}
For $\qq\to 1$, the corresponding function $\xx(\tau)$ reads
\begin{equation*}
\xx(\tau)=\sqrt{\frac{\tau-2/3}{\tau}}\ \frac{\tau-1}{\tau-2/3}\ \sqrt{\frac{\tau-5/3}{\tau-1}}
\end{equation*}
which is real negative in the new interval $\tau\in [2/3,1]$. 
\begin{figure}
\begin{center}
\includegraphics[width=14cm]{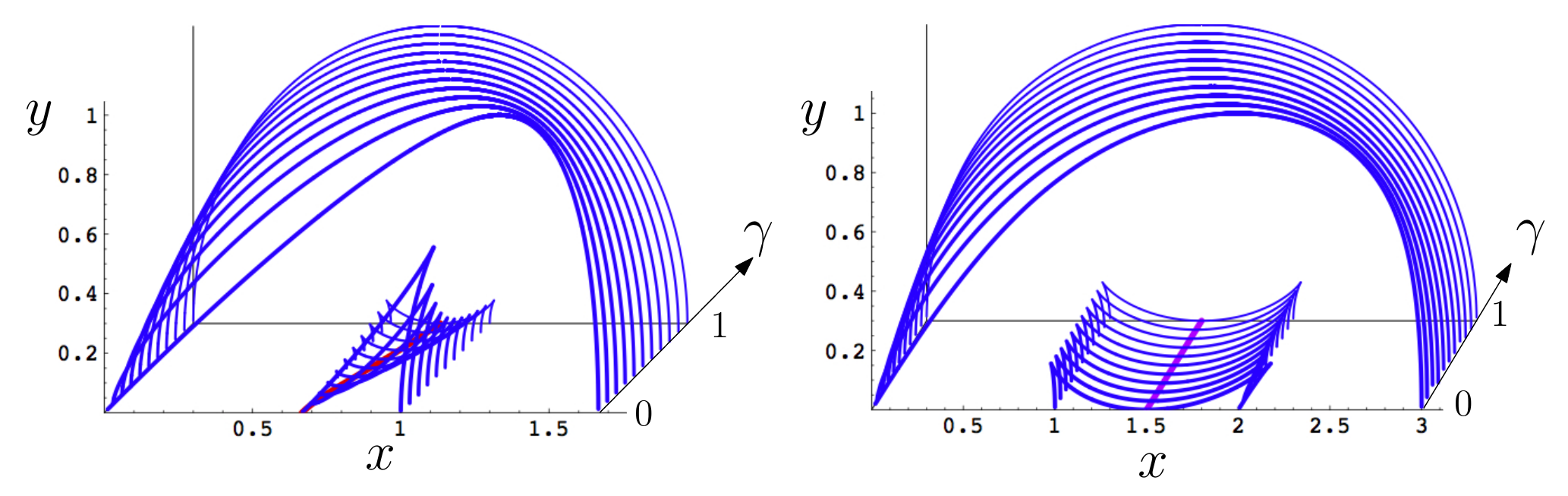}
\end{center}
\caption{\small{The variation of the arctic curve for $\qq=1$ as a function of $\gamma\in[0,1]$ in two situations with a freezing boundary. Left: the
case of a single minimal slope interval between two generic domains (see text for the value of $\al(\sigma)$). Right: the
case of a gap between two generic domains. In both cases, a new portion of arctic curve appears with two cusps and a tangency point at $x=\tau$ such that 
$\xx(\tau)=-1/\gamma$} (left) or $\xx(\tau)=1$ (right). We have indicated the loci of tangency points by thick red/purple curves.}
\label{fig:gapfilledgamma}
\end{figure}
As displayed in Fig.~\ref{fig:gapfilledgamma} (left), this creates a new portion of arctic curve with two cusps and a tangency point at $x=\tau$ such that $\xx(\tau)=-1/\gamma$,
in agreement with Remark~\ref{qonetgrmk}.

We display similarly in Fig.~\ref{fig:gapfilledgamma} (right) the arctic curves for a freezing boundary corresponding now to a gap in $\al(\sigma)$.
The corresponding function $\xx(\tau)$ reads
\begin{equation*} 
\xx(\tau)=\sqrt{\frac{(\tau-1)(\tau-3)}{\tau(\tau-2)}}
\end{equation*}
which is real positive in the new interval $\tau\in [1,2]$. 
This now creates a new portion of arctic curve with two cusps and a tangency point at $x=\tau$ such that $\xx(\tau)=1$,
in agreement with Remark~\ref{qonetgrmk}. Note that, as opposed to the previous case, the $x$-coordinate of the tangency
point (here $\tau=3/2$) is independent of $\gamma$. 

\begin{figure}
\begin{center}
\includegraphics[width=14cm]{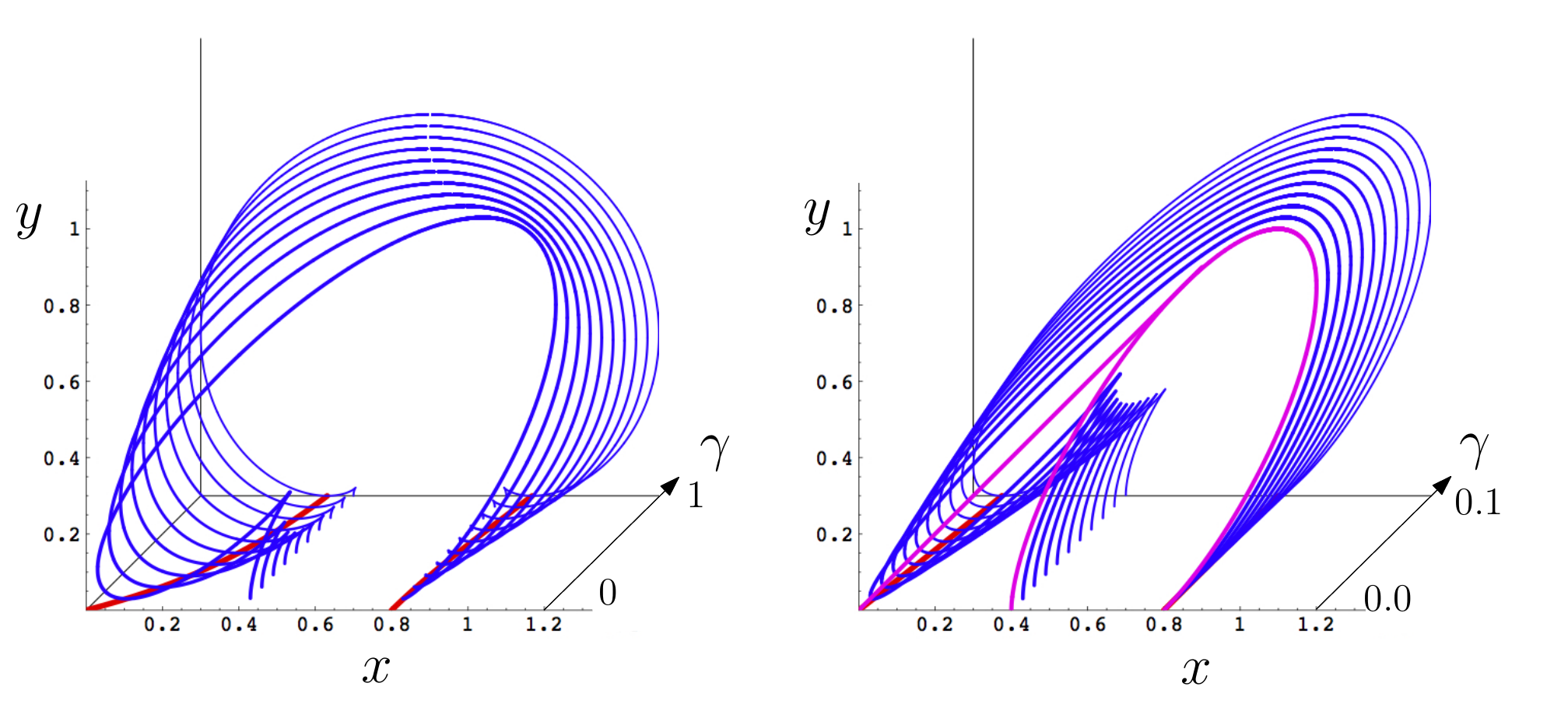}
\end{center}
\caption{\small{The variation of the arctic curve for $\qq=1$ as a function of $\gamma\in[0,1]$ in the case of a freezing boundary with 
two minimal slope intervals separated by a generic domain (see text for the value of $\al(\sigma)$). Two extra portions of arctic curve are created, 
each with a cusp and tangent to the $x$-axis at $x=\tau$ solution of \ $\xx(\tau)=-1/\gamma$. The locus of the two solutions of this equation is 
indicated by the two thick red curves. The left figure shows the variation for $\gamma\in[0.1,1]$ and the right one for $\gamma\in [0,0.1]$ in order to 
emphasize the approach to the $\gamma=0$ degenerate limit indicated in magenta.}}
\label{fig:onetwoone}
\end{figure}
Finally, we present a slightly more involved example of freezing boundaries with two minimal slope intervals separated by a generic portion, namely
$\al(\sigma)=\sigma$ for $\sigma\in[0,2/5]$, $\al(\sigma)=2\sigma-2/5$ for $\sigma\in[2/5,3/5]$ and $\al(\sigma)=\sigma+1/5$ for $\sigma\in[3/5,1]$.
The corresponding function $\xx(\tau)$ reads
\begin{equation*}
\xx(\tau)=\frac{\tau-2/5}{\tau}\, \sqrt{\frac{\tau-4/5}{\tau-2/5}}\  \frac{\tau-6/5}{\tau-4/5}
\end{equation*}
which is real negative in the intervals $[0,2/5]$ and $[4/5,6/5]$. This creates two portions of arctic curve, each with a cusp, which are tangent 
to the $x$-axis at $x=\tau$ for the two solutions of $\xx(\tau)=-1/\gamma$, one in each of the above intervals (see Fig.~\ref{fig:onetwoone}).
For $\gamma\to 0$, the left cusp increases while the right one disappears. Simultaneously, the left tangency point reaches the value $x=0$
while the right one reaches $x=4/5$. At $\gamma=0$, the arctic curve degenerates into an algebraic curve of degree $6$ and a line segment 
$y=x$ for $x\in[0,3\sqrt{2}/5]$ tangent to the algebraic curve at their contact point $(3\sqrt{2}/5,3\sqrt{2}/5)$. This is due to the fact that, at $\gamma=0$, the first maximal slope interval induces
a macroscopic domain where all the paths have only vertical steps. The boundary of this ``vertically frozen phase" appears as a line segment with slope $1$ tangent to the algebraic curve. 

\subsection{Fully frozen boundaries}
\begin{figure}
\begin{center}
\includegraphics[width=14cm]{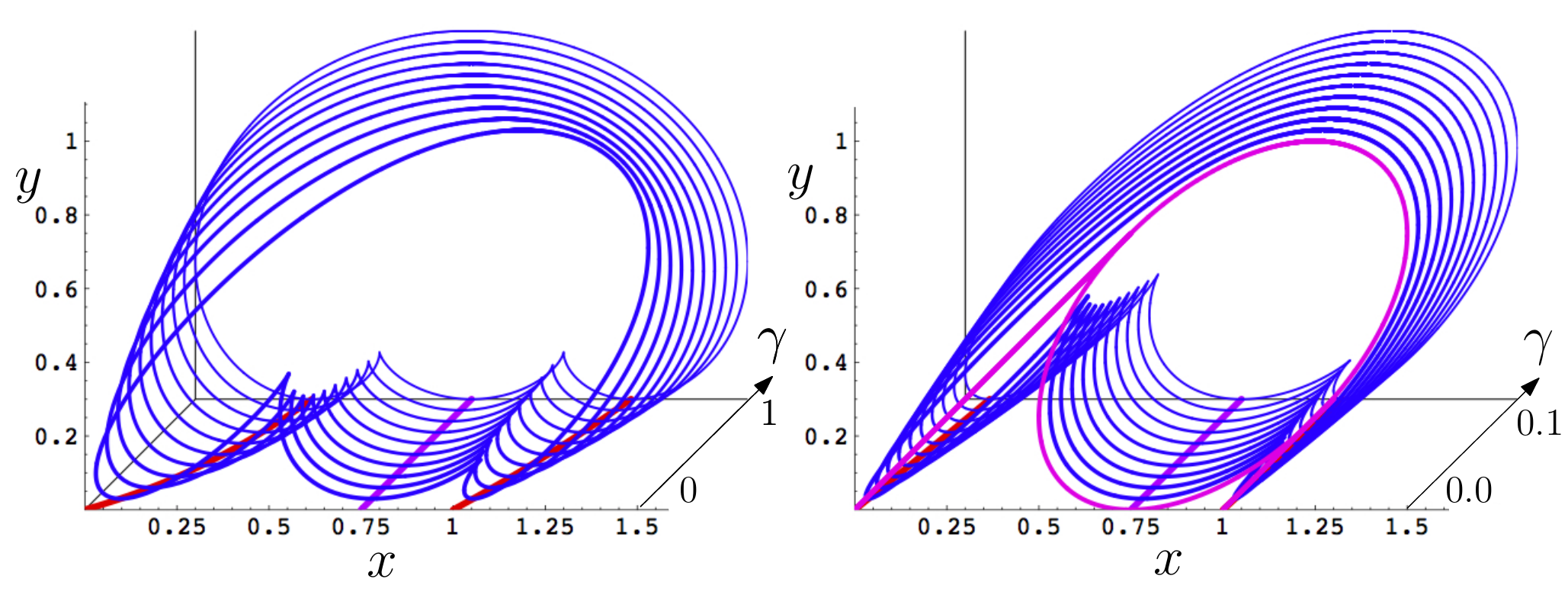}
\end{center}
\caption{\small{The variation of the arctic curve for $\qq=1$ as a function of $\gamma\in[0,1]$ in the case of a fully frozen boundary with 
two minimal slope intervals separated by a gap (see text for the value of $\al(\sigma)$). Three extra portions of arctic curve are created, 
which form the lower part of the arctic curve with three tangency points alternating with two cusps. The left and right tangency points have 
$x=\tau$ solution of \ $\xx(\tau)=-1/\gamma$ (thick red curve), while the middle tangency point has $x=\tau$ solution of \ $\xx(\tau)=1$ (thick 
purple line). The left figure shows the variation for $\gamma\in[0.1,1]$ and the right one for $\gamma\in [0,0.1]$ in order to 
emphasize the approach to the $\gamma=0$ degenerate limit indicated in magenta.}}
\label{fig:gapfilledgapgamma}
\end{figure}
We now address the case of a fully frozen boundary made of two minimal slope intervals separated by a gap. For illustration, we choose the distribution of starting
points such that $\al(\sigma)=\sigma$ for $\sigma\in [0,1/2]$ and $\al(\sigma)=\sigma+1/2$ for $\sigma\in [1/2,1]$. For $\qq=1$, this corresponds to
\begin{equation*} 
\xx(\tau)=\frac{(\tau-1/2)(\tau-3/2)}{\tau(\tau-1)}\ .
\end{equation*}
This function is defined for all values of $\tau$ as the three extra intervals $[0,1/2]\cup[1/2,1]\cup[1,3/2]=[0,3/2]$ cover the range between $0$ and $\al(1)=3/2$.
These intervals are responsible for three extra portions of arctic curve
which altogether form the lower part of the arctic curve. The latter has three tangency points separated by two cusps (see Fig.~\ref{fig:gapfilledgapgamma}).
The left and right tangency points are at $x=\tau$, where $\tau$ are the two solutions of  $\xx(\tau)=-1/\gamma$ in $[0,1/2]$ and $[1,3/2]$ respectively.
The middle tangency point has $x=\tau$  where $\xx(\tau)=1$ in $[1/2,1]$, namely $x=3/4$. For $\gamma=0$, our solution is a particular instance
of that discussed in Section 7.4 of \cite{DFG2} and describes the rhombus tiling of a hexagonal domain. It is well known that the limiting arctic curve is an ellipse,
as shown in  magenta in Fig.~\ref{fig:gapfilledgapgamma}. In the present NILP setting, the hexagonal domain is extended into a rectangle but the paths outside
of the hexagon are all frozen into vertical or horizontal segments. This is due to the fact that, at $\gamma=0$, the maximal slope intervals induce macroscopic 
domains where all the paths have only vertical steps (see Ref.~\cite{DFGUI}, Fig. 22). The boundaries of these ``vertically frozen phases" appear as two line segments with slope $1$ tangent to the ellipse.  
It is interesting to visualize in Fig.~\ref{fig:gapfilledgapgamma} how these two segments arise as $\gamma$ goes to $0$ from the closing of two outgrowths obtained by the merging of
the two cusps with the external part of the arctic curve.
 
A final example of fully frozen boundaries is given by the tiling of the Aztec Diamond, corresponding to a situation with no defect, hence $m=n$ and $a_i=i$, $i=0,\ldots,n$.
The starting point distribution is simply $\al(\sigma)=\sigma$ for $\sigma\in[0,1]$, hence corresponds to a minimal slope interval of maximal size $1$. The function $\xx(t)$
is simply
\begin{equation*}
\xx(t)=\frac{1}{\qq^2}\, \frac{t-\qq^2}{t-1}\ ,
\end{equation*}
defined for all $t$ and negative between $1$ and $\qq^2$. 

\begin{figure}
\begin{center}
\includegraphics[width=14cm]{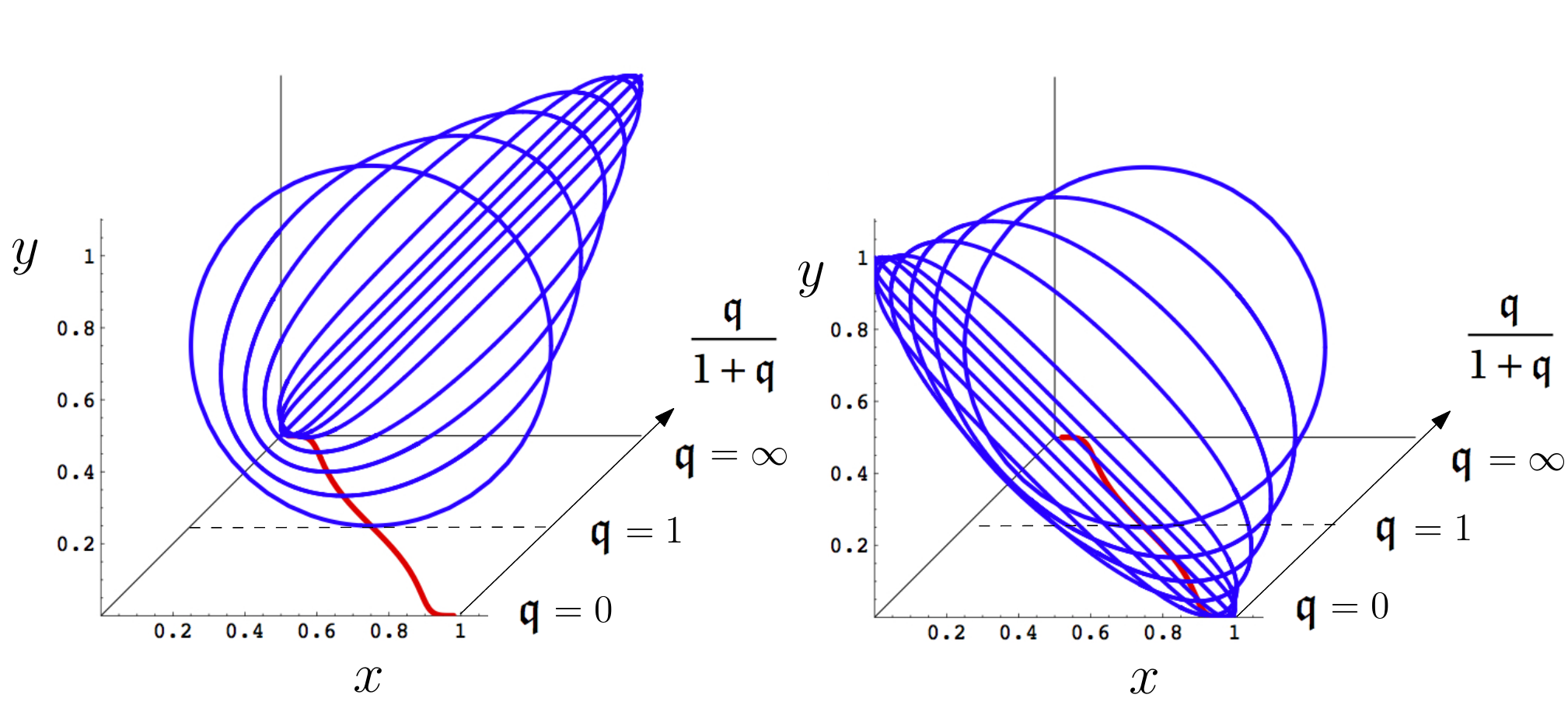}
\end{center}
\caption{\small{The variation of the arctic curve for $\al(\sigma)=\sigma$ (Aztec Diamond) at fixed $\gamma=1$ as a function of $\qq\in [1,+\infty[$ (left) and $\qq\in [0,1]$ (right). 
The curve for a given $\qq$ is displayed in
the vertical plane of depth $\qq/(1+\qq)$. At $\qq=1$, we recover the celebrated arctic circle of Ref.~\cite{JPS}. The curve is tangent to the $x$-axis at $x=\tau$ such that
$\xx(\tau)=-1/\gamma=-1$, as indicated by the thick red curve.}}
\label{fig:carreq}
\end{figure}
For $\qq=1$ and $\gamma=1$, the arctic curve is the celebrated arctic circle of Ref.~\cite{JPS}. Fig.~\ref{fig:carreq} shows the evolution of this curve for
varying $\qq$ at $\gamma=1$. The curve is tangent to the $x$-axis at position $x={\rm Log}(t)/{\rm Log}(\qq^2)$ with $t$ such that $\xx(t)=-1/\gamma=-1$,
namely 
\begin{equation*}
t=\frac{2\qq^2}{1+\qq^2}\ .
\end{equation*}
For $\qq\to 0$, the arctic curve degenerates into the line segment joining $(0,1)$ to $(1,0)$, corresponding to the unique dominant NILP configuration 
with minimal area where all the paths have only diagonal steps. The segment is then simply the limit of the domain covered by the paths. 
Similarly, for $\qq\to +\infty$, the arctic curve degenerates into the line segment joining $(0,0)$ to $(1,1)$, corresponding to the unique dominant NILP configuration 
with maximal area where the $i$-th path is made of $i$ vertical steps followed by $i$ horizontal ones. The segment is then simply the locus of the changes of slope of the paths.

\begin{figure}
\begin{center}
\includegraphics[width=14cm]{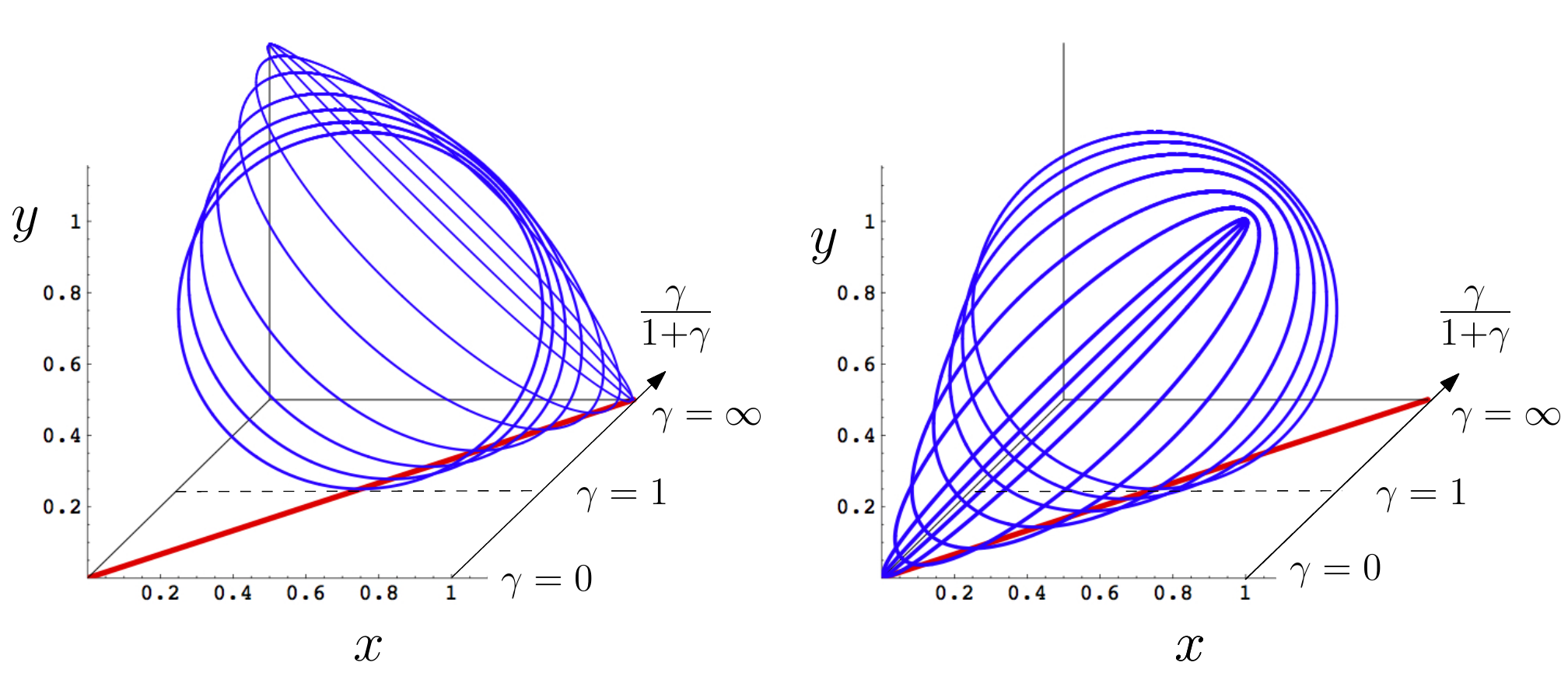}
\end{center}
\caption{\small{The variation of the arctic curve for $\al(\sigma)=\sigma$ (Aztec Diamond) at fixed $\qq=1$ as a function of $\gamma\in [1,+\infty[$ (left) and $\gamma\in [0,1]$ (right). 
The curve for a given $\gamma$ is displayed in
the vertical plane of depth $\gamma/(1+\gamma)$. At $\gamma=1$, we recover the arctic circle. The curve is tangent to the $x$-axis at $x=\tau$ such that
$\xx(\tau)=-1/\gamma$, i.e.\ $x=\gamma/(1+\gamma)$ as indicated by the thick red line.}}
\label{fig:carregamma}
\end{figure}
For $\qq=1$, the arctic curve is obtained from the function $\xx(\tau)=1-1/\tau$. Its variation with $\gamma$ is displayed in Fig~.\ref{fig:carregamma}: it is
tangent to the $x$-axis at position $x=\tau$ solution of \ $\xx(\tau)=-1/\gamma$, namely $x=\gamma/(1+\gamma)$. 
For $\gamma\to 0$, the arctic curve degenerates into the line segment joining $(0,0)$ to $(1,1)$, corresponding to the unique dominant NILP configuration with no diagonal step,
hence where the $i$-th path is made of $i$ vertical steps followed by $i$ horizontal ones.
For $\gamma\to +\infty$, the arctic curve degenerates 
into the line segment joining $(0,1)$ to $(1,0)$, corresponding again to the unique dominant NILP configuration where all the paths have only diagonal steps.

\section{Discussion and Conclusion}
\label{sec:conclusion}

\subsection{Summary and comparison to known results}
\label{sec:summary}
In this paper, we have extended previous results about the arctic phenomenon in NILP configurations with arbitrary starting points to a wider class of models
involving weighted Schr\"oder lattice paths. These are in bijection with domino tilings of Aztec rectangles with defects along one boundary. The weights incorporate two parameters
$q$ and $\gamma$ keeping track respectively of the area below the paths and of the number of steps of one particular type. The case $\gamma=0$ recovers the
weighted Rhombus tiling problem addressed in \cite{DFG2}. The changes in the asymptotic behavior of the configurations and in the shape
of the arctic curve induced by the introduction of a non-zero value of $\gamma$ are most drastic in the presence of freezing boundaries. In particular, the arctic curve 
develops new outgrowths associated with new tangency points (see Figs.~\ref{fig:onetwoone} and \ref{fig:gapfilledgapgamma}).

Like in Refs.~\cite{DFLAP,DFGUI,DFG2}, the results of the present paper are obtained by applying the Tangent Method
of \cite{COSPO}. This method is not fully rigorous, and our aim was twofold: on one hand, add evidence to the validity of the method by considering new examples for which the method reproduces known results; on the other hand 
use the method to derive new results.

The domino tiling of Aztec rectangles was considered in \cite{BuKni}, but with a different definition of defects along
the S-boundary: as opposed to our case, where each of the $m-n$ defects is created by adding an extra unit square along the S-boundary,
this paper considers $m-n$ defects created by {\em removing} unit squares from the S-boundary. The resulting NILP 
are different, as in the latter case the first step of each path can only be vertical or diagonal, whereas in our case it can also be horizontal. However, we expect this subtle difference to be irrelevant in the large size asymptotics. Indeed,
as noted in Remark \ref{qonerem}, the $\qq=1$ version of our main Theorem \ref{thmac} agrees with a result
of \cite{BuKni} in the case of {\em fully frozen boundaries}. In this case, the function $\xx(\tau)$ of Remark \ref{rmkfamily} takes the form of a rational fraction with equal number of single zeros and poles, and can be identified at $\gamma=1$ with the function
$\Pi_s(\theta)$ (see \cite{BuKni}, Theorem 5.1), namely $\Pi_s(\theta)=\xx(\tau)^{-1}$, while $\theta=\tau$. 
The parametric equation for the arctic curve \eqref{aczero} then matches Eq. (5.3) of \cite{BuKni} (modulo a misprint
$L\Pi_s\to L\Pi_s'=\Pi_s'/\Pi_s$).
For $\gamma\neq 1$, our result in the case of fully frozen boundaries matches that of \cite{BuKni}, Appendix A, Theorem 8.2, with the correspondence $q=1/\gamma$.
Let us stress that the methods employed in \cite{BuKni} are completely different and use asymptotic representation theory. This agreement is therefore highly non-trivial.

Finally, our result extends beyond the particular case of fully frozen boundaries and to arbitrary $\qq\neq 1$ as well.

\subsection{Symmetry of the arctic curve}
\label{sec:symsec}
Comparing Figs.~\ref{fig:NILP} and \ref{fig:otherNILP}, which describe \emph{the same distribution of defects} characterized in the scaling limit by the distribution $\al(\sigma)$,
we immediately see an obvious left-right symmetry in the path configuration design, under which $a_i\to {\tilde a}_i:=m-a_{n-i}=a_n-a_{n-i}$. Performing this symmetry on the paths of Fig.~\ref{fig:otherNILP} creates paths
similar to those of  Fig.~\ref{fig:NILP} but this symmetry induces the following changes: 
\begin{itemize}
\item{The new defect distribution after symmetry is characterized by the function $\tilde{\al}(\sigma):=\al(1)-\al(1-\sigma)$.}
\item{The weights are changed into $\tilde \gamma=1/\gamma$ and $\tilde{\qq}=1/\qq$ (or equivalently $\tilde q=1/q$ before rescaling).}
\end{itemize}
This is easily seen by comparing the tile-to-path dictionary for red and blue path steps:
\begin{equation}\label{leftrightarea}
\raisebox{-.6cm}{\hbox{\epsfxsize=12.cm \epsfbox{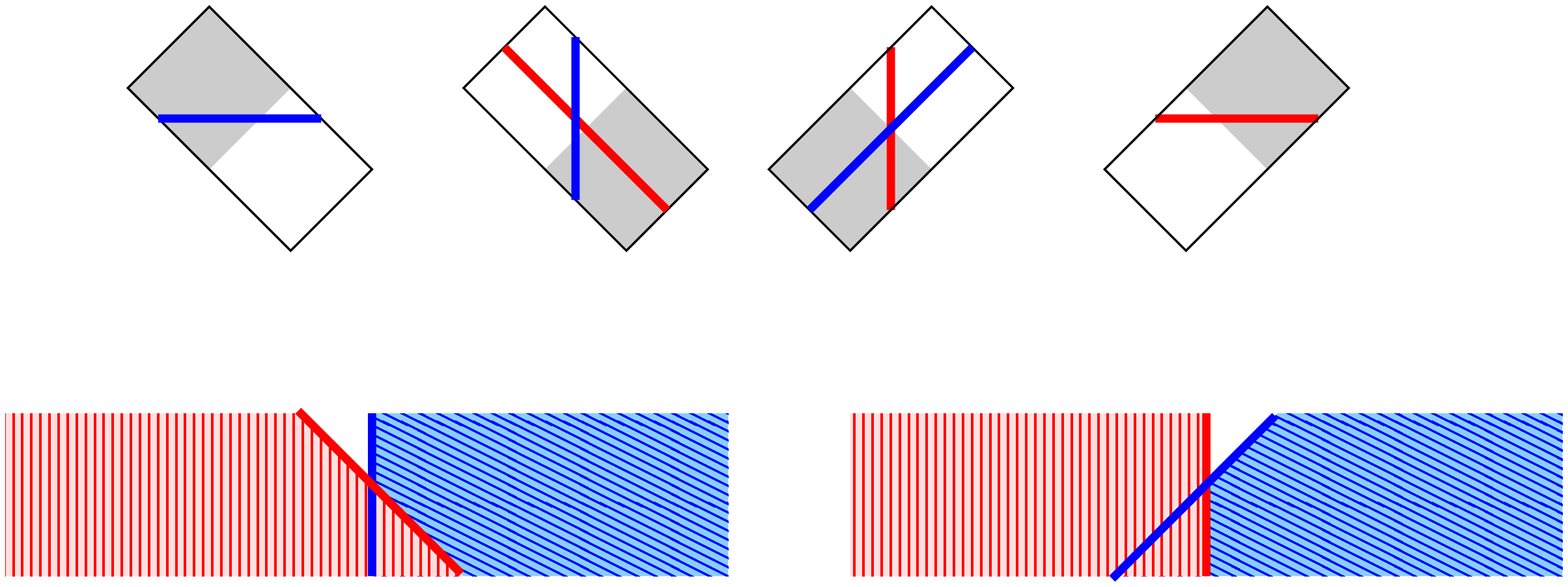}}}
\end{equation}
We  note that the total area to the left of all red paths, which may be decomposed into horizontal strips to the left of each vertical and diagonal step (see \eqref{leftrightarea} for an illustration),
plus the total area to the right of all blue paths similarly decomposed into horizontal strips to the right of the dual diagonal and vertical
steps, sum up to $(2m+1)\times n(n+1)/2$ ($=2m+1$ per vertical/diagonal step, with $i$ such contributions for the $i$-th path from the bottom,
$i=0,1,...,n$). Similarly the diagonal red steps correspond to vertical blue ones, while the total of vertical plus diagonal 
blue steps is equal to the total of vertical plus diagonal red ones, {\it i.e.} $n(n+1)/2$. As a consequence, we get
$$ Z(\{a_i\},\gamma,q)= q^{(2m+1)n(n+1)/2}\, \gamma^{n(n+1)/2}\, Z(\{{\tilde a}_i\},\tilde\gamma,\tilde q) .$$

In the continuum limit, expressing the exponential moment-generating function \eqref{defxt} as
$$\xx(\qq^{2\al(1)}\, t)=\qq^{-2t\textstyle{\int_0^1 \frac{d\sigma}{t-\qq^{2\al(\sigma)-2\al(1)}}}}
={\tilde \qq}^{2t\textstyle{\int_0^1 \frac{d \sigma}{t-{\tilde \qq}^{2\tilde{\al}(\sigma)}}}}=\frac{1}{\tilde\xx(t)}$$
allows to rewrite \eqref{eq:Ft} as:
$$ {\mathcal F}_{t}(X,Y)=-\qq^{2\al(1)}\,\gamma \, Y \,\xx(t)^2\, 
\tilde{\mathcal F}_{\tilde t}(\tilde X,\tilde Y)=0\ , \qquad \tilde t=t \qq^{-2\al(1)},\ \tilde X=X\qq^{-2\al(1)},\  \tilde Y=Y^{-1}\ .
$$
where $\tilde{\mathcal F}_{\tilde t}(\tilde X,\tilde Y)=0$ is the equation for the tangent family with weights 
$\tilde\gamma$ and $\tilde\qq$
and moment-generating function $\tilde\xx(\tilde t)$, and where we identify the arguments 
$\tilde X=\tilde\qq^{2\tilde x}$ and $\tilde Y=\tilde\qq^{2\tilde y}$, namely $\tilde x=\al(1)-x$ and
$\tilde y=y$ in terms of the new (reflected) coordinates $(\tilde x,\tilde y)$. 
The symmetry of the family of tangent curves is obviously shared by its envelope,
in summary:
\begin{prop}
The arctic curve of Theorem \ref{thmac}
is ``left-right symmetric" under the simultaneous change $(x,y,t,\gamma,\qq)\mapsto 
({\tilde x},{\tilde y},\tilde{t},\tilde \gamma,\tilde\qq)$ of coordinates and parameters, with 
$$ \tilde x=\al(1)-x,\ \ \tilde y=y,  \ \ \tilde{t}=t\, \qq^{-2\al(1)} ,\ \ \tilde \gamma=\gamma^{-1},\ \ \tilde\qq=\qq^{-1} \ .$$
\end{prop}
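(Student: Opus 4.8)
The symmetry has a combinatorial origin in the left–right flip of Figs.~\ref{fig:NILP} and \ref{fig:otherNILP} (whence the parameter changes $\tilde\gamma=1/\gamma$, $\tilde\qq=1/\qq$ and the reflected distribution $\tilde\al(\sigma)=\al(1)-\al(1-\sigma)$), but the cleanest route to the statement is to work at the level of the pencil of tangent curves \eqref{eq:Ft}, of which the arctic curve of Theorem~\ref{thmac} is the envelope. The plan has three stages: (i) determine how the moment-generating function $\xx$ transforms under the reflection; (ii) show that $\mathcal{F}_t(X,Y)$ is, up to a nowhere-vanishing factor, the equation $\tilde{\mathcal{F}}_{\tilde t}(\tilde X,\tilde Y)$ of the tangent pencil attached to the reflected data; (iii) deduce the same symmetry for the envelope.

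For step (i) I would start from the definition \eqref{defxt} of $\xx$ and perform the change of variable $\sigma\mapsto 1-\sigma$ in the integral. Writing $\tilde\xx$ for the function \eqref{defxt} built from $\tilde\al$ and from $\tilde\qq=\qq^{-1}$, this substitution gives immediately
\[
\xx\!\left(\qq^{2\al(1)}\,t\right)=\frac{1}{\tilde\xx(t)}\ ,
\]
which is the single analytic identity the whole argument rests on.

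Step (ii) is the computational heart. I would insert this relation, together with $\tilde\gamma=\gamma^{-1}$, into \eqref{eq:Ft} and verify the algebraic identity
\[
\mathcal{F}_{t}(X,Y)=-\,\qq^{2\al(1)}\,\gamma\,Y\,\xx(t)^{2}\;\tilde{\mathcal{F}}_{\tilde t}\!\left(\tilde X,\tilde Y\right)\ ,\qquad
\tilde t=t\,\qq^{-2\al(1)},\quad \tilde X=X\,\qq^{-2\al(1)},\quad \tilde Y=\frac1Y\ .
\]
Concretely: substitute $\tilde\xx(\tilde t)=1/\xx(t)$ on the right-hand side, clear the denominators coming from $1/\xx$ and from $1/Y$, and collect monomials in $X,Y$; the identity should fall out after routine cancellation. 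Since the prefactor $-\qq^{2\al(1)}\gamma\,Y\,\xx(t)^{2}$ never vanishes for admissible $Y>0$ and admissible $t$, the two curves have the same zero locus, so the coordinate change $(X,Y)\mapsto(\tilde X,\tilde Y)$, i.e.\ $(x,y)\mapsto(\al(1)-x,y)$, together with the reparametrization $t\mapsto\tilde t=t\,\qq^{-2\al(1)}$ of the pencil, maps the tangent family for $(\al,\gamma,\qq)$ onto that for $(\tilde\al,\tilde\gamma,\tilde\qq)$. Here one also checks, from $\epsilon(\qq^{-1})=-\epsilon(\qq)$, that the two branches $\qq>1$ and $\qq<1$ get exchanged, so the ranges of $t$ and $\tilde t$ in Theorem~\ref{thmac} correspond correctly.

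For step (iii), the envelope is defined by $\mathcal{F}_t=\partial_t\mathcal{F}_t=0$. Because $t\mapsto\tilde t$ is a smooth bijection of the relevant parameter intervals and $(X,Y)\mapsto(\tilde X,\tilde Y)$ is a fixed diffeomorphism, differentiating the identity of step (ii) in $t$ shows that $\partial_t\mathcal{F}_t(X,Y)=0$ at a point of $\{\mathcal{F}_t=0\}$ is equivalent to $\partial_{\tilde t}\tilde{\mathcal{F}}_{\tilde t}(\tilde X,\tilde Y)=0$ at the corresponding point, the nonvanishing prefactor and its $t$-derivative dropping out on the envelope. Hence the envelope of one pencil is carried onto the envelope of the other, which is precisely the asserted left–right symmetry. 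The main obstacle is the explicit verification of the displayed identity for $\mathcal{F}_t$; the rest is formal. A slightly delicate bookkeeping point is to make sure that the factors $1/\xx(t)$ and $1/Y$ introduced by the substitution are absorbed cleanly into the stated prefactor rather than producing spurious extra factors.
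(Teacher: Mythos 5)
Your proposal is correct and follows essentially the same route as the paper: the identity $\xx(\qq^{2\al(1)}t)=1/\tilde\xx(t)$ obtained by the substitution $\sigma\mapsto 1-\sigma$, the factorization $\mathcal{F}_t(X,Y)=-\qq^{2\al(1)}\gamma\,Y\,\xx(t)^2\,\tilde{\mathcal{F}}_{\tilde t}(\tilde X,\tilde Y)$ with the same $\tilde t,\tilde X,\tilde Y$, and the passage from the symmetry of the tangent pencil to that of its envelope. Your step (iii) merely spells out what the paper dismisses as obvious, so there is nothing substantive to add.
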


It is worth mentioning another kind of ``left-right" symmetry obeyed by the geodesics of Section \ref{sec:freetrajectory}. 
We note indeed that the algebraic equation $G_\gamma(X,Y;U,V)=0$ \eqref{geo} for the geodesics obeys the relation: 
$$ G_{\gamma^{-1}}(X^{-1},Y;U^{-1},V) =\frac{1}{U^2 X^2 \gamma}\, G_\gamma(X,Y;U,V)  .$$
This relation is unphysical, in the sense that only when $X,Y,U,V>1$ or $X,Y,U,V<1$ can the Schr\"oder paths be well-defined, and this symmetry violates these conditions.
However we believe it is of a very different (non-combinatorial) nature, having to do rather with properties
of analytic continuation of $q$-binomials and trinomials. 
This can be traced back to a symmetry property of the partition function for a single Schr\"oder path, as expressed
via Theorem \ref{polthm}, eq.~\eqref{eq:polz} which expresses the partition function for a weighted path from $(i,0)$ to $(0,j)$ as a polynomial $z_j(t)\equiv z_j(t;\gamma)$ evaluated at $t=q^{2i}$. Performing the change of summation variable $k\to j-k$ and
$s\to j+1-s$ in the numerator of the product, we easily get the symmetry relation:
$$ z_j(t;\gamma)= (-\gamma \, q \, t)^j \, z_j(q^{-2} t^{-1};\gamma^{-1})  ,$$
which allows to analytically continue $Z_{(i,0)\to (0,j)}=z_j(t=q^{2i};\gamma)$ to negative values of $i$, at the expense
of changing $\gamma\to\gamma^{-1}$. 
The map $(X,Y,U,V,\gamma)\mapsto (X^{-1},Y,U^{-1},V,\gamma^{-1})$ does not interchange the two branches
corresponding to the $\qq<1$ and $\qq>1$ geodesics.

\subsection{A geometric construction of the arctic curve for $\qq=1$}
\begin{figure}
\begin{center}
\includegraphics[width=10cm]{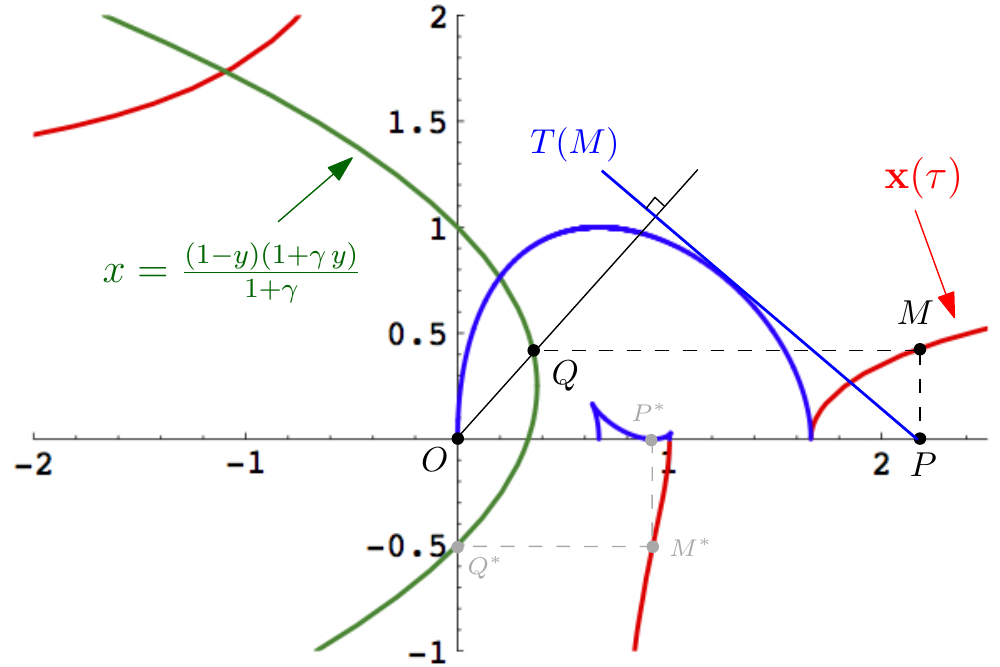}
\end{center}
\caption{\small{
The tangent line $T(M)$ (in blue) associated with the point $M:=(\tau,\xx(\tau))$ lying on the plot of the function $\xx(\tau)$ (in red) is the perpendicular 
to the line $(OQ)$ joining the origin $O$ to the horizontal projection $Q$ of $M$ on the parabola $x=\frac{(1-y)(1+\gamma\, y)}{1+\gamma}$ (in green)
passing trough the vertical projection $P$ of $M$ on the horizontal axis. Letting $M$ vary along the whole plot of the function $\xx(\tau)$ produces a family of tangent lines whose envelope 
is the entire arctic curve (curve in blue). Here we have chosen $\gamma=2$ and a function $\xx(\tau)$ corresponding to a distribution $\al(\sigma)$ with a minimal slope interval.
Note that, since the parabola hits the vertical axis at the point $Q^*$ (in gray) with $y=-1/\gamma$, the point $M^*$ such that $\xx(\tau)=-1/\gamma$ leads by the above construction 
to a line $T(M^*)$ precisely equal to the horizontal axis, hence the arctic curve is tangent to this axis at the associated point $P^*$, in agreement with Remark.~\ref{qonetgrmk}
}.}
\label{fig:wulff}
\end{figure}
In the particular case $\qq=1$, the family of tangent curves may be obtained via a simple geometric construction as in \cite{DFGUI}.  Using \eqref{eq:Ft} with $t=\qq^{2\tau}$ in the limit $\qq\to 1$, 
the tangent curves form a family of straight lines with equation (with the same slight abuse of notations as before)
\begin{equation*}
\mathcal{F}_\tau(x,y):= (1+\gamma)\, \xx(\tau)\, y+\big(1-\xx(\tau)\big)\big(1+\gamma\, \xx(\tau)\big)(x-\tau)=0\ ,
\end{equation*}
where $\xx(\tau)$ is as in Remark~\ref{rmkfamily}. In particular, the line $\mathcal{F}_\tau(x,y)=0$ is the line passing though the point $P:=(\tau,0)$ and orthogonal to the vector
$\overrightarrow{OQ}$ joining the origin $O$ to the point $Q:=\left(\frac{(1-\xx(\tau))(1+\gamma\, \xx(\tau))}{1+\gamma}, \xx(\tau)\right)$. To obtain the family of tangent curves,
in the $(O,x,y)$ plane, we may first represent the plot of the function $\xx(\tau)$ in this plane (see Fig.~\ref{fig:wulff}), pick a point $M:=(\tau,\xx(\tau))$ on this plot, denote by $P$ its vertical projection 
on the horizontal axis and by $Q$ its horizontal projection on the parabola $x=\frac{(1-y)(1+\gamma\, y)}{1+\gamma}$. The tangent line $T(M)$ with equation $\mathcal{F}_\tau(x,y)=0$ is the perpendicular 
to the line $(OQ)$ passing through $P$. Letting $M$ vary along the whole plot of $\xx(\tau)$ builds the entire family of tangent curves. 

Interestingly enough, the reverse construction allows, from the knowledge of the arctic curve at any fixed $\gamma$, to easily recover the function $\xx(\tau)$, hence all the moments of
the distribution of starting points. 

\subsection{Possible generalizations}
For $\gamma=0$, our model is identified as a $5$-vertex model on the square lattice \cite{DFG2}. More generally,
our weighted Schr\"oder path model for arbitrary $\gamma$ can be reformulated as a $10$-vertex model on the triangular lattice as follows: 
recall that the oriented lattice $\mathcal{N}$ in which our paths are embedded
is topologically equivalent to a regular triangular lattice. After rotation by $45^\circ$, paths give rise to the following 10 possible vertex environments:
\begin{equation*}
\raisebox{-.6cm}{\hbox{\epsfxsize=14.cm \epsfbox{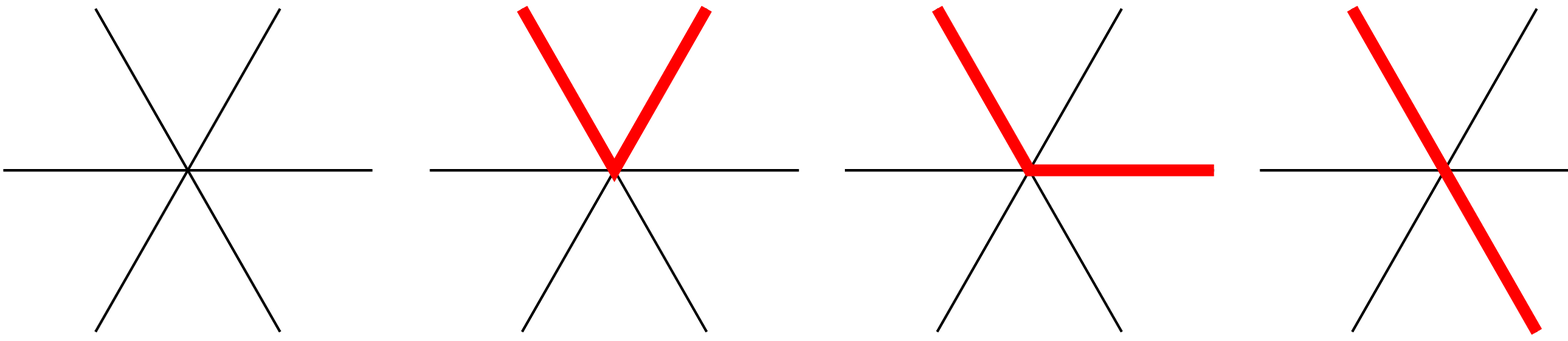}}}
\end{equation*}

This model is a particular case of a more general $20$-vertex model studied in Ref.~\cite{Kel}. In terms of paths, the extra $10$ vertices correspond to allowing for ``kissing points",
with the following new environments:
\begin{equation*}\raisebox{-.6cm}{\hbox{\epsfxsize=14.cm \epsfbox{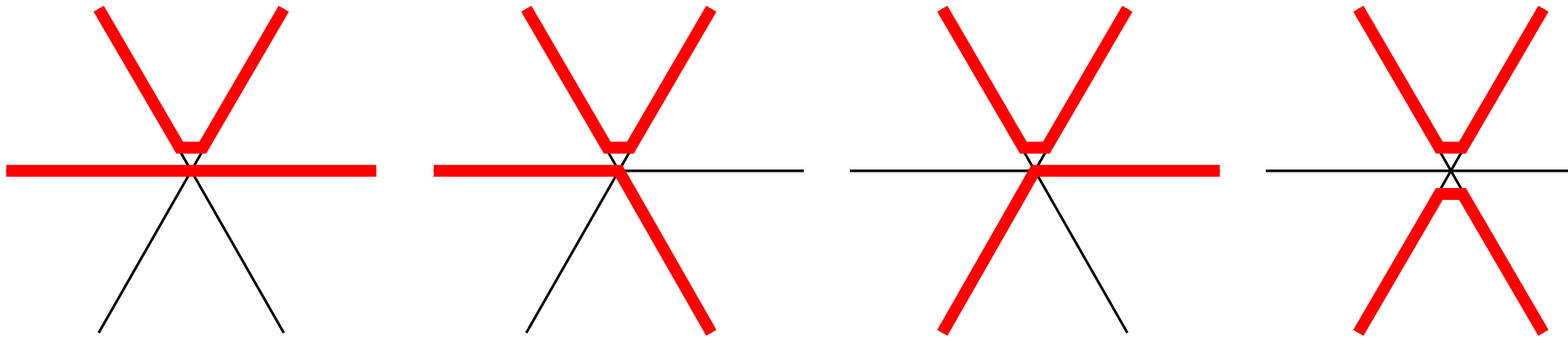}}}
\end{equation*}

As shown in Ref.~\cite{Kel}, the model is solvable by Bethe Ansatz techniques along a particular integrable variety of Boltzmann weights.

The kissing path formulation of the $20$-vertex model is the natural generalization of the osculating path formulation of the $6$-vertex model used in \cite{COSPO,CPS} to obtain in particular the arctic curve for 
the limit shape of large Alternating Sign Matrices via the Tangent Method. In this respect, it is tempting to postulate the existence of a limiting shape for 
the $20$-vertex model with appropriate boundary conditions. We also expect the Tangent Method to be applicable to this model to derive its arctic curve, whose shape should depend
on the new interaction weights between paths at the kissing points.   

In a different direction, the path formulation of the refined topological vertex of Ref.~\cite{RefTop}
involves elementary area weights alternating between two values $q$ and $t$ within strips depending 
on the boundary conditions, giving rise to interesting arctic curves (see \cite{EyKo} for a treatment using matrix models). It would be interesting to apply the Tangent Method 
to this situation.

\bibliographystyle{amsalpha} %Can also try "amsplain", "plain", and "alpha" styles

\bibliography{qAztec}

\providecommand{\bysame}{\leavevmode\hbox to3em{\hrulefill}\thinspace}
\providecommand{\MR}{\relax\ifhmode\unskip\space\fi MR }
% \MRhref is called by the amsart/book/proc definition of \MR.
\providecommand{\MRhref}[2]{%
  \href{http://www.ams.org/mathscinet-getitem?mr=#1}{#2}
}
\providecommand{\href}[2]{#2}
\begin{thebibliography}{DFG18b}

\bibitem[BK18]{BuKni}
Alexey Bufetov and Alisa Knizel, \emph{Asymptotics of random domino tilings of
  rectangular aztec diamonds}, Ann. Inst. H. Poincar{\'e} Probab. Statist.
  \textbf{54} (2018), no.~3, 1250--1290.

\bibitem[CLP98]{CLP}
Henry Cohn, Michael Larsen, and James Propp, \emph{The shape of a typical boxed
  plane partition}, New York J. Math. \textbf{4} (1998), 137--165. \MR{1641839}

\bibitem[CPS19]{CPS}
Filippo Colomo, Andrei~G. Pronko, and Andrea Sportiello, \emph{Arctic curve of
  the free-fermion six-vertex model in an {L}-shaped domain}, Journal of
  Statistical Physics \textbf{174} (2019), no.~1, 1--27.

\bibitem[CS16]{COSPO}
Filippo Colomo and Andrea Sportiello, \emph{Arctic curves of the six-vertex
  model on generic domains: the tangent method}, J. Stat. Phys. \textbf{164}
  (2016), no.~6, 1488--1523, arXiv:1605.01388 [math-ph].

\bibitem[DFG18a]{DFGUI}
Philippe Di~Francesco and Emmanuel Guitter, \emph{Arctic curves for paths with
  arbitrary starting points: a tangent method approach}, J. Phys. A: Math.
  Theor. \textbf{51} (2018), no.~35, 355201, arXiv:1803.11463 [math-ph].

\bibitem[DFG18b]{DFG2}
\bysame, \emph{A tangent method derivation of the arctic curve for q-weighted
  paths with arbitrary starting points}, Preprint (2018), arXiv:1810.07936
  [math-ph].

\bibitem[DFL18]{DFLAP}
Philippe Di~Francesco and Matthew~F. Lapa, \emph{Arctic curves in path models
  from the tangent method}, J. Phys. A: Math. Theor. \textbf{51} (2018),
  155202, arXiv:1711.03182 [math-ph].

\bibitem[DR18]{DR}
Bryan Debin and Philippe Ruelle, \emph{Tangent method for the arctic curve
  arising from freezing boundaries}, 2018, arXiv:1810.04909 [math-ph].

\bibitem[EK11]{EyKo}
Bertrand Eynard and Can Koz{\c c}az, \emph{Mirror of the refined topological
  vertex from a matrix model}, arXiv:1107.5181 [hep-th] (2011).

\bibitem[GV85]{GV}
Ira Gessel and G{\'e}rard Viennot, \emph{Binomial determinants, paths, and hook
  length formulae}, Adv. Math. \textbf{58} (1985), no.~3, 300--321.

\bibitem[IKV09]{RefTop}
Amer Iqbal, Can Koz\c{c}az, and Cumrun Vafa, \emph{The refined topological
  vertex}, J. High Energy Phys. (2009), no.~10, 069, 58. \MR{2607441}

\bibitem[JPS98]{JPS}
William Jockusch, James Propp, and Peter Shor, \emph{Random domino tilings and
  the arctic circle theorem}, arXiv:math/9801068 [math.CO] (1998).

\bibitem[{Kel}74]{Kel}
Stewart~B. {Kelland}, \emph{{Twenty-vertex model on a triangular lattice}},
  Australian Journal of Physics \textbf{27} (1974), 813.

\bibitem[KO06]{KO1}
Richard Kenyon and Andrei Okounkov, \emph{Planar dimers and harnack curves},
  Duke Math. J \textbf{131} (2006), no.~3, 499--524, arXiv:math/0311062
  [math.AG].

\bibitem[KO07]{KO2}
\bysame, \emph{Limit shapes and the complex burgers equation}, Acta Math.
  \textbf{199} (2007), no.~2, 263--302, arXiv:math-ph/0507007.

\bibitem[KOS06]{KOS}
Richard Kenyon, Andrei Okounkov, and Scott Sheffield, \emph{Dimers and
  amoebae}, Ann. Math. \textbf{163} (2006), 1019--1056, arXiv:math/0311062
  [math.AG].

\bibitem[Lin73]{LGV1}
Bernt Lindstr{\"o}m, \emph{On the vector representations of induced matroids},
  Bull. London Math. Soc. \textbf{5} (1973), no.~1, 85--90.

\end{thebibliography}

\end{document}